\newtheorem{thm}{Theorem} 
\newtheorem{cor}{Corollary}
\newtheorem{lem}{Lemma}
\newtheorem{prop}{Proposition}
\newcommand{\mx}[1]{\mathbf{#1}}
\newcommand{\bs}[1]{\boldsymbol{#1}}
\providecommand{\keywords}[1]{\textbf{\textit{Index terms---}} #1}
\definecolor{amber}{rgb}{1.0, 0.49, 0.0}
\definecolor{ao}{rgb}{0.0, 0.5, 0.0}
\def\R2#1{\textcolor{black}{#1}}
\def\R3#1{\textcolor{black}{#1}}
\begin{document}

\title{Sampling}
\singlespacing

\title{Optimizing Pilot Spacing in MU-MIMO Systems Operating Over Aging Channels}

\singlespacing
\author{
Sebastian Fodor$^\flat$, G\'{a}bor Fodor$^{\star\dag}$, Do\u{g}a G\"{u}rg\"{u}no\u{g}lu$^{\dag}$,
Mikl\'{o}s Telek$^{\ddag\sharp}$ \\
\small $^\flat$Stockholm University, Stockholm, Sweden. E-mail: \texttt{sebbifodor@fastmail.com}\\
\small $^\star$Ericsson Research, Stockholm, Sweden. E-mail: \texttt{Gabor.Fodor@ericsson.com}\\
\small $^\dag$KTH Royal Institute of Technology, Stockholm, Sweden. E-mail: \texttt{gaborf|dogag@kth.se}\\
\small $^\ddag$Budapest University of Technology and Economics, Budapest, Hungary. E-mail: \texttt{telek@hit.bme.hu}\\
\small $^\sharp$MTA-BME Information Systems Research Group, Budapest, Hungary. E-mail: \texttt{telek@hit.bme.hu}
}
\maketitle
\pagestyle{plain}


\begin{acronym}
  \acro{2G}{Second Generation}
  \acro{3G}{3$^\text{rd}$~Generation}
  \acro{3GPP}{3$^\text{rd}$~Generation Partnership Project}
  \acro{4G}{4$^\text{th}$~Generation}
  \acro{5G}{5$^\text{th}$~Generation}
  \acro{AA}{Antenna Array}
  \acro{AC}{Admission Control}
  \acro{AD}{Attack-Decay}
  \acro{ADSL}{Asymmetric Digital Subscriber Line}
	\acro{AHW}{Alternate Hop-and-Wait}
  \acro{AMC}{Adaptive Modulation and Coding}
	\acro{AP}{Access Point}
  \acro{APA}{Adaptive Power Allocation}
  \acro{AR}{autoregressive}
  \acro{ARMA}{Autoregressive Moving Average}
  \acro{ATES}{Adaptive Throughput-based Efficiency-Satisfaction Trade-Off}
  \acro{AWGN}{additive white Gaussian noise}
  \acro{BB}{Branch and Bound}
  \acro{BD}{Block Diagonalization}
  \acro{BER}{bit error rate}
  \acro{BF}{Best Fit}
  \acro{BLER}{BLock Error Rate}
  \acro{BPC}{Binary power control}
  \acro{BPSK}{binary phase-shift keying}
  \acro{BPA}{Best \ac{PDPR} Algorithm}
  \acro{BRA}{Balanced Random Allocation}
  \acro{BS}{base station}
  \acro{CAP}{Combinatorial Allocation Problem}
  \acro{CAPEX}{Capital Expenditure}
  \acro{CBF}{Coordinated Beamforming}
  \acro{CBR}{Constant Bit Rate}
  \acro{CBS}{Class Based Scheduling}
  \acro{CC}{Congestion Control}
  \acro{CDF}{Cumulative Distribution Function}
  \acro{CDMA}{Code-Division Multiple Access}
  \acro{CL}{Closed Loop}
  \acro{CLPC}{Closed Loop Power Control}
  \acro{CNR}{Channel-to-Noise Ratio}
  \acro{CPA}{Cellular Protection Algorithm}
  \acro{CPICH}{Common Pilot Channel}
  \acro{CoMP}{Coordinated Multi-Point}
  \acro{CQI}{Channel Quality Indicator}
  \acro{CRM}{Constrained Rate Maximization}
	\acro{CRN}{Cognitive Radio Network}
  \acro{CS}{Coordinated Scheduling}
  \acro{CSI}{channel state information}
  \acro{CSIR}{channel state information at the receiver}
  \acro{CSIT}{channel state information at the transmitter}
  \acro{CUE}{cellular user equipment}
  \acro{D2D}{device-to-device}
  \acro{DCA}{Dynamic Channel Allocation}
  \acro{DE}{Differential Evolution}
  \acro{DFT}{Discrete Fourier Transform}
  \acro{DIST}{Distance}
  \acro{DL}{downlink}
  \acro{DMA}{Double Moving Average}
	\acro{DMRS}{Demodulation Reference Signal}
  \acro{D2DM}{D2D Mode}
  \acro{DMS}{D2D Mode Selection}
  \acro{DPC}{Dirty Paper Coding}
  \acro{DRA}{Dynamic Resource Assignment}
  \acro{DSA}{Dynamic Spectrum Access}
  \acro{DSM}{Delay-based Satisfaction Maximization}
  \acro{ECC}{Electronic Communications Committee}
  \acro{EFLC}{Error Feedback Based Load Control}
  \acro{EI}{Efficiency Indicator}
  \acro{eNB}{Evolved Node B}
  \acro{EPA}{Equal Power Allocation}
  \acro{EPC}{Evolved Packet Core}
  \acro{EPS}{Evolved Packet System}
  \acro{E-UTRAN}{Evolved Universal Terrestrial Radio Access Network}
  \acro{ES}{Exhaustive Search}
  \acro{FDD}{frequency division duplexing}
  \acro{FDM}{Frequency Division Multiplexing}
  \acro{FER}{Frame Erasure Rate}
  \acro{FF}{Fast Fading}
  \acro{FSB}{Fixed Switched Beamforming}
  \acro{FST}{Fixed SNR Target}
  \acro{FTP}{File Transfer Protocol}
  \acro{GA}{Genetic Algorithm}
  \acro{GBR}{Guaranteed Bit Rate}
  \acro{GLR}{Gain to Leakage Ratio}
  \acro{GOS}{Generated Orthogonal Sequence}
  \acro{GPL}{GNU General Public License}
  \acro{GRP}{Grouping}
  \acro{HARQ}{Hybrid Automatic Repeat Request}
  \acro{HMS}{Harmonic Mode Selection}
  \acro{HOL}{Head Of Line}
  \acro{HSDPA}{High-Speed Downlink Packet Access}
  \acro{HSPA}{High Speed Packet Access}
  \acro{HTTP}{HyperText Transfer Protocol}
  \acro{ICMP}{Internet Control Message Protocol}
  \acro{ICI}{Intercell Interference}
  \acro{ID}{Identification}
  \acro{IETF}{Internet Engineering Task Force}
  \acro{ILP}{Integer Linear Program}
  \acro{JRAPAP}{Joint RB Assignment and Power Allocation Problem}
  \acro{UID}{Unique Identification}
  \acro{IID}{Independent and Identically Distributed}
  \acro{IIR}{Infinite Impulse Response}
  \acro{ILP}{Integer Linear Problem}
  \acro{IMT}{International Mobile Telecommunications}
  \acro{INV}{Inverted Norm-based Grouping}
	\acro{IoT}{Internet of Things}
  \acro{IP}{Internet Protocol}
  \acro{IPv6}{Internet Protocol Version 6}
  \acro{ISD}{Inter-Site Distance}
  \acro{ISI}{Inter Symbol Interference}
  \acro{ITU}{International Telecommunication Union}
  \acro{JOAS}{Joint Opportunistic Assignment and Scheduling}
  \acro{JOS}{Joint Opportunistic Scheduling}
  \acro{JP}{Joint Processing}
	\acro{JS}{Jump-Stay}
    \acro{KF}{Kalman filter}
  \acro{KKT}{Karush-Kuhn-Tucker}
  \acro{L3}{Layer-3}
  \acro{LAC}{Link Admission Control}
  \acro{LA}{Link Adaptation}
  \acro{LC}{Load Control}
  \acro{LOS}{Line of Sight}
  \acro{LP}{Linear Programming}
  \acro{LS}{least squares}
  \acro{LTE}{Long Term Evolution}
  \acro{LTE-A}{LTE-Advanced}
  \acro{LTE-Advanced}{Long Term Evolution Advanced}
  \acro{M2M}{Machine-to-Machine}
  \acro{MAC}{Medium Access Control}
  \acro{MANET}{Mobile Ad hoc Network}
  \acro{MC}{Modular Clock}
  \acro{MCS}{Modulation and Coding Scheme}
  \acro{MDB}{Measured Delay Based}
  \acro{MDI}{Minimum D2D Interference}
  \acro{MF}{Matched Filter}
  \acro{MG}{Maximum Gain}
  \acro{MH}{Multi-Hop}
  \acro{MIMO}{multiple input multiple output}
  \acro{MINLP}{Mixed Integer Nonlinear Programming}
  \acro{MIP}{Mixed Integer Programming}
  \acro{MISO}{Multiple Input Single Output}
  \acro{ML}{maximum likelihood}
  \acro{MLWDF}{Modified Largest Weighted Delay First}
  \acro{MME}{Mobility Management Entity}
  \acro{MMSE}{minimum mean squared error}
  \acro{MOS}{Mean Opinion Score}
  \acro{MPF}{Multicarrier Proportional Fair}
  \acro{MRA}{Maximum Rate Allocation}
  \acro{MR}{Maximum Rate}
  \acro{MRC}{maximum ratio combining}
  \acro{MRT}{Maximum Ratio Transmission}
  \acro{MRUS}{Maximum Rate with User Satisfaction}
  \acro{MS}{mobile station}
  \acro{MSE}{mean squared error}
  \acro{MSI}{Multi-Stream Interference}
  \acro{MTC}{Machine-Type Communication}
  \acro{MTSI}{Multimedia Telephony Services over IMS}
  \acro{MTSM}{Modified Throughput-based Satisfaction Maximization}
  \acro{MU-MIMO}{multiuser multiple input multiple output}
  \acro{MU}{multi-user}
  \acro{NAS}{Non-Access Stratum}
  \acro{NB}{Node B}
  \acro{NE}{Nash equilibrium}
  \acro{NCL}{Neighbor Cell List}
  \acro{NLP}{Nonlinear Programming}
  \acro{NLOS}{Non-Line of Sight}
  \acro{NMSE}{Normalized Mean Square Error}
  \acro{NORM}{Normalized Projection-based Grouping}
  \acro{NP}{Non-Polynomial Time}
  \acro{NRT}{Non-Real Time}
  \acro{NSPS}{National Security and Public Safety Services}
  \acro{O2I}{Outdoor to Indoor}
  \acro{OFDMA}{orthogonal frequency division multiple access}
  \acro{OFDM}{orthogonal frequency division multiplexing}
  \acro{OFPC}{Open Loop with Fractional Path Loss Compensation}
	\acro{O2I}{Outdoor-to-Indoor}
  \acro{OL}{Open Loop}
  \acro{OLPC}{Open-Loop Power Control}
  \acro{OL-PC}{Open-Loop Power Control}
  \acro{OPEX}{Operational Expenditure}
  \acro{ORB}{Orthogonal Random Beamforming}
  \acro{JO-PF}{Joint Opportunistic Proportional Fair}
  \acro{OSI}{Open Systems Interconnection}
  \acro{PAIR}{D2D Pair Gain-based Grouping}
  \acro{PAPR}{Peak-to-Average Power Ratio}
  \acro{P2P}{Peer-to-Peer}
  \acro{PC}{Power Control}
  \acro{PCI}{Physical Cell ID}
  \acro{PDF}{Probability Density Function}
  \acro{PDPR}{pilot-to-data power ratio}
  \acro{PER}{Packet Error Rate}
  \acro{PF}{Proportional Fair}
  \acro{P-GW}{Packet Data Network Gateway}
  \acro{PL}{Pathloss}
  \acro{PPR}{pilot power ratio}
  \acro{PRB}{physical resource block}
  \acro{PROJ}{Projection-based Grouping}
  \acro{ProSe}{Proximity Services}
  \acro{PS}{Packet Scheduling}
  \acro{PSAM}{pilot symbol assisted modulation}
  \acro{PSO}{Particle Swarm Optimization}
  \acro{PZF}{Projected Zero-Forcing}
  \acro{QAM}{Quadrature Amplitude Modulation}
  \acro{QoS}{Quality of Service}
  \acro{QPSK}{Quadri-Phase Shift Keying}
  \acro{RAISES}{Reallocation-based Assignment for Improved Spectral Efficiency and Satisfaction}
  \acro{RAN}{Radio Access Network}
  \acro{RA}{Resource Allocation}
  \acro{RAT}{Radio Access Technology}
  \acro{RATE}{Rate-based}
  \acro{RB}{resource block}
  \acro{RBG}{Resource Block Group}
  \acro{REF}{Reference Grouping}
  \acro{RLC}{Radio Link Control}
  \acro{RM}{Rate Maximization}
  \acro{RNC}{Radio Network Controller}
  \acro{RND}{Random Grouping}
  \acro{RRA}{Radio Resource Allocation}
  \acro{RRM}{Radio Resource Management}
  \acro{RSCP}{Received Signal Code Power}
  \acro{RSRP}{Reference Signal Receive Power}
  \acro{RSRQ}{Reference Signal Receive Quality}
  \acro{RR}{Round Robin}
  \acro{RRC}{Radio Resource Control}
  \acro{RSSI}{Received Signal Strength Indicator}
  \acro{RT}{Real Time}
  \acro{RU}{Resource Unit}
  \acro{RUNE}{RUdimentary Network Emulator}
  \acro{RV}{Random Variable}
  \acro{SAC}{Session Admission Control}
  \acro{SCM}{Spatial Channel Model}
  \acro{SC-FDMA}{Single Carrier - Frequency Division Multiple Access}
  \acro{SD}{Soft Dropping}
  \acro{S-D}{Source-Destination}
  \acro{SDPC}{Soft Dropping Power Control}
  \acro{SDMA}{Space-Division Multiple Access}
  \acro{SE}{spectral efficiency}
  \acro{SER}{Symbol Error Rate}
  \acro{SES}{Simple Exponential Smoothing}
  \acro{S-GW}{Serving Gateway}
  \acro{SINR}{signal-to-interference-plus-noise ratio}
  \acro{SI}{Satisfaction Indicator}
  \acro{SIP}{Session Initiation Protocol}
  \acro{SISO}{single input single output}
  \acro{SIMO}{Single Input Multiple Output}
  \acro{SIR}{signal-to-interference ratio}
  \acro{SLNR}{Signal-to-Leakage-plus-Noise Ratio}
  \acro{SMA}{Simple Moving Average}
  \acro{SNR}{signal-to-noise ratio}
  \acro{SORA}{Satisfaction Oriented Resource Allocation}
  \acro{SORA-NRT}{Satisfaction-Oriented Resource Allocation for Non-Real Time Services}
  \acro{SORA-RT}{Satisfaction-Oriented Resource Allocation for Real Time Services}
  \acro{SPF}{Single-Carrier Proportional Fair}
  \acro{SRA}{Sequential Removal Algorithm}
  \acro{SRS}{Sounding Reference Signal}
  \acro{SU-MIMO}{single-user multiple input multiple output}
  \acro{SU}{Single-User}
  \acro{SVD}{Singular Value Decomposition}
  \acro{TCP}{Transmission Control Protocol}
  \acro{TDD}{time division duplexing}
  \acro{TDMA}{Time Division Multiple Access}
  \acro{TETRA}{Terrestrial Trunked Radio}
  \acro{TP}{Transmit Power}
  \acro{TPC}{Transmit Power Control}
  \acro{TTI}{Transmission Time Interval}
  \acro{TTR}{Time-To-Rendezvous}
  \acro{TSM}{Throughput-based Satisfaction Maximization}
  \acro{TU}{Typical Urban}
  \acro{UE}{User Equipment}
  \acro{UEPS}{Urgency and Efficiency-based Packet Scheduling}
  \acro{UL}{uplink}
  \acro{UMTS}{Universal Mobile Telecommunications System}
  \acro{URI}{Uniform Resource Identifier}
  \acro{URM}{Unconstrained Rate Maximization}
  \acro{UT}{user terminal}
  \acro{VR}{Virtual Resource}
  \acro{VoIP}{Voice over IP}
  \acro{WAN}{Wireless Access Network}
  \acro{WCDMA}{Wideband Code Division Multiple Access}
  \acro{WF}{Water-filling}
  \acro{WiMAX}{Worldwide Interoperability for Microwave Access}
  \acro{WINNER}{Wireless World Initiative New Radio}
  \acro{WLAN}{Wireless Local Area Network}
  \acro{WMPF}{Weighted Multicarrier Proportional Fair}
  \acro{WPF}{Weighted Proportional Fair}
  \acro{WSN}{Wireless Sensor Network}
  \acro{WWW}{World Wide Web}
  \acro{XIXO}{(Single or Multiple) Input (Single or Multiple) Output}
  \acro{ZF}{zero-forcing}
  \acro{ZMCSCG}{Zero Mean Circularly Symmetric Complex Gaussian}
\end{acronym}

\begin{abstract}
In the uplink of multiuser multiple input multiple output (MU-MIMO) systems operating over aging channels, pilot spacing is crucial for acquiring channel state information and achieving high signal-to-interference-plus-noise ratio (SINR). Somewhat surprisingly, very few works examine the impact of pilot spacing on the correlation structure of subsequent channel estimates and the resulting quality of channel state information considering channel aging. In this paper, we consider a fast-fading environment characterized by its exponentially decaying autocorrelation function, and model pilot spacing as a sampling problem to capture the inherent trade-off between the quality of channel state information and the number of symbols available for information carrying data symbols. We first establish a quasi-closed form for the achievable asymptotic deterministic equivalent SINR when the channel estimation algorithm utilizes multiple pilot signals. Next, we establish upper bounds on the achievable SINR and spectral efficiency, as a function of pilot spacing, which helps to find the optimum pilot spacing within a limited search space. Our key insight is that to maximize the achievable SINR and the spectral efficiency of MU-MIMO systems, proper pilot spacing must be applied to control the impact of the aging channel and to tune the trade-off between pilot and data symbols.
\end{abstract}
\keywords{autoregressive processes, channel estimation, estimation theory, multiple input multiple output, receiver design}

\section{Introduction}
In wireless communications, pilot symbol-assisted channel estimation and prediction are used to achieve
reliable coherent reception, and thereby to provide a variety of high quality services in a spectrum efficient
manner. In most practical systems, the transmitter and receiver nodes acquire and predict channel state information
by employing predefined pilot sequences during the training phase, after which information symbols can be
appropriately modulated and precoded at the transmitter and estimated at the receiver.
Since the elapsed time between pilot transmissions and the transmit power level of pilot symbols have a
large impact on the quality of channel estimation, a large number of papers investigated the optimal
spacing and power control of pilot signals in both single and multiple antenna systems
\cite{Yan:01, Zhang:07B, Abeida:10, Hijazi:10, GH:12,Truong:13, Kong:2015, Chiu:15, Kashyap:17, Kim:20, Yuan:20, Fodor:21}.

Specifically in the uplink of \ac{MU-MIMO} systems, several papers proposed pilot-based channel estimation and
receiver algorithms assuming that the complex vector channel undergoes block fading, meaning that the channel is
constant
between two subsequent channel estimation instances
\cite{Couillet:2012, Wen:2013, Hoydis:13, Mallik:18}.
In the block fading
model, the evolution of the channel is memoryless in the sense that each channel realization is drawn independently
of previous channel instances from some characteristic distribution. While the block fading model is useful for
obtaining analytical expressions for the achievable \ac{SINR} and capacity \cite{Hoydis:13, Hanlen:2012}, it fails
to capture the correlation between subsequent channel realizations and the aging of the channel between estimation
instances \cite{Truong:13, Kong:2015, Yuan:20, Fodor:21}.

Due to the importance of capturing the evolution of the wireless channel in time, several papers developed
time-varying channel models, as an alternative to block fading models, whose states are advantageously estimated
and predicted by means of suitably spaced pilot signals. In particular,
a large number of related works assume that the wireless channel can be represented as an \ac{AR} process whose
states are estimated and predicted using Kalman filters, which exploit the correlation between subsequent
channel realizations \cite{Abeida:10, Hijazi:10, Truong:13, Kim:20, Fodor:21}.
These papers assume that the coefficients of the related \ac{AR} process are known, and the
current and future states of the process (and thereby of the wireless channel) can be well estimated.
Other important related works concentrate on estimating the coefficients of \ac{AR} processes based on
suitable pilot-based observations and measurements \cite{Mahmoudi:11, Xia:15, Esfandiari:20}.
In our recent work \cite{Fodor:21}, it was shown that when an \ac{AR} process is a good model of the
wireless channel and the \ac{AR} coefficients are well estimated, not only the channel estimation can
exploit the memoryful property of the channel, but also a new \ac{MU-MIMO} receiver can be
designed, which minimizes the \ac{MSE} of the received data symbols by exploiting the correlation
between subsequent channel states.
It is important to realize that the above references build on discrete time \ac{AR} models, in which the state transition matrix is an input of the model and can be estimated by some suitable system identification technique, such as the one
proposed in \cite{Esfandiari:20}.
However, these papers do not ask the question of how often
the channel state of a continuous time channel should be observed by suitably spaced pilot signals to realize a certain state transition matrix in the \ac{AR} model of the channel.

Specifically, a key characteristic of a continuous time Rayleigh fading environment is that the autocorrelation
function of the associated stochastic process is a zeroth-order Bessel function, which must be properly modelled \cite{Zheng:03, Wang:07}.
This requirement is problematic when developing discrete-time \ac{AR} models, 
since it is well-known
that Rayleigh fading cannot be perfectly modelled with any finite order \ac{AR} process
(since the autocorrelation function of discrete time \ac{AR} processes does not follow a Bessel function),
although the statistics of \ac{AR} process can approximate those of Rayleigh fading \cite{McGuire:05,Zheng:05}.

Recognizing the importance of modeling fast fading, including Rayleigh fading, channels with proper autocorrelation function as a basis for
pilot spacing optimization, papers \cite{Savazzi:09, Savazzi:09B} use a continuous time process as a representation of
the wireless channel, and address the problem of pilot spacing as a sampling problem. 
According to this approach, pilot placement can be considered as a sampling problem of the fading variations,
and the quality of the channel estimate is determined by the density and accuracy of channel sampling \cite{Savazzi:09B}.
However, these papers consider \ac{SISO} systems, do not deal with the problem of pilot and data power control, and
are not applicable to \ac{MU-MIMO} systems employing a \ac{MMSE} receiver, which was proposed in, for example, \cite{Fodor:21}.
On the other hand, paper \cite{Truong:13} analyzes the impact of channel aging on the performance of MIMO systems, without investigating the interplay between pilot spacing and the resulting state transition matrix of the \ac{AR} model of the fast fading channel.
The most important related works, their assumptions and key performance metrics are listed and compared with those
of the current paper in Table \ref{tab:tab1}.

\begin{table*}[h!]
	\centering
	\caption{Overview of Related Literature}
	\vspace{2mm}
	\label{tab:tab1}
	\footnotesize
	\begin{tabular}{
			|p{0.08\textwidth}|
			>{\centering}p{0.12\textwidth}|
			>{\centering}p{0.13\textwidth}|
			>{\centering}p{0.1\textwidth}|
			>{\centering}p{0.12\textwidth}|
			>{\centering}p{0.12\textwidth}|
			p{0.14\textwidth}|}
		\hline
		\textbf{Reference} & \textbf{Block fading vs. Aging channel} & \textbf{Is \ac{AR} modeling used?}
		& \textbf{Channel est.} & \textbf{SISO or MIMO receiver} & \textbf{Key performance indicators} & \textbf{~~Comment}  \\
		\hline

		Truong et al., \cite{Truong:13} & channel aging between pilots & discrete time AR approximating a Bessel func. & MMSE based on known AR params 
		& max. ratio combiner (MRC) receiver (not AR-aware) & average SINR, achievable rate & both UL and DL are considered \\
		\hline
        Zhang et al., \cite{Zhang:07B} & channel aging between pilots & discrete time AR approximating a Bessel f. & adaptive est. of AR params & SISO joint channel and data est.
        & BER & 
        AR(2) parameter estimation and demodulation 
        \\
        \hline
		Savazzi and Spagnolini \cite{Savazzi:09} & channel aging between pilots & 
		AR 
		channel evolution over 
		estimation instances  & interpolation based on multiple observations & SISO
		& average SINR and BER & power control is out of scope \\
		\hline
		Mallik et al., \cite{Mallik:18} & block fading channel & not applicable & MMSE channel estimation & SIMO with MRC
		& average SINR, symbol error probability & pilot/data power control is out of scope \\
		\hline
		Akin and Gursoy \cite{Akin:07} & channel aging between pilots & discrete time first order AR (Gauss-Markov) process & MMSE channel  estimation
		& SISO & achievable rate and bit energy $E_b/N_0$ & optimal power distribution and training period for SISO are derived \\
		\hline
		Chiu and Wu \cite{Chiu:15} & channel aging between plots & discrete time AR model approximating a Rayleigh fading & Kalman filter assisted channel estimation 
		& receiver structure is out-of-scope & MSE of channel est., 
		data rate, capacity 
		& pilot/data power control is out of scope \\
		\hline
		Fodor et al., \cite{Fodor:21} & no aging between pilots; correlated
		pilot intervals
		& discrete time AR(1) model & Kalman assisted ch. est. & AR(1)-aware MIMO MMSE receiver
		& MSE of the received data symbols & optimum pilot power control for AR(1) channels \\
		\hline
		Present paper & channel aging between pilots & AR(1) to model channel aging between pilots & MMSE interpolation by multiple 
		observations
		& MU-MIMO with MMSE receiver & average (det. equivalent) SINR & both pilot spacing and pilot/data power control are considered \\
		\hline
	\end{tabular}
\end{table*}

In this paper, we are interested in determining the average \ac{SINR} in the uplink of \ac{MU-MIMO} systems
operating in fast fading as a function of pilot spacing, pilot/data power allocation,
number of antennas and spatially multiplexed users. Specifically, we ask the following two important questions,
which are not answered by previous works:

\begin{itemize}
	\item
	What is the average \ac{SINR} in a closed or quasi-closed form in the uplink of \ac{MU-MIMO} systems
	in fast fading in the presence of antenna correlation?
	How does the average \ac{SINR} depend on pilot spacing and pilot/data power control?
	\item
	What is the optimum pilot spacing and pilot/data power allocation as a function of
	the number of antennas and the Doppler frequency associated with the continuous time fast fading channel?
\end{itemize}

In the light of the above discussion and questions, the main contributions of the present paper
are as follows:

\begin{itemize}
	\item
	Proposition \ref{Prop:SINR} derives the asymptotic deterministic equivalent \ac{SINR} of any user
	in a \ac{MU-MIMO} system
	in every data slot for fast fading channels that can be characterized by an associated \ac{AR} process;
	\item
	Theorem \ref{thm:upperbound} and Proposition \ref{UpperB1} establish an upper bound on the achievable \ac{SINR} as a function of
	pilot spacing, which is instrumental for determining the optimum pilot spacing.
	\item
	Proposition \ref{UpperB2}, building on Proposition \ref{UpperB1}, provides and upper bound on the average achievable spectral efficiency, which is instrumental in limiting the search space for the optimal frame size as a function of the Doppler frequency.
\end{itemize}
In addition, we believe that the engineering insights drawn from the numerical studies are
useful when designing pilot spacing, for example in the form of determining the number of reference signals in an uplink frame structure, for \ac{MU-MIMO} systems.

Specifically, to answer the above questions, we proceed as follows. In the next section, we present our system model,
which admits correlated wireless channels between any of the single-antenna mobile terminal and the receive
antennas of the \ac{BS}.
Next, Section \ref{Sec:ChannelE} focuses on channel estimation, which is based on subsequent pilot-based
measurements and an \ac{MMSE}-interpolation for the channel states in between estimation instances.
Section \ref{Sec:SINR} proposes an algorithm to determine the average \ac{SINR}.
Section \ref{Sec:PilotSpacing}  studies the impact of pilot spacing and power control on the achievable \ac{SINR} and the \ac{SE} of all users in the system.
That section investigates the impact of pilot spacing 
on the achievable \ac{SINR}
and establishes an upper bound on this \ac{SINR}. We show that this upper bound is monotonically
decreasing as the function of pilot spacing. This property is very useful, because it enables to
limit the search space of the possible pilot spacings when looking for the optimum pilot spacing in Section \ref{Sec:Alg}.
That section also considers the special case when the channel coefficients associated with the different
receive antennas are uncorrelated and identically distributed. It turns out that in this special
case a simplified \ac{SINR} expression can be derived.
Section \ref{Sec:Num} presents numerical results and discusses engineering insights.
Finally, Section \ref{Sec:Conc} draws conclusions.

\section{System Model}

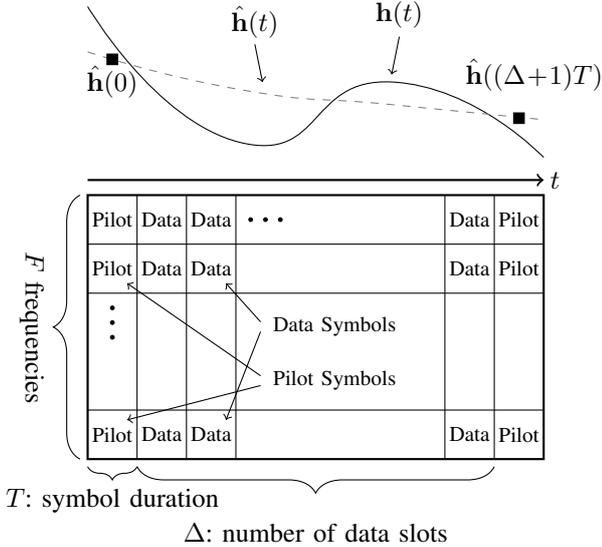
\begin{figure}
\centering
\begin{tikzpicture}[scale = 1]
  \def\x{0.65}
  \def\y{0.07}

  \draw[black, thick] (0,0) rectangle (6,3.5);

  \draw (\x,0) -- (\x,3.5);
  \draw (2*\x,0) -- (2*\x,3.5);
  \draw (3*\x,0) -- (3*\x,3.5);
  \draw (6-\x,0) -- (6-\x,3.5);
  \draw (6-2*\x,0) -- (6-2*\x,3.5);

  \draw (0, \x) -- (6, \x);
  \draw (0, 3.5-\x) -- (6, 3.5-\x);
  \draw (0, 3.5-2*\x) -- (6, 3.5-2*\x);

  \node at (\x/2, \x/2) {\footnotesize Pilot};
  \node at (\x/2, 3.5-\x/2) {\footnotesize Pilot};
  \node at (\x/2, 3.5-3*\x/2) {\footnotesize Pilot};

  \node at (3*\x/2, \x/2) {\footnotesize Data};
  \node at (3*\x/2, 3.5-\x/2) {\footnotesize Data};
  \node at (3*\x/2, 3.5-3*\x/2) {\footnotesize Data};

  \node at (5*\x/2, \x/2) {\footnotesize Data};
  \node at (5*\x/2, 3.5-\x/2) {\footnotesize Data};
  \node at (5*\x/2, 3.5-3*\x/2) {\footnotesize Data};

  \node at (6-3*\x/2, \x/2) {\footnotesize Data};
  \node at (6-3*\x/2, 3.5-\x/2) {\footnotesize Data};
  \node at (6-3*\x/2, 3.5-3*\x/2) {\footnotesize Data};

  \node at (6-\x/2, \x/2) {\footnotesize Pilot};
  \node at (6-\x/2, 3.5-\x/2) {\footnotesize Pilot};
  \node at (6-\x/2, 3.5-3*\x/2) {\footnotesize Pilot};

  \draw [->] (3.5*\x, 1.5*\x) -- (0.8*\x, 0.8*\x);
  \draw [->] (3.5*\x, 1.7*\x) -- (0.8*\x, 3.5-1.8*\x);
  \node at (5*\x, 1.6*\x) {\footnotesize Pilot Symbols};

  \draw [->] (3.5*\x, 2.6*\x) -- (2.8*\x, 0.8*\x);
  \draw [->] (3.5*\x, 2.8*\x) -- (2.8*\x, 3.5-1.8*\x);
  \node at (5*\x, 2.7*\x) {\footnotesize Data Symbols};

  \filldraw [black] (\x/2,3.5-2.3*\x) circle (0.7pt);
  \filldraw [black] (\x/2,3.5-2.6*\x) circle (0.7pt);
  \filldraw [black] (\x/2,3.5-2.9*\x) circle (0.7pt);

  \filldraw [black] (3.3*\x,3.5-\x/2) circle (0.7pt);
  \filldraw [black] (3.6*\x,3.5-\x/2) circle (0.7pt);
  \filldraw [black] (3.9*\x,3.5-\x/2) circle (0.7pt);

  \draw [decorate,decoration={brace,amplitude = 10pt}] (6-\x,-0.1) -- (\x,-0.1)
  node [black,midway,yshift=-25pt] {$\Delta$: number of data slots};
  \draw [decorate,decoration={brace,amplitude = 5pt}] (\x,-0.1) -- (0,-0.1)
  node [black,midway,yshift=-13pt] {$T$: symbol duration};

  \draw [decorate,decoration={brace,amplitude = 10pt}] (-0.1, 0) -- (-0.1,3.5)
  node [black,midway,xshift=-18pt] {\rotatebox{270}{$F$ frequencies}};

  \draw [->, thick] (0,3.7) -- (6,3.7);
  \node at (6.15, 3.7) {$t$};

  \draw plot [smooth, tension=1] coordinates { (0, 6) (2,4.2) (4,5) (6,4)};
  \draw [dashed, gray] plot [smooth, tension=1] coordinates { (0,5.4) (2,4.9) (4,4.7) (6,4.5)};

  \filldraw (\x/2-\y,5.3-\y) rectangle (\x/2+\y, 5.3+\y);
  \node at (\x/2, 5) {$\hat{\mx{h}}(0)$};
  \filldraw (6-\x/2-\y,4.52-\y) rectangle (6-\x/2+\y, 4.52+\y);
  \node at (6.2-\x/2, 5.1) {$\hat{\mx{h}}((\Delta\!+\!1)T)$};

  \draw [->] (4.1,5.6) -- (4,5.1);
  \node at (4.1, 5.9) {$\mx{h}(t)$};

  \draw [->] (2.2,5.5) -- (2.3,5);
  \node at (2.2, 5.8) {$\hat{\mx{h}}(t)$};
\end{tikzpicture}
\caption{Pilot (P) and data (D) symbols in the time-frequency domains of the system in the $(0,(\Delta+1)T)$ interval. The solid line above the
time-frequency resource grid represents the continuous time complex channel $\mx{h}(t)$, while the dashed line represents
the \ac{MMSE} channel estimate $\hat{\mx{h}}(t)$.
Notice that in each time slot of length $T$ all symbols are either pilot or
data symbols.}
\label{fig:Model}
\end{figure}

\begin{table}[t]
\caption{System Parameters}
\vspace{2mm}
\label{tab:notation}
\footnotesize
\begin{tabularx}{\columnwidth}{|X|X|}
\hline
\hline
\textbf{Notation} & \textbf{Meaning} \\
\hline
\hline
$K$ & Number of \ac{MU-MIMO} users \\
\hline
$N_r$ & Number of antennas at the BS \\
\hline
$F$ & Number of frequency channels used for pilot and data transmission within one slot \\
\hline
$\Delta$ & number of data slots in a data-pilot cycle \\
\hline
$\tau_p=F, \tau_d=\Delta F$ & Number of pilot/data symbols within a coherent set of subcarriers  \\
\hline
$\mx{s}\in \mathds{C}^{\tau_p \times 1}$ & Sequence of pilot symbols\\
\hline
$x$ & Data symbol \\
\hline
$P_p, P$ & Pilot power per symbol, data power per symbol\\
\hline
$\mx{Y}^p(t) \in \mathds{C}^{N_r \times \tau_p}, y(t) \in \mathds{C}^{N_r}$ & Received pilot and data signal at time $t$, respectively  \\
\hline
$\alpha$ & Large scale fading between the mobile station and the base station \\
\hline
$\mx{C} \in \mathds{C}^{N_r \times N_r}$ & Stationary covariance matrix of the fast fading channel \\
\hline
$\mx{h}(t), \hat{\mx{h}}(t) \in \mathds{C}^{N_r}$ & Fast fading channel and estimated channel \\
\hline
$\bs{\varepsilon}(t) \in \mathds{C}^{N_r}, \bs{\Sigma} \in \mathds{C}^{N_r \times N_r}$
& Channel estimation error and its covariance matrix\\
\hline
$\mx{G}^\star$ & Optimal MU-MIMO receiver. \\
\hline
$f_D$ & Maximum Doppler frequency \\
\hline
$T$ & Slot duration \\
\hline
\end{tabularx}
\end{table}

\subsection{Uplink Pilot Signal Model}
By extending the single antenna channel model of \cite{Savazzi:09},
each transmitting \ac{MS} uses a single time slot to send $F$ pilot symbols,
followed by $\Delta$ time slots, each of which containing $F$ data symbols according to Figure \ref{fig:Model}.
Each symbol is transmitted within a coherent time slot of duration $T$.
Thus, the total frame duration is $(1+\Delta) T$, such that each frame consists of 1 pilot
and $\Delta$ data time slots, which we will index with $i=1 \ldots \Delta$.
User-$k$ transmits each of the $F$ pilot symbols with transmit power $P_{p,k}$, and each data
symbol in slot-$i$ with transmit power $P_k(i),~k=1 \ldots K$.
To simplify notation, in the sequel we tag User-1, and will
drop index $k=1$ when referring to the tagged user.

Assuming that the coherence bandwidth accommodates at least $F$ pilot symbols,
this system allows to create $F$ orthogonal pilot sequences.
To facilitate spatial multiplexing
and \ac{CSIR} acquisition at the \ac{BS}, the \acp{MS} use orthogonal complex
sequences, such as shifted Zadoff-Chu sequences of length $\tau_p=F$, which we denote as:
\begin{align}
\mathbf{s} &\triangleq \left[s_1,...,s_{\tau_p}\right]^T \in \mathds{C}^{{\tau_p \times 1}},
\end{align}
whose elements satisfy 
$|s_i|^2 = 1$. 
Under this assumption, the system can spatially multiplex $K\leq F$ \acp{MS}.
Focusing on the received pilot signal from the tagged user at the \ac{BS},
the received pilot signal takes the form of \cite{Fodor:21}:
\begin{align}
\mathbf{Y}^p(t)
&=
\alpha \sqrt{P_{p}}\mathbf{h}(t) \mathbf{s}^T +\mathbf{N}(t) ~~ \in \mathds{C}^{N_r \times \tau_p},
\label{eqn:received_training_seq}
\end{align}
where 
$\mathbf{h}(t) \in \mathds{C}^{N_r \times 1} \sim \mathcal{CN}(\mathbf{0},\mathbf{C})$, that is,
$\mathbf{h}(t)$ is a 
complex normal distributed column vector
with mean vector $\mathbf{0}$ and covariance matrix $\mx{C} \in \mathds{C}^{N_r \times N_r}$. 
Furthermore, $\alpha$ denotes
large scale fading, $P_p$ denotes the pilot power of the tagged user,
and $\mathbf{N}(t)\in \mathds{C}^{N_r \times \tau_p}$
is the 
\ac{AWGN} with element-wise variance $\sigma_p^2$.
It will be convenient to introduce $\mathbf{\tilde Y}^p(t)$ by stacking the columns of  $\mathbf{Y}^p(t)$ as:
\begin{align}
\mathbf{\tilde Y}^p(t)=\textbf{vec}\big(\mathbf{Y}^p(t)\big)=\alpha\sqrt{P_p} \mathbf{S} \mathbf{h}(t) +\mathbf{\tilde N}(t) ~\in \mathds{C}^{\tau_p N_r \times 1},
\label{eqn:received_training_seq2}
\end{align}
where $\textbf{vec}$ is the column stacking vector operator,
$\mathbf{\tilde Y}^p(t)$, $\mathbf{\tilde N}(t) \in \mathds{C}^{\tau_p N_r \times 1}$
and
$\mathbf{S} \triangleq \mathbf{s}\otimes \mathbf{I}_{N_r} \in \mathds{C}^{\tau_p N_r \times N_r}$
is such that $\mathbf{S}^H\mathbf{S}=\tau_p\mathbf{I}_{N_r}$,
where $\mathbf{I}_{N_r}$ is the identity matrix of size $N_r$.
\vspace{-2mm}
\subsection{Channel Model}
\vspace{-1mm}
In \eqref{eqn:received_training_seq},
the channel $\mx{h}(t)$ evolves continuously according to a multivariate complex stochastic process
with stationary covariance matrix $\mx{C}$.
That is, for symbol duration $T$, the channel ($\mx{h}(t)$) evolves according to the following \ac{AR} process:
\begin{align}
\label{eq:AR1}
\mx{h}(t+T) &= \mx{A} \mx{h}(t) + \bs{\vartheta}(t),
\end{align}
where the transition matrix of the \ac{AR} process is denoted by $\mx{A}$.
This \ac{AR} model has been commonly used to approximate Rayleigh fading channels in e.g. \cite{Baddour05}.
Equation \eqref{eq:AR1} implies that the autocorrelation function of the channel process is:
\begin{align}
\mathds{E}\left(\mx{h}(t)\mx{h}^H(t+iT)\right) &= \mx{C}\left(\mx{A}^H\right)^i, \quad \forall i.
\end{align}
Consequently, the autocorrelation function of
the fast fading channel ($\mx{h}(t)$) is modelled as:
\begin{align}
\label{eq:autocorr}
\mx{R}(i) \triangleq
\mathds{E}\left(\mx{h}(t)\mx{h}^H(t+iT)\right) &= \mx{C} e^{\mx{Q}^H i T },
\end{align}
where matrix $\mx{Q}$ describes the correlation decay, such that:
\begin{align}
e^{\mx{Q}T} = \mx{A}.
\end{align}

Similarly, for user $k$,
\begin{align}
\label{eq:autocorrk}
\mx{R}_k(i) \triangleq
\mathds{E}\left(\mx{h}_k(t)\mx{h}_k^H(t+iT)\right) &= \mx{C}_k e^{\mx{Q}_k^H i T },
\end{align}

In each pilot slot,
the \ac{BS} utilizes \ac{MMSE} channel estimation to obtain the channel estimate of each user, as it will be detailed in Section \ref{Sec:ChannelE}.
Without loss of generality, to simplify the notation, hereafter we assume that the time unit is $T$ and $iT=i$.

\subsection{Data Signal Model}
When spatially multiplexing $K$ \ac{MU-MIMO} users,
the received data signal at the \ac{BS} at time $t$ is \cite{Fodor:21}:

\begin{align}
\mathbf{y}(t)
&=
\underbrace{\mathbf{\alpha} \mathbf{h}(t) \sqrt{P} x(t)}_{\text{tagged user}}
+ \underbrace{\sum_{k=2}^K \mathbf{\alpha}_{k} \mathbf{h}_k(t) \sqrt{P_{k}} x_{k}(t)}_{\text{co-scheduled MU-MIMO users}}
+\mathbf{n}_d(t),
\label{eq:mumimo2}
\end{align}
\noindent where $\mathbf{y}(t)\in \mathds{C}^{N_r \times 1}$;
and $x_k(t)$ denotes the transmitted data symbol of User-$k$
at time $t$ with transmit power $P_k$.
Furthermore, $\mathbf{n}_d(t)~\sim \mathcal{CN}\left(\mx{0},\sigma_d^2 \mathbf{I}_{N_r}\right)$
is the \ac{AWGN} at the receiver.
\vspace{-2mm}
\section{Channel Estimation}
\label{Sec:ChannelE}
In this section, we are interested in calculating the \ac{MMSE} estimation of the
channel in each slot $i$, based on received pilot signals, as a function of the frame
size corresponding to pilot spacing (see $\Delta$ in Figure \ref{fig:Model}).
Note that estimating the channel at the receiver can be based on
multiple received pilot signals both before and after the actual data slot $i$.
While using pilot signals that are received before data slot $i$ requires to
store the samples of the received pilot, using pilot signals that arrive after
data slot $i$ necessarily induces some delay in estimating the transmitted data symbol.
In the numerical section, we will refer to specific channel estimation strategies
as, for example, "1 before, 1 after" or "2 before, 1 after" depending on the number
of utilized pilot signals received prior to or following data slot $i$
for \ac{CSIR} acquisition.
In the sequel we use the specific case of "2 before, 1 after" to illustrate the
operation of the \ac{MMSE} channel estimation scheme,
that is when the receiver uses the pilot signals
$\mathbf{\tilde Y}^p(-\Delta-1)$, $\mathbf{\tilde Y}^p(0)$, and $\mathbf{\tilde Y}^p(\Delta+1)$
for \ac{CSIR} acquisition.
We are also interested in determining the distribution of the
resulting channel estimation error, whose covariance matrix, denoted by
$\mx{Z}(\Delta,i)$, will play an important role in subsequently determining
the deterministic equivalent of the \ac{SINR}.
\vspace{-2mm}
\subsection{\ac{MMSE} Channel Estimation and Channel Estimation Error}
As illustrated in Figure \ref{fig:Model},
in each data slot $i$, the \ac{BS} utilizes the \ac{MMSE} estimates of the channel
obtained in the neighboring pilot slots, for example at $(-\Delta-1)$, $0$ and $(\Delta+1)$, 
using the respective received pilot signals according to \eqref{eqn:received_training_seq2}, that is
$\mathbf{\tilde Y}^p\big((-\Delta-1)\big)$, $\mathbf{\tilde Y}^p(0)$ and $\mathbf{\tilde Y}^p\big((\Delta+1)\big)$, using the following lemma.

\begin{lem}
\label{lem:mmsechannel}	
The \textup{MMSE} channel estimator approximates the autoregressive fast fading channel in time slot $i$
based on the received pilots at $(-\Delta-1)$, $0$ and $(\Delta+1)$ as
\begin{align}
\label{eq:hmmse}
&\mathbf{\hat h}_{\textup{MMSE}}(\Delta,i)=
\mx{H^\star}(\Delta,i) \mathbf{\hat Y}^p(\Delta),
\end{align}
where
\begin{align*}
\mx{H^\star}(\Delta, i) = \frac{1}{\alpha \sqrt{P_p} \tau_p} \mx{E}(\Delta,i). \big(\mx{M}(\Delta)+\mx{\Sigma}_3\big)^{-1}.(\mx{s}^H\otimes \mx{I}_{3N_r}),
\end{align*}

$\mathbf{\hat Y}^p(\Delta) \triangleq \begin{bmatrix}
\mathbf{\tilde Y}^p((-\Delta-1)) \\
\mathbf{\tilde Y}^p(0) \\
\mathbf{\tilde Y}^p((\Delta+1))
\end{bmatrix}$~~and~~$\mx{\Sigma}_3\triangleq \frac{\sigma_p^2}{\alpha^2P_p \tau_p} \mx{I}_{3N_r}$,
\begin{align}
\label{eq:E}
\mx{E}(\Delta,i) &\triangleq
\begin{bmatrix}
\mx{R}(\Delta \!+\! 1 \!+\! i) & \mx{R}(i) &
\mx{R}(\Delta \!+\! 1 \!-\! i)
\end{bmatrix},
\\
\label{eq:M}
\mx{M}(\Delta) &\triangleq
\begin{bmatrix}
\mx{C} & \mx{R}(\Delta \!+\! 1) & \mx{R}(2\Delta \!+\! 2)  \\
\mx{R}^H(\Delta \!+\! 1) & \mx{C} & \mx{R}(\Delta \!+\! 1) \\
\mx{R}^H(2\Delta \!+\! 2) & \mx{R}^H(\Delta \!+\! 1) & \mx{C}\\
\end{bmatrix}.
\end{align}

\end{lem}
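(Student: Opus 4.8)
The plan is to identify the claimed expression as the conditional-mean (equivalently, linear \ac{MMSE}) estimator of $\mathbf{h}(i)$ from the stacked pilot observation, and then to collapse the generic $\mathbf{R}_{h\hat Y}\mathbf{R}_{\hat Y\hat Y}^{-1}$ formula into the stated matrix product using the pilot orthogonality $\mathbf{S}^H\mathbf{S}=\tau_p\mathbf{I}_{N_r}$. First I would build the stacked observation model by evaluating \eqref{eqn:received_training_seq2} at the three pilot epochs $-\Delta-1$, $0$ and $\Delta+1$: with $\mathbf{h}_3\triangleq[\mathbf{h}^T(-\Delta-1),\mathbf{h}^T(0),\mathbf{h}^T(\Delta+1)]^T$ this reads $\mathbf{\hat Y}^p(\Delta)=\alpha\sqrt{P_p}(\mathbf{I}_3\otimes\mathbf{S})\mathbf{h}_3+\mathbf{\tilde N}_3$, where $\mathbf{\tilde N}_3$ collects the three \ac{AWGN} vectors, with covariance $\sigma_p^2\mathbf{I}_{3\tau_pN_r}$. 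Because the useful signal lies in the column space of $\mathbf{I}_3\otimes\mathbf{S}$ and the noise is white, the matched-filter output $\mathbf{z}\triangleq(\mathbf{s}^H\otimes\mathbf{I}_{3N_r})\mathbf{\hat Y}^p(\Delta)=\alpha\sqrt{P_p}\tau_p\,\mathbf{h}_3+\mathbf{n}_z$, with $\mathds{E}[\mathbf{n}_z\mathbf{n}_z^H]=\sigma_p^2\tau_p\mathbf{I}_{3N_r}$, is a sufficient statistic, so the \ac{MMSE} estimate of $\mathbf{h}(i)$ from $\mathbf{\hat Y}^p(\Delta)$ equals that from $\mathbf{z}$; this is the only step where $\mathbf{s}^H\mathbf{s}=\tau_p$ is used.

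Since the channel is a Gaussian process and $\mathbf{z}$ is affine in it plus independent Gaussian noise, $\mathbf{h}(i)$ and $\mathbf{z}$ are jointly zero-mean circularly-symmetric complex Gaussian, so the \ac{MMSE} estimator is linear: $\mathbf{\hat h}_{\textup{MMSE}}(\Delta,i)=\mathds{E}[\mathbf{h}(i)\mathbf{z}^H]\big(\mathds{E}[\mathbf{z}\mathbf{z}^H]\big)^{-1}\mathbf{z}$. I would then evaluate both moments via the autocorrelation \eqref{eq:autocorr}. The cross term is $\mathds{E}[\mathbf{h}(i)\mathbf{z}^H]=\alpha\sqrt{P_p}\tau_p\,\mathds{E}[\mathbf{h}(i)\mathbf{h}_3^H]=\alpha\sqrt{P_p}\tau_p\,\mathbf{E}(\Delta,i)$, since the three blocks of $\mathds{E}[\mathbf{h}(i)\mathbf{h}_3^H]$ are the channel correlations at the separations $\Delta+1+i$, $i$ and $\Delta+1-i$ between the estimation slot and the three pilot slots, i.e.\ exactly \eqref{eq:E}. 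The auto term is $\mathds{E}[\mathbf{z}\mathbf{z}^H]=\alpha^2P_p\tau_p^2\,\mathds{E}[\mathbf{h}_3\mathbf{h}_3^H]+\sigma_p^2\tau_p\mathbf{I}_{3N_r}=\alpha^2P_p\tau_p^2\big(\mathbf{M}(\Delta)+\mathbf{\Sigma}_3\big)$, because $\mathds{E}[\mathbf{h}_3\mathbf{h}_3^H]$ is the block-Toeplitz matrix $\mathbf{M}(\Delta)$ of \eqref{eq:M} (diagonal blocks $\mathbf{C}$, off-diagonals at separations $\Delta+1$ and $2\Delta+2$) and $\sigma_p^2\tau_p/(\alpha^2P_p\tau_p^2)$ is precisely the diagonal entry of $\mathbf{\Sigma}_3$. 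Substituting, the powers of $\alpha\sqrt{P_p}\tau_p$ collapse to the single prefactor $1/(\alpha\sqrt{P_p}\tau_p)$, and re-expanding $\mathbf{z}$ gives $\mathbf{\hat h}_{\textup{MMSE}}(\Delta,i)=\frac{1}{\alpha\sqrt{P_p}\tau_p}\mathbf{E}(\Delta,i)\big(\mathbf{M}(\Delta)+\mathbf{\Sigma}_3\big)^{-1}(\mathbf{s}^H\otimes\mathbf{I}_{3N_r})\mathbf{\hat Y}^p(\Delta)=\mathbf{H^\star}(\Delta,i)\mathbf{\hat Y}^p(\Delta)$.

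The Kronecker-product identities and the scalar bookkeeping are routine; the step I expect to be the main obstacle is getting every block of $\mathbf{E}(\Delta,i)$ and $\mathbf{M}(\Delta)$ right --- which lag of $\mathbf{R}(\cdot)$ appears, and whether it should be $\mathbf{R}$ or $\mathbf{R}^H$ --- consistently with the convention $\mathbf{R}(i)=\mathds{E}[\mathbf{h}(t)\mathbf{h}^H(t+iT)]=\mathbf{C}e^{\mathbf{Q}^HiT}$ for the signed time differences between the estimation epoch $i$ and the three pilot epochs (and between pairs of pilot epochs). Finally I would note that nothing in the derivation depends on the ``2 before, 1 after'' pattern: for a general ``$m$ before, $n$ after'' scheme one simply lets $\mathbf{h}_3$, $\mathbf{E}(\Delta,i)$ and $\mathbf{M}(\Delta)$ grow to $m+n$ blocks, and the same computation yields the analogous estimator.
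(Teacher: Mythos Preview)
Your argument is correct and follows the same overall strategy as the paper --- identify the linear \ac{MMSE} estimator as the cross-covariance times the inverse observation covariance, then express the required second moments via $\mathbf{E}(\Delta,i)$ and $\mathbf{M}(\Delta)$. The one genuine difference is in execution: the paper works directly with the full $3\tau_pN_r$-dimensional observation, computes $\mathbf{F}(\Delta)=\mathds{E}[\mathbf{\hat Y}^p(\mathbf{\hat Y}^p)^H]$ and $\mathbf{a}(\Delta,i)=\mathds{E}[\mathbf{\hat Y}^p\mathbf{h}^H(i)]$ in that dimension, and then simply asserts that $\mathbf{a}^H\mathbf{F}^{-1}$ collapses to the stated $\mathbf{H}^\star$ without showing the matrix algebra that eliminates the pilot structure. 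You instead first pass to the matched-filter output $\mathbf{z}=(\mathbf{s}^H\otimes\mathbf{I}_{3N_r})\mathbf{\hat Y}^p(\Delta)$ and invoke sufficiency, so the inversion happens in dimension $3N_r$ and the prefactor $1/(\alpha\sqrt{P_p}\tau_p)$ falls out transparently. Your route therefore makes explicit precisely the step the paper skips; the price is the sufficiency claim, which is standard (white noise, orthogonal-column mixing matrix) but should be stated as you do. Your closing remark on $\mathbf{R}$ versus $\mathbf{R}^H$ is apt: that bookkeeping is indeed the only delicate point in either version.
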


\begin{proof}
The \ac{MMSE} channel estimator aims at minimizing the \ac{MSE} between the channel estimate
$\mathbf{\hat h}_{\textrm{MMSE}}(\Delta, i) = \mathbf{H}^\star(\Delta, i) \mathbf{\hat Y}^p(\Delta)$ and the channel $\mathbf{h}(i)$, that is
\begin{align}
\mathbf{H^\star}(\Delta, i)= \text{arg} \min_{\mathbf{H}} \mathds{E}_{\mathbf{h},\mathbf{n}}\{ ||\mathbf{H} \mathbf{\hat Y}^p(\Delta) - \mathbf{h}(i)||^2 \}.
\end{align}
The solution of this quadratic optimization problem is
$\mathbf{H^\star}(\Delta, i)= \mx{a}(\Delta, i)^H \mx{F}^{-1}(\Delta)$
with
\begin{align}
\label{eq:Fandb}
\mx{F}(\Delta) &\triangleq \mathds{E}_{\mathbf{h},\mathbf{n}}\left(\mx{\hat Y}^p(\Delta) \left(\mx{\hat Y}^p(\Delta)\right)^H\right), \\
\mx{a}(\Delta, i) &\triangleq \mathds{E}_{\mathbf{h},\mathbf{n}}\left( \mx{\hat Y}^p(\Delta) \mx{h}^{H}(i)\right).
\end{align}
Let
\begin{align}
\mx{\bar h}(\Delta) \triangleq \begin{bmatrix}
\mx{h}((-\Delta-1)) \\
\mx{h}(0) \\
\mx{h}((\Delta+1))
\end{bmatrix} \nonumber
\end{align}
and
\begin{align}
\mx{\tilde{\bar{N}}}(\Delta) \triangleq \begin{bmatrix}
\mx{\tilde{N}}((-\Delta-1)) \\
\mx{\tilde{N}}(0) \\
\mx{\tilde{N}}((\Delta+1))
\end{bmatrix}.
\end{align}
Using $\mx{\bar h}(\Delta)$, we have
\begin{align}
\begin{bmatrix}
\mx{S}\mx{h}((-\Delta-1)) \\
\mx{S}\mx{h}(0) \\
\mx{S}\mx{h}((\Delta+1))
\end{bmatrix}=
\begin{bmatrix}
\mx{S}  & \mx{0} & \mx{0} \\
\mx{0} & \mx{S}  & \mx{0} \\
\mx{0} & \mx{0} & \mx{S}
\end{bmatrix}\mx{\bar h}(\Delta)
\nonumber \\
~~~~~~~~= (\mx{I}_3\otimes\mx{S}) \mx{\bar h}(\Delta)=  (\mx{s}\otimes \mx{I}_{3N_r}) \mx{\bar h}(\Delta).
\end{align}
Since $\mx{\bar h}(\Delta)$ and $\mx{\tilde{\bar{N}}}(\Delta)$ are independent and
\begin{align}
\mathds{E}_{\mathbf{h},\mathbf{n}}(\mx{\bar h}(\Delta)\mx{\bar h}^H(\Delta))&=\mx{M}(\Delta),\\
\mathds{E}_{\mathbf{h},\mathbf{n}}(\mx{h}(i)\mx{\bar h}^H(\Delta))&=\mx{E}(\Delta,i).
\end{align}
Therefore, for $\mx{F}(\Delta)$ and $\mx{a}(\Delta,i)$, we have
\begin{align*}
\mx{F}(\Delta) &=
\mathds{E}_{\mathbf{h},\mathbf{n}}
\left(\left(\alpha\sqrt{P_p} (\mx{I}_3\otimes\mx{S}) \mx{\bar{h}}(\Delta) +\mx{\tilde{\bar{N}}}(\Delta)\right)\right.\nonumber \\
&~~~~~~~~~~~~\cdot \left.\left(\alpha\sqrt{P_p} (\mx{I}_3\otimes\mx{S}) \mx{\bar{h}}(\Delta) +\mx{\tilde{\bar{N}}}(\Delta)\right)^H\right)\\
&=\alpha^2P_p(\mx{I}_3\otimes\mx{S})\mx{M}(\Delta)\left(\mx{I}_3\otimes\mx{S}^H\right)+ \sigma_p^2 \mathbf{I}_{3N_r\tau_p},\\
\mx{a}(\Delta, i) &= \mathds{E}_{\mathbf{h},\mathbf{n}}\left( \mx{\hat Y}(\Delta) \mx{h}^{H}(i)\right) \\
&=\mathds{E}_{\mathbf{h},\mathbf{n}}
\left(\left(\alpha\sqrt{P_p} (\mx{I}_3\otimes\mx{S}) \mx{\bar{h}}(\Delta) +\mx{\tilde{\bar{N}}}(\Delta)\right)
\mx{h}^{H}(i)\right)\\
&= \alpha\sqrt{P_p} ~(\mx{I}_3\otimes\mx{S})~\mx{E}(\Delta,i)^T,
\end{align*}
which yields Lemma \ref{lem:mmsechannel}.
\end{proof}

The \ac{MMSE} estimate of the channel is then expressed as:
\begin{align}
\nonumber
&\mathbf{\hat h}_{\textrm{MMSE}}(\Delta, i)=\mathbf{H^\star}(\Delta, i) \mathbf{\hat Y}^p(\Delta) ~~~~ \\
&~~~~=\mathbf{H^\star}(\Delta, i)
\left(\alpha\sqrt{P_p} (\mx{I}_3\otimes\mx{S}) \mx{\bar{h}}(\Delta) +\mx{\tilde{\bar{N}}}(\Delta)\right)
\nonumber
\\
&~~~~=
\frac{1}{\alpha\sqrt{P_p} \tau_p}
\mx{E}(\Delta, i) \left(   \mx{M}(\Delta) + \mx{\Sigma}_{3} \right)^{-1}
\nonumber
\\&~~~~~~~~~~.
\left(\alpha\sqrt{P_p} \tau_p \mx{\bar{h}}(\Delta) + \left(\mx{I}_3 \otimes \mx{S}^H\right) \mx{\tilde{\bar{N}}}(\Delta)\right).
\label{eq:MMSEt}
\end{align}

Next, we are interested in deriving the distribution of the estimated channel and
the channel estimation error, since these will be important for understanding the
impact of pilot spacing on the achievable \ac{SINR} and spectral efficiency of
the \ac{MU-MIMO} system. To this end, the following two corollaries of
Lemma \ref{lem:mmsechannel} and \eqref{eq:MMSEt} will be important in the sequel.
\begin{cor}
\label{cor:rmmse}	
The estimated channel $\mathbf{\hat h}_{\textup{MMSE}}(\Delta, i)$ is a circular symmetric complex normal distributed vector
$\mathbf{\hat h}_{\textup{MMSE}}(\Delta, i) \sim 
\mathcal{CN}\Big(\mathbf{0},\mathbf{\hat \Phi}_{\textup{MMSE}}(\Delta, i)\Big)$,
with
\begin{align}
\label{eq:rmmse}
\mathbf{\hat \Phi}_{\textup{MMSE}}(\Delta, i) \triangleq
& \mathds{E}_{\mathbf{h},\mathbf{n}} \{\mathbf{\hat h}_{\textup{MMSE}}(\Delta, i) \mathbf{\hat h}_{\textup{MMSE}}^H(\Delta, i)\}
 \nonumber \\
=& \mx{E}(\Delta, i) \big(\mx{M}(\Delta)+\mx{\Sigma}_3\big)^{-1} \mx{E}(\Delta, i)^H.
\end{align}

\end{cor}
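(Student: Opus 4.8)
The plan is to exploit the explicit linear form of $\mathbf{\hat h}_{\textup{MMSE}}(\Delta,i)$ given in \eqref{eq:MMSEt}. There the estimate appears as a deterministic matrix applied to $\alpha\sqrt{P_p}\tau_p\,\mx{\bar h}(\Delta) + (\mx{I}_3\otimes\mx{S}^H)\mx{\tilde{\bar{N}}}(\Delta)$, i.e. to a complex-linear combination of the two \emph{independent}, zero-mean, circularly symmetric complex Gaussian vectors $\mx{\bar h}(\Delta)$ and $\mx{\tilde{\bar{N}}}(\Delta)$. Since a complex-linear image of jointly proper Gaussian vectors is again a zero-mean proper (circularly symmetric) complex Gaussian vector, this immediately yields $\mathbf{\hat h}_{\textup{MMSE}}(\Delta,i)\sim\mathcal{CN}\big(\mathbf{0},\mathbf{\hat\Phi}_{\textup{MMSE}}(\Delta,i)\big)$ as soon as we compute the covariance; the vanishing of the pseudo-covariance $\mathds{E}\{\mathbf{\hat h}_{\textup{MMSE}}\mathbf{\hat h}_{\textup{MMSE}}^T\}$ follows because $\mx{\bar h}(\Delta)$ and $\mx{\tilde{\bar{N}}}(\Delta)$ have this property and it is preserved under complex-linear maps.

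Next I would evaluate $\mathbf{\hat\Phi}_{\textup{MMSE}}(\Delta,i)=\mathds{E}_{\mathbf{h},\mathbf{n}}\{\mathbf{\hat h}_{\textup{MMSE}}\mathbf{\hat h}_{\textup{MMSE}}^H\}$ directly from \eqref{eq:MMSEt}. Abbreviating $\mx{W}\triangleq\frac{1}{\alpha\sqrt{P_p}\tau_p}\mx{E}(\Delta,i)\big(\mx{M}(\Delta)+\mx{\Sigma}_3\big)^{-1}$ and $\mx{v}\triangleq\alpha\sqrt{P_p}\tau_p\,\mx{\bar h}(\Delta)+(\mx{I}_3\otimes\mx{S}^H)\mx{\tilde{\bar{N}}}(\Delta)$, we have $\mathbf{\hat h}_{\textup{MMSE}}=\mx{W}\mx{v}$ and hence $\mathbf{\hat\Phi}_{\textup{MMSE}}=\mx{W}\,\mathds{E}\{\mx{v}\mx{v}^H\}\,\mx{W}^H$. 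Using independence of $\mx{\bar h}(\Delta)$ and $\mx{\tilde{\bar{N}}}(\Delta)$ together with $\mathds{E}\{\mx{\bar h}(\Delta)\mx{\bar h}^H(\Delta)\}=\mx{M}(\Delta)$ (established in the proof of Lemma~\ref{lem:mmsechannel}), $\mathds{E}\{\mx{\tilde{\bar{N}}}(\Delta)\mx{\tilde{\bar{N}}}^H(\Delta)\}=\sigma_p^2\mx{I}_{3N_r\tau_p}$, and the Kronecker identity $(\mx{I}_3\otimes\mx{S}^H)(\mx{I}_3\otimes\mx{S})=\mx{I}_3\otimes(\mx{S}^H\mx{S})=\tau_p\mx{I}_{3N_r}$ (since $\mx{S}^H\mx{S}=\tau_p\mx{I}_{N_r}$), I obtain $\mathds{E}\{\mx{v}\mx{v}^H\}=\alpha^2P_p\tau_p^2\mx{M}(\Delta)+\sigma_p^2\tau_p\mx{I}_{3N_r}=\alpha^2P_p\tau_p^2\big(\mx{M}(\Delta)+\mx{\Sigma}_3\big)$, where the final step is just the definition $\mx{\Sigma}_3=\frac{\sigma_p^2}{\alpha^2P_p\tau_p}\mx{I}_{3N_r}$.

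Substituting back and using that $\mx{M}(\Delta)+\mx{\Sigma}_3$ is Hermitian, so $\mx{W}^H=\frac{1}{\alpha\sqrt{P_p}\tau_p}\big(\mx{M}(\Delta)+\mx{\Sigma}_3\big)^{-1}\mx{E}(\Delta,i)^H$, the scalar $\alpha^2P_p\tau_p^2$ cancels the two $\frac{1}{\alpha\sqrt{P_p}\tau_p}$ factors and the triple product $\big(\mx{M}+\mx{\Sigma}_3\big)^{-1}\big(\mx{M}+\mx{\Sigma}_3\big)\big(\mx{M}+\mx{\Sigma}_3\big)^{-1}$ telescopes to $\big(\mx{M}+\mx{\Sigma}_3\big)^{-1}$, giving $\mathbf{\hat\Phi}_{\textup{MMSE}}(\Delta,i)=\mx{E}(\Delta,i)\big(\mx{M}(\Delta)+\mx{\Sigma}_3\big)^{-1}\mx{E}(\Delta,i)^H$, which is \eqref{eq:rmmse}. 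The argument is essentially mechanical; the only place to be careful is the bookkeeping of the Kronecker factors and of the normalization and $\tau_p$ constants, together with the use of Hermitian symmetry to identify $\big(\mx{M}+\mx{\Sigma}_3\big)^{-H}$ with $\big(\mx{M}+\mx{\Sigma}_3\big)^{-1}$ so that the middle factor telescopes cleanly — that, such as it is, is the main obstacle.
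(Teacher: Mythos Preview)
Your proof is correct and follows exactly the route the paper intends: the paper's own proof is the single sentence ``Equation \eqref{eq:rmmse} follows directly from \eqref{eq:MMSEt},'' and your computation simply unpacks that sentence by substituting the linear form \eqref{eq:MMSEt}, using independence and the known second moments, and telescoping $(\mx{M}+\mx{\Sigma}_3)^{-1}(\mx{M}+\mx{\Sigma}_3)(\mx{M}+\mx{\Sigma}_3)^{-1}$ via Hermitian symmetry. There is nothing to add.
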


\begin{proof}
Equation \eqref{eq:rmmse} follows directly from \eqref{eq:MMSEt}.	
\end{proof}

An immediate consequence of Corollary \ref{cor:rmmse}
is the following corollary regarding the covariance of the channel estimation
error, as a function of pilot spacing.
\begin{cor}
\label{Cor:ChEstError}
The channel estimation error in slot $i$, $\mx{\hat{h}}_{\textup{MMSE}}(\Delta, i)-\mx{h}(\Delta, i)$,
is complex normal distributed with zero mean vector and
covariance matrix given by:
\begin{align}
\label{eq:Z}
\mx{Z}(\Delta, i) \triangleq  \mx{C} -
\mx{E}(\Delta, i) \big(\mx{M}(\Delta)+\mx{\Sigma}_3\big)^{-1}
\mx{E}(\Delta, i)^H.
\end{align}
\end{cor}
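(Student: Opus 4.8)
The plan is to avoid re-deriving the covariance by brute force from \eqref{eq:MMSEt}, and instead combine the orthogonality principle of MMSE estimation with the covariance of the estimate already computed in Corollary~\ref{cor:rmmse}. First, the distributional claim is immediate: by \eqref{eq:MMSEt} the estimate $\mathbf{\hat h}_{\textup{MMSE}}(\Delta,i)$ is a fixed linear map applied to $\mx{\bar h}(\Delta)$ and $\mx{\tilde{\bar N}}(\Delta)$, and $\mathbf{h}(i)$ is jointly circularly-symmetric complex Gaussian with these (being a segment of the same stationary Gaussian AR process, independent of the AWGN), all with zero mean; hence the error $\mathbf{\hat h}_{\textup{MMSE}}(\Delta,i)-\mathbf{h}(i)$ is a zero-mean circularly-symmetric complex Gaussian vector, and only its covariance matrix remains to be identified.

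For the covariance, I would invoke the orthogonality principle. Since $\mathbf{H^\star}(\Delta,i)$ minimizes $\mathds{E}\{\|\mathbf{H}\mathbf{\hat Y}^p(\Delta)-\mathbf{h}(i)\|^2\}$, the normal equations $\mathbf{H^\star}(\Delta,i)\mx{F}(\Delta)=\mx{a}(\Delta,i)^H$ from the proof of Lemma~\ref{lem:mmsechannel} are exactly the statement that the error is uncorrelated with the observation $\mathbf{\hat Y}^p(\Delta)$, and therefore with every linear function of it — in particular with $\mathbf{\hat h}_{\textup{MMSE}}(\Delta,i)=\mathbf{H^\star}(\Delta,i)\mathbf{\hat Y}^p(\Delta)$ itself. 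This gives $\mathds{E}\{(\mathbf{\hat h}_{\textup{MMSE}}(\Delta,i)-\mathbf{h}(i))\,\mathbf{\hat h}_{\textup{MMSE}}^H(\Delta,i)\}=\mx{0}$, and, as a by-product, $\mathds{E}\{\mathbf{\hat h}_{\textup{MMSE}}(\Delta,i)\,\mathbf{h}^H(i)\}=\mathds{E}\{\mathbf{\hat h}_{\textup{MMSE}}(\Delta,i)\,\mathbf{\hat h}_{\textup{MMSE}}^H(\Delta,i)\}=\mathbf{\hat \Phi}_{\textup{MMSE}}(\Delta,i)$.

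Finally I would expand the error covariance and cancel the orthogonal cross-term, using $\mathds{E}\{\mathbf{h}(i)\mathbf{h}^H(i)\}=\mx{C}$ from the stationarity in \eqref{eq:autocorr}:
\begin{align*}
\mathds{E}\{(\mathbf{\hat h}_{\textup{MMSE}}-\mathbf{h})(\mathbf{\hat h}_{\textup{MMSE}}-\mathbf{h})^H\}
&= \mathds{E}\{(\mathbf{\hat h}_{\textup{MMSE}}-\mathbf{h})\mathbf{\hat h}_{\textup{MMSE}}^H\}
- \mathds{E}\{(\mathbf{\hat h}_{\textup{MMSE}}-\mathbf{h})\mathbf{h}^H\} \\
&= \mx{0} - \mathbf{\hat \Phi}_{\textup{MMSE}}(\Delta,i) + \mx{C},
\end{align*}
and then substitute $\mathbf{\hat \Phi}_{\textup{MMSE}}(\Delta,i)=\mx{E}(\Delta,i)\big(\mx{M}(\Delta)+\mx{\Sigma}_3\big)^{-1}\mx{E}(\Delta,i)^H$ from Corollary~\ref{cor:rmmse}, which is exactly \eqref{eq:Z}.

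The argument is essentially routine, so there is no serious obstacle; the only care needed is in justifying the orthogonality identity directly from the normal equations established inside the proof of Lemma~\ref{lem:mmsechannel} and in confirming the circular symmetry of the Gaussian error. The alternative, self-contained route — substituting \eqref{eq:MMSEt} and evaluating $\mathds{E}\{\mathbf{\hat h}_{\textup{MMSE}}\mathbf{\hat h}_{\textup{MMSE}}^H\}$, $\mathds{E}\{\mathbf{\hat h}_{\textup{MMSE}}\mathbf{h}^H(i)\}$ and $\mathds{E}\{\mathbf{h}(i)\mathbf{\hat h}_{\textup{MMSE}}^H\}$ term by term via $\mathds{E}\{\mx{\bar h}(\Delta)\mx{\bar h}^H(\Delta)\}=\mx{M}(\Delta)$, $\mathds{E}\{\mathbf{h}(i)\mx{\bar h}^H(\Delta)\}=\mx{E}(\Delta,i)$ and the independence of $\mx{\tilde{\bar N}}(\Delta)$ — is the only genuinely laborious (but purely mechanical) part, and it reproduces the same cancellations.
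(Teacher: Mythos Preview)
Your argument is correct and is exactly the standard route the paper implicitly relies on: it states the corollary as an ``immediate consequence'' of Corollary~\ref{cor:rmmse}, i.e., combine the estimate's covariance $\mathbf{\hat \Phi}_{\textup{MMSE}}(\Delta,i)$ with the MMSE orthogonality principle to get $\mx{Z}(\Delta,i)=\mx{C}-\mathbf{\hat \Phi}_{\textup{MMSE}}(\Delta,i)$. There is nothing to add; your extra care about joint Gaussianity and circular symmetry just makes explicit what the paper leaves unsaid.
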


In the following section we will calculate the \ac{SINR} of the received data symbols.
For simplicity of notation, we use $\mx{\hat h}_{\textup{MMSE}}(\Delta, i)=\mx{\hat h}(\Delta, i)$, and  introduce
$$\mx{b}(\Delta, i) \triangleq \alpha\sqrt{P(i)}\mx{\hat{h}}(\Delta, i)$$
with covariance matrix
\begin{align}
\label{eq:Phi}
\bs{\Phi}(\Delta, i) & \triangleq
\mathds{E}\left( \mx{b}(\Delta, i)\mx{b}^H(\Delta, i) \right)  \nonumber \\
&=\mathds{E}\left( \Big(\alpha\sqrt{P(i)} \mx{\hat{h}}(\Delta, i) \Big)\Big(\alpha\sqrt{P(i)} \mx{\hat{h}}(\Delta, i) \Big)^H  \right)  \nonumber \\
&=\alpha^2P(i)(\mx{C}-\mx{Z}(\Delta, i)).
\end{align}
\subsection{Summary}

This section derived the \ac{MMSE} channel estimator (Lemma \ref{lem:mmsechannel})
that uses the received pilot signals
both before and after a given data slot $i$ and depends on the frame size $\Delta$ (pilot spacing).
As important corollaries of the channel estimation scheme, we established the distribution
of both the estimated channel (Corollary \ref{cor:rmmse})
and the associated channel estimation error in each data
slot $i$ (Corollary \ref{Cor:ChEstError}), as functions of both the employed pilot spacing and pilot power. These
results serve as a starting point for deriving the achievable \ac{SINR} and spectral
efficiency.

\section{\ac{SINR} Calculation}
\label{Sec:SINR}

\subsection{Instantaneous \ac{SINR}}
We start with recalling an important lemma from \cite{Fodor:22}, which calculates the instantaneous \ac{SINR}
in an \ac{AR} fast fading environment
when the \ac{BS} uses the \ac{MMSE} estimation of the fading channel, and employs the optimal linear
receiver:
\begin{align}
\label{eq:Gstar2}
\mx{G}^\star(\Delta, i) &=
\mx{b}^H(\Delta,i) \mx{J}^{-1}(\Delta,i),
\end{align}
where $\mx{J}(\Delta, i) \in \mathds{C}^{N_r \times N_r}$
is defined as
\begin{align}
\nonumber
\mx{J}(\Delta, i)
&\triangleq
\sum_{k=1}^K  \mx{b}_k(\Delta, i) \mx{b}_k^H(\Delta, i) + \bs{\beta}(\Delta, i),
\end{align}
where
\begin{align}
\label{eq:beta}
\bs{\beta}(\Delta, i) \triangleq \sum_{k=1}^K \alpha_k^2 P_k  \mx{Z}_k(\Delta, i) + \sigma^2_d \mx{I}_{N_r}.
\end{align}

When using the above receiver, which minimizes the \ac{MSE} of the received data symbols in the
presence of channel estimation errors, the following result from \cite{Fodor:22} will be useful in the sequel:
\begin{lem}[See \cite{Fodor:22}, Lemma 3]
Assume that
the receiver employs \textup{\ac{MMSE}} symbol estimation,
that is it employs the optimal linear receiver $\mx{G}^\star(\Delta, i)$ given in \eqref{eq:Gstar2}. 
Then the instantaneous \textup{\ac{SINR}} of the estimated data symbols of the tagged user,
$\gamma(\Delta, i)$ 
is given as:
%
\begin{equation}
\label{eq:lemma2Eq}
\gamma(\Delta, i) 
=
\mx{b}^H(\Delta, i) \mathbf{\bar{J}}^{-1}(\Delta, i) \mx{b}(\Delta, i),
\end{equation}
where
\vspace{-1mm}
\begin{equation}
\mathbf{\bar{J}}(\Delta, i) \triangleq \mathbf{J}(\Delta, i)- \mx{b}(\Delta, i)\mx{b}^H(\Delta, i).
\end{equation}
\end{lem}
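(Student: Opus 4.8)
The plan is to expand the receiver output, isolate the useful term from the interference‑plus‑noise term, recognize the latter's (conditional) covariance as $\bar{\mx{J}}$, and then collapse the resulting ratio using the scale invariance of the \ac{SINR} together with the Sherman--Morrison identity.

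First I would substitute the data model \eqref{eq:mumimo2} into the receiver output $\hat x(\Delta,i)=\mx{G}^\star(\Delta,i)\,\mx{y}(i)$ and, for every user $k$, write the true channel as its \ac{MMSE} estimate minus the estimation error, $\mx{h}_k(i)=\hat{\mx{h}}_k(\Delta,i)-\bs{\varepsilon}_k(\Delta,i)$ with $\bs{\varepsilon}_k(\Delta,i)\triangleq\hat{\mx{h}}_k(\Delta,i)-\mx{h}_k(i)$. Recalling $\mx{b}_k(\Delta,i)=\alpha_k\sqrt{P_k}\,\hat{\mx{h}}_k(\Delta,i)$, this decomposes $\hat x$ into the desired part $\mx{G}^\star\mx{b}\,x$ — the tagged symbol seen through its own channel estimate — and an interference‑plus‑noise part gathering (i) the tagged user's self‑interference $-\alpha\sqrt{P}\,\mx{G}^\star\bs{\varepsilon}\,x$ from its channel‑estimation error, (ii) the co‑scheduled terms $\mx{G}^\star\mx{b}_k x_k$ for $k=2,\dots,K$, (iii) their estimation‑error terms, and (iv) the thermal‑noise term $\mx{G}^\star\mx{n}_d$.

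Next I would compute second moments conditioned on the channel estimates (so the $\mx{b}_k$ are deterministic), which is exactly what "instantaneous \ac{SINR}" means. The key ingredient is the \ac{MMSE} orthogonality principle: each error $\bs{\varepsilon}_k$ is uncorrelated with, hence (jointly Gaussian) independent of, the estimates, so it contributes additively with covariance $\mx{Z}_k(\Delta,i)$ and produces no cross term with any $\mx{b}_j$. Using in addition that the data symbols are zero‑mean, unit‑energy and mutually independent and that $\mx{n}_d\sim\mathcal{CN}(\mx{0},\sigma_d^2\mx{I}_{N_r})$ is independent of everything, the conditional signal power is $\mx{G}^\star\mx{b}\mx{b}^H\mx{G}^{\star H}$ and the conditional interference‑plus‑noise power is $\mx{G}^\star\big(\sum_{k=2}^K\mx{b}_k\mx{b}_k^H+\sum_{k=1}^K\alpha_k^2P_k\mx{Z}_k+\sigma_d^2\mx{I}_{N_r}\big)\mx{G}^{\star H}$; by the definition \eqref{eq:beta} of $\bs{\beta}$ and of $\mx{J}$ the matrix in parentheses is precisely $\bar{\mx{J}}(\Delta,i)=\mx{J}(\Delta,i)-\mx{b}(\Delta,i)\mx{b}^H(\Delta,i)$. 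Hence, suppressing the arguments $(\Delta,i)$,
\[
\gamma(\Delta,i)=\frac{\mx{G}^\star\mx{b}\mx{b}^H\mx{G}^{\star H}}{\mx{G}^\star\bar{\mx{J}}\,\mx{G}^{\star H}}.
\]

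Finally I would do the algebra. The ratio is invariant under any nonzero rescaling of $\mx{G}^\star$, and Sherman--Morrison applied to $\mx{J}=\bar{\mx{J}}+\mx{b}\mx{b}^H$ gives $\mx{b}^H\mx{J}^{-1}=\big(1+\mx{b}^H\bar{\mx{J}}^{-1}\mx{b}\big)^{-1}\mx{b}^H\bar{\mx{J}}^{-1}$, so $\mx{G}^\star=\mx{b}^H\mx{J}^{-1}$ is a scalar multiple of $\mx{b}^H\bar{\mx{J}}^{-1}$; substituting the latter into the ratio collapses it to $(\mx{b}^H\bar{\mx{J}}^{-1}\mx{b})^2/(\mx{b}^H\bar{\mx{J}}^{-1}\bar{\mx{J}}\,\bar{\mx{J}}^{-1}\mx{b})=\mx{b}^H\bar{\mx{J}}^{-1}\mx{b}$, which is the claimed expression \eqref{eq:lemma2Eq}. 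The one step I would be most careful about is the bookkeeping in the conditional second‑moment computation — in particular, accounting for the \emph{tagged} user's own channel‑estimation error as interference (it enters $\bar{\mx{J}}$ through the $k=1$ term of $\bs{\beta}$, not the useful signal) and invoking the orthogonality principle so that all cross terms between $\mx{b}_k$ and $\bs{\varepsilon}_k$ drop out.
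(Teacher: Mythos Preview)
Your derivation is correct and is the standard route to the \ac{MMSE}-receiver \ac{SINR}: decompose the received signal using $\mx{h}_k=\hat{\mx{h}}_k-\bs{\varepsilon}_k$, invoke the \ac{MMSE} orthogonality principle so that the errors are (in the Gaussian case) independent of the estimates, identify the conditional interference-plus-noise covariance as $\bar{\mx{J}}$, and then use scale invariance together with Sherman--Morrison to collapse the Rayleigh quotient. The bookkeeping you flag --- that the tagged user's own estimation error enters $\bar{\mx{J}}$ through the $k=1$ term of $\bs{\beta}$ rather than the useful signal --- is handled correctly.

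As for the comparison: the present paper does not actually prove this lemma; it is quoted verbatim from \cite{Fodor:22} (Lemma~3 there) and stated without proof here. So there is no in-paper argument to compare against. Your write-up is a self-contained proof of the cited result and would be entirely appropriate as a supplementary derivation.
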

For the \ac{AR} fading case considered in this paper,
based on the definitions of $\mx{b}(\Delta, i)$, $\mx{J}(\Delta, i)$ and $\mx{\bar{J}}(\Delta, i)$,
the instantaneous \ac{SINR} of the tagged user is then expressed as:
\begin{align}
\gamma(i) &  =  \mx{b}^H(\Delta, i) \mathbf{\bar{J}}^{-1}(\Delta, i) \mx{b}(\Delta, i)
  \nonumber\\
 &=
\textup{tr}\left( \mx{b}(\Delta, i)\mx{b}^H(\Delta, i) \mathbf{\bar{J}}^{-1}(\Delta, i) \right).
\end{align}

\subsection{Slot-by-Slot Deterministic Equivalent of the \ac{SINR} as a Function of Pilot Spacing $\Delta$}
We can now prove the following important proposition that gives the asymptotic deterministic equivalent
of the instantaneous \ac{SINR} in data slot $i$, $\bar{\gamma}(\Delta, i)$, when the number of antennas $N_r$ approaches infinity.
This asymptotic equivalent \ac{SINR} gives a good approximation of averaging the instantaneous \ac{SINR} of the tagged user \cite{Wagner:2012, Hoydis:13, Fodor:22}.

\begin{prop}
\label{Prop:SINR}
The asymptotic deterministic equivalent \ac{SINR} of the tagged user in data slot $i$ can be calculated as:
\begin{align}
\label{eq:hoydis1}
\bar{\gamma}(\Delta, i) &= \textup{tr}\Big( \bs{\Phi}(\Delta, i)\mx{T}(\Delta, i)  \Big),
\end{align}
where $\mx{T}(\Delta, i)$ is defined as: 
\begin{align}
\label{eq:hoydis2}
\mx{T}(\Delta, i) \triangleq \left( \sum_{m = 2}^{K} \frac{\bs{\Phi}_m(\Delta, i)}{1 + \delta_m(\Delta, i)} + \bs{\beta}(\Delta, i) \right)^{-1},
\end{align}
and $\delta_m(\Delta, i)$ are the solutions of the following system of $K$ equations
\begin{align}
\label{eq:hoydis3}
\delta_m(\Delta, i) &= \textup{tr}\left(\bs{\Phi}_m(\Delta, i)
\left( \sum_{l = 2}^{K} \frac{\bs{\Phi}_l(\Delta, i)}{1 + \delta_l(\Delta, i)} + \bs{\beta}(\Delta, i)  \right)^{-1} \right)
\end{align}
for $\forall m = 1, \ldots, K$.
\end{prop}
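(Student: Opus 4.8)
The plan is to recognize \eqref{eq:lemma2Eq} as a random quadratic form of the classical type handled by the deterministic-equivalent machinery of large random matrix theory, and to reduce it to the fixed-point system \eqref{eq:hoydis1}--\eqref{eq:hoydis3} in two standard steps. First I would exploit the structure: writing $\mx{b} = \mx{b}_1$, we have $\gamma(\Delta, i) = \mx{b}_1^H \mathbf{\bar J}^{-1}(\Delta, i)\mx{b}_1$ with $\mathbf{\bar J}(\Delta, i) = \sum_{k=2}^K \mx{b}_k(\Delta, i)\mx{b}_k^H(\Delta, i) + \bs{\beta}(\Delta, i)$. Because the channels of distinct users are mutually independent, so are their \ac{MMSE} estimates and hence the scaled estimates $\mx{b}_k(\Delta, i)$; in particular $\mx{b}_1$ is independent of $\mathbf{\bar J}(\Delta, i)$. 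By Corollary \ref{cor:rmmse}, $\mx{b}_k(\Delta, i) \sim \mathcal{CN}(\mx{0}, \bs{\Phi}_k(\Delta, i))$, and by \eqref{eq:beta} the matrix $\bs{\beta}(\Delta, i)$ is deterministic with $\bs{\beta}(\Delta, i) \succeq \sigma_d^2 \mx{I}_{N_r}$, so that $\|\mathbf{\bar J}^{-1}(\Delta, i)\| \le 1/\sigma_d^2$ uniformly in $N_r$.

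In the first step I would replace the quadratic form by a trace: since $\mx{b}_1$ is a complex Gaussian vector independent of $\mathbf{\bar J}(\Delta, i)$, whose inverse is uniformly bounded in spectral norm, the trace lemma for quadratic forms (as used in \cite{Wagner:2012, Hoydis:13, Fodor:22}) gives $\mx{b}_1^H \mathbf{\bar J}^{-1}(\Delta, i)\mx{b}_1 - \textup{tr}(\bs{\Phi}_1(\Delta, i)\mathbf{\bar J}^{-1}(\Delta, i)) \to 0$ almost surely as $N_r \to \infty$. In the second step I would find a deterministic equivalent for $\textup{tr}(\bs{\Phi}_1(\Delta, i)\mathbf{\bar J}^{-1}(\Delta, i))$: the matrix $\mathbf{\bar J}(\Delta, i)$ is a sum of $K-1$ independent rank-one terms with distinct covariances $\bs{\Phi}_k(\Delta, i)$ plus the deterministic positive-definite matrix $\bs{\beta}(\Delta, i)$, which is exactly the model covered by the deterministic-equivalent theorem of \cite{Wagner:2012} (see also \cite{Hoydis:13}), established via the Bai--Silverstein Stieltjes-transform method. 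Applying that theorem with the deterministic ``observation'' matrix $\bs{\Phi}_1(\Delta, i)$ yields $\textup{tr}(\bs{\Phi}_1(\Delta, i)\mathbf{\bar J}^{-1}(\Delta, i)) - \textup{tr}(\bs{\Phi}_1(\Delta, i)\mx{T}(\Delta, i)) \to 0$ almost surely, with $\mx{T}(\Delta, i)$ as in \eqref{eq:hoydis2} and $\delta_2(\Delta, i), \ldots, \delta_K(\Delta, i)$ the unique solution of \eqref{eq:hoydis3} restricted to $m = 2, \ldots, K$; existence and uniqueness follow from the standard-interference-function (Yates) argument, exactly as in the references. Setting $\bar\gamma(\Delta, i) \triangleq \textup{tr}(\bs{\Phi}_1(\Delta, i)\mx{T}(\Delta, i))$ and chaining the two limits establishes \eqref{eq:hoydis1}, while the $m = 1$ instance of \eqref{eq:hoydis3} merely records the identity $\bar\gamma(\Delta, i) = \delta_1(\Delta, i)$.

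The hard part will be not the (by now standard) limit theorems themselves, but verifying that their regularity hypotheses actually hold in this channel-aging model. Concretely, I would need that, under the natural normalization, the covariances $\bs{\Phi}_k(\Delta, i)$ and the matrix $\bs{\beta}(\Delta, i)$ have spectral norms and normalized traces bounded uniformly in $N_r$, and that $\bs{\beta}(\Delta, i)$ is uniformly positive definite. Using the explicit forms $\bs{\Phi}_k(\Delta, i) = \alpha_k^2 P_k(i)(\mx{C}_k - \mx{Z}_k(\Delta, i))$ from \eqref{eq:Phi}, $\bs{\beta}(\Delta, i)$ from \eqref{eq:beta}, and the fact that $\mx{Z}_k(\Delta, i)$ and $\mx{C}_k - \mx{Z}_k(\Delta, i)$ are both covariance matrices by Corollary \ref{Cor:ChEstError} (hence $\mathbf{0} \preceq \mx{Z}_k(\Delta, i) \preceq \mx{C}_k$, so $\mathbf{0} \preceq \bs{\Phi}_k(\Delta, i) \preceq \alpha_k^2 P_k(i)\mx{C}_k$), all of these reduce to the single standard assumption $\limsup_{N_r}\|\mx{C}_k\| < \infty$ on the stationary antenna-correlation matrices, which I would impose; the convergence statements then transfer verbatim from \cite{Wagner:2012, Hoydis:13, Fodor:22}, and the proposition follows. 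A minor point to keep straight is that $\bar\gamma(\Delta, i)$ is by construction an almost-sure limit of the per-realization instantaneous \ac{SINR}, which the text subsequently uses as an approximation of the averaged \ac{SINR} --- the same interpretation already adopted in the cited works.
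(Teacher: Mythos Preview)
Your proposal is correct and follows essentially the same route as the paper: use the independence and Gaussianity of the $\mx{b}_k(\Delta,i)$ (with covariances $\bs{\Phi}_k(\Delta,i)$), replace the quadratic form by a trace, and invoke the deterministic-equivalent theorem of \cite{Wagner:2012} to obtain the fixed-point system, then read off $\bar\gamma(\Delta,i)=\delta_1(\Delta,i)$. The paper's proof is terse and jumps directly to the expression with $\bs{\Phi}(\Delta,i)$ inside the trace before citing \cite{Wagner:2012}; your version spells out the intermediate trace-lemma step and the verification of the regularity hypotheses (uniform boundedness of $\|\bs{\Phi}_k\|$ via $\mx{0}\preceq\bs{\Phi}_k\preceq\alpha_k^2 P_k\mx{C}_k$ and uniform positive-definiteness of $\bs{\beta}$), which is a welcome addition but not a different approach.
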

The above system of $K$ equations gives the deterministic equivalent of the \ac{SINR} of the tagged user,
and a different set of $K$ equations must be used for each user.
\begin{proof}
The $\mx{b}_k(\Delta, i)$ vectors are independent for $k = 1 \ldots K$,
and the covariance matrix of $\mx{b}_k(\Delta, i)$ is $\bs{\Phi}_k(\Delta, i)$
(c.f. \eqref{eq:Phi}).
We can then express the expected value of the \ac{SINR} of the tagged user as follows:
\begin{align}
\label{eq:gammaB}
& \bar{\gamma}(\Delta, i) \triangleq \mathds{E}\Big(\gamma(\Delta, i)\Big)   \\
&= \mathds{E}\left( \textup{tr}\left( \bs{\Phi}(\Delta, i)  \left( \sum_{l = 2}^K  \mx{b}_l(\Delta, i) \mx{b}_l^H(\Delta, i)  +\bs{\beta}(\Delta, i) \right)^{\!\!-1}  \right) \right).
\nonumber
\end{align}
The proposition is established
by invoking Theorem \ref{thm:upperbound} in \cite{Wagner:2012}, which is applicable in multiuser systems and
gives the value of the deterministic equivalent of $\bar{\gamma}(\Delta, i)$
implicitly using a system of $K$ equations and noticing that
$\bar{\gamma}(\Delta, i) = \delta_1(\Delta, i)$, since $\delta_1(\Delta, i)=\textup{tr}\big( \bs{\Phi}(\Delta, i) \mx{T}(\Delta, i)\big)$ according to \eqref{eq:hoydis3}.
\end{proof}

\subsection{Summary}
This section established the instantaneous slot-by-slot \ac{SINR} of a tagged user ($\bar \gamma(i)$) of a \ac{MU-MIMO} system operating over 
a fast fading channels modelled as \ac{AR} processes,
by applying our previous result obtained for discrete-time \ac{AR} channels
reported in \cite{Fodor:21}.
Next, we invoked Theorem \ref{thm:upperbound} in \cite{Wagner:2012}, to establish the deterministic
equivalent \ac{SINR} for each slot, as a function of the frame size (pilot spacing) $\Delta$, see Proposition \ref{Prop:SINR}.
These results serve as a basis for formulating the pilot spacing optimization problem over the frame size and pilot power as optimization variables.

\section{Pilot Spacing and Power Control}
\label{Sec:PilotSpacing}
In this section, we study the impact of pilot spacing and power control on the achievable \ac{SINR}
and the \ac{SE} of all users in the system.
The asymptotic \ac{SE} of the $i$-th data symbol of user $k$ is
\begin{align}
\label{eq:SE}
\text{SE}_k(\Delta, i) \triangleq \log\Big(1 + \bar{\gamma}_k(\Delta, i)\Big),
\end{align}
where $\bar{\gamma}_k(\Delta, i)$ denotes the average \ac{SINR} of  user $k$ when sending the $i$-th data symbol, and when $\Delta$ data symbols are sent between every pair of pilot symbols.
Consequently, the 
average \ac{SE}
of  user $k$ over the $(\Delta+1)$ slot long frame is
\begin{align}
\frac{\sum_{i=1}^{\Delta}\text{SE}_k(\Delta, i)}{\Delta + 1},
\end{align}
which can be optimized over $\Delta$.
More importantly, 
the aggregate average \ac{SE}
of the \ac{MU-MIMO} system for the $K$ users can be expressed as:
\begin{align}
\text{SE}(\Delta) = \frac{\sum_{k=1}^K \sum_{i=1}^{\Delta}\text{SE}_k(\Delta,i)}{\Delta + 1}.
\label{eq:multiopt}
\end{align}

\subsection{An Upper Bound of the Deterministic Equivalent \ac{SINR} and the \ac{SE}}

Let us assume that $\mx{Q}_k=q_k \mx{I}_{N_r}$, that is the
channel vector $\mx{h}_k(t)$ consists of independent \ac{AR} processes in the spatial domain, implying that:
\begin{align}
\label{eq:r1}
\mx{R}_k(i) \triangleq
\mathds{E}\left(\mx{h}_k(t)\mx{h}_k^H(t+i)\right) &= \mx{C}_k e^{q_k^* i},
\end{align}
where $q_k$ is a scalar, $q_k^*$ denotes complex conjugation, and let $\bar{q}_k\triangleq Re(q_k)<0$.

Note that the exponential approximation of the autocorrelation function of the fast fading process expressed
in \eqref{eq:r1} is related to the Doppler frequency of Rayleigh fading through:
\begin{align}
\label{eq:RayleighApprox}
\underbrace{\mx{C}J_0(2 \pi f_D i)}_{\text{True autocorrelation of Rayleigh fading}} &\approx \mx{R}(i),
\end{align}
where $J_0(.)$ is the zeroth order Bessel function \cite{Wang:03}.
Based on the exponential approximation of this Rayleigh fading process in \eqref{eq:r1}, the Doppler frequency of the approximate model is obtained from $2 \pi f_D i = Re(q_k^* i)$, i.e. $f_D = 2 \pi/\bar{q}_k$.

To optimize \eqref{eq:multiopt}, we first find an upper bound of $\text{SE}_k(\Delta,i)$
via an upper bound of $\bar{\gamma}_k(\Delta, i)$.
To simplify the notation, the following discussion refers to the tagged user, and later we utilize that the same relations hold for all users.
We introduce the following upper bound of $\bar{\gamma}(\Delta, i)$:
\begin{align}
\label{eq:upper}
\bar{\gamma}^{(u)}(\Delta, i) \!&\triangleq \textup{tr}\!\left(\! \bs{\Phi}^{(u)}(\Delta, i)  \left(\sum_{l = 1}^K \!\alpha_l^2 P_l \mx{Z}^{(u)}_l(\Delta, i) \!+\! \sigma_d^2 \mx{I_{N_r}}\right)^{\!\!\!-1}  \!\right)\!,
\end{align}
where $\mx{Z}^{(u)}(\Delta, i)$ and $\bs{\Phi}^{(u)}(\Delta, i)$ are given by
\begin{align}
\label{eq:zu}
\mx{Z}^{(u)}(\Delta, i) &\triangleq \mx{C} - \rho(\Delta, i) \mx{C} \left(\eta  \mx{C} + \bs{\Sigma} \right)^{-1} \mx{C},  \\
\label{eq:phiu}
\bs{\Phi}^{(u)}(\Delta, i) &\triangleq \alpha^2 P \rho(\Delta, i) \mx{C} \left( \eta \mx{C} + \bs{\Sigma} \right)^{-1} \mx{C},
\end{align}
with $\eta$ being a constant, $\mx{\Sigma}\triangleq \frac{\sigma_p^2}{\alpha^2P_p \tau_p} \mx{I}_{N_r}$ and
\begin{align}
\label{eq:rho}
\rho(\Delta, i) &\triangleq e^{2\bar{q} (\Delta + 1 + i)} + e^{2\bar{q}  i} + e^{2\bar{q} (\Delta + 1 - i)}.
\end{align}

\begin{thm}
\label{thm:upperbound}
If $\bar{q}<0$ and
\begin{align}
\label{eq:etacond}
0<\eta<\frac{1}{2}\left(2+a^2-a\sqrt{8+a^2}\right),
\end{align}
with $a\triangleq e^{2\bar{q}}$
then  $\bar{\gamma}(\Delta, i)\leq \bar{\gamma}^{(u)}(\Delta, i)$.
\end{thm}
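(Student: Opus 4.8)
The plan is to strip the trace/inverse form of $\bar{\gamma}$ down, in two elementary L\"owner-monotonicity steps, to a single matrix inequality between the channel-estimation error covariances, and then to prove that inequality by exploiting the Kronecker structure forced by the assumption $\mx{Q}_k=q_k\mx{I}_{N_r}$ that underlies \eqref{eq:r1}--\eqref{eq:rho}.

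\emph{Step 1: reduction to $\mx{Z}_k\succeq\mx{Z}^{(u)}_k$.} By \eqref{eq:Phi} and \eqref{eq:phiu}, $\bs{\Phi}(\Delta,i)=\alpha^2 P(\mx{C}-\mx{Z}(\Delta,i))$ and $\bs{\Phi}^{(u)}(\Delta,i)=\alpha^2 P(\mx{C}-\mx{Z}^{(u)}(\Delta,i))$, so $\mx{Z}\succeq\mx{Z}^{(u)}$ is equivalent to $\bs{\Phi}\preceq\bs{\Phi}^{(u)}$, and likewise for every user. Writing, via \eqref{eq:hoydis2}--\eqref{eq:beta}, $\bar{\gamma}(\Delta,i)=\textup{tr}(\bs{\Phi}\mx{T})$ with $\mx{T}^{-1}=\sum_{m=2}^{K}\frac{\bs{\Phi}_m}{1+\delta_m}+\bs{\beta}$ and, by \eqref{eq:upper}, $\bar{\gamma}^{(u)}(\Delta,i)=\textup{tr}(\bs{\Phi}^{(u)}\tilde{\mx{T}})$ with $\tilde{\mx{T}}^{-1}=\sum_{l=1}^{K}\alpha_l^2 P_l\mx{Z}^{(u)}_l+\sigma_d^2\mx{I}_{N_r}$, I would use that $\textup{tr}(\mx{A}\mx{B})\ge 0$ whenever $\mx{A},\mx{B}\succeq 0$ and that $\mx{X}\mapsto\mx{X}^{-1}$ reverses the L\"owner order on positive-definite matrices. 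Then $\textup{tr}(\bs{\Phi}\mx{T})\le\textup{tr}(\bs{\Phi}^{(u)}\mx{T})$ once $\bs{\Phi}\preceq\bs{\Phi}^{(u)}$, and $\textup{tr}(\bs{\Phi}^{(u)}\mx{T})\le\textup{tr}(\bs{\Phi}^{(u)}\tilde{\mx{T}})$ once $\mx{T}^{-1}\succeq\tilde{\mx{T}}^{-1}$. Since $\bs{\beta}=\sum_{k}\alpha_k^2 P_k\mx{Z}_k+\sigma_d^2\mx{I}_{N_r}$ and $\delta_m\ge 0$, we get $\mx{T}^{-1}-\tilde{\mx{T}}^{-1}=\sum_{m=2}^{K}\frac{\bs{\Phi}_m}{1+\delta_m}+\sum_{k=1}^{K}\alpha_k^2 P_k(\mx{Z}_k-\mx{Z}^{(u)}_k)\succeq 0$ as soon as $\mx{Z}_k\succeq\mx{Z}^{(u)}_k$ for every $k$. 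Hence the theorem reduces to this family of inequalities, each an instance of the same statement with user-$k$ parameters.

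\emph{Step 2: the covariance inequality.} Dropping the user index, by \eqref{eq:Z} and \eqref{eq:zu} it remains to show $\mx{E}(\Delta,i)(\mx{M}(\Delta)+\bs{\Sigma}_3)^{-1}\mx{E}^H(\Delta,i)\preceq\rho(\Delta,i)\,\mx{C}(\eta\mx{C}+\bs{\Sigma})^{-1}\mx{C}$. Under \eqref{eq:r1} one has $\mx{R}(j)=e^{q^\ast j}\mx{C}$, hence $\mx{M}(\Delta)=\mx{P}\otimes\mx{C}$ and $\bs{\Sigma}_3=\mx{I}_3\otimes\bs{\Sigma}$ with $\bs{\Sigma}=\sigma\mx{I}_{N_r}$, $\sigma\triangleq\sigma_p^2/(\alpha^2 P_p\tau_p)$, where $\mx{P}$ is the $3\times 3$ Hermitian matrix with unit diagonal, $P_{12}=P_{23}=b\triangleq e^{q^\ast(\Delta+1)}$ and $P_{13}=b^2$; and $\mx{E}(\Delta,i)=\mx{v}^T\otimes\mx{C}$ with $\mx{v}=[e^{q^\ast(\Delta+1+i)},\,e^{q^\ast i},\,e^{q^\ast(\Delta+1-i)}]^T$, so that $\|\mx{v}\|^2=\rho(\Delta,i)$ by \eqref{eq:rho}. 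Conjugating both sides by a unitary that diagonalizes $\mx{C}$ makes them diagonal in the eigenbasis of $\mx{C}$, and the matrix inequality becomes, for each eigenvalue $d\ge 0$ of $\mx{C}$, the scalar inequality $d^2\,\mx{v}^T(d\mx{P}+\sigma\mx{I}_3)^{-1}\overline{\mx{v}}\le\rho d^2/(\eta d+\sigma)$; for $d=0$ it is trivial, and for $d>0$ the left-hand side equals $d^2\,\overline{\mx{v}}^{\,H}(d\mx{P}+\sigma\mx{I}_3)^{-1}\overline{\mx{v}}$, a nonnegative real number bounded by $d^2\|\overline{\mx{v}}\|^2/\lambda_{\min}(d\mx{P}+\sigma\mx{I}_3)=\rho d^2/\big(d\,\lambda_{\min}(\mx{P})+\sigma\big)$. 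It therefore suffices to prove $\lambda_{\min}(\mx{P})\ge\eta$.

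\emph{Step 3: the spectral bound and the threshold.} The matrix $\mx{P}$ is unitarily similar, via the diagonal unitary $\textup{diag}(e^{2i\arg b},e^{i\arg b},1)$, to the real symmetric matrix $\mx{P}_0$ with entries $r^{|j-k|}$, where $r\triangleq|b|=e^{\bar{q}(\Delta+1)}\in(0,1)$; hence $\lambda_{\min}(\mx{P})=\lambda_{\min}(\mx{P}_0)$. The tridiagonal matrix $(1-r^2)\mx{P}_0^{-1}$ has characteristic polynomial $(1-\mu)\big(\mu^2-(2+r^2)\mu+1-r^2\big)$, whence $\lambda_{\min}(\mx{P}_0)=\frac{1}{2}\big(2+r^2-r\sqrt{8+r^2}\big)=:g(r)$. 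Now $g'(r)=r-(4+r^2)/\sqrt{8+r^2}<0$ on $(0,1)$ (equivalently $r^2(8+r^2)<(4+r^2)^2$, i.e. $0<16$), so $g$ is strictly decreasing; and since $\bar{q}<0$ and $\Delta\ge 1$ give $r=e^{\bar{q}(\Delta+1)}\le e^{2\bar{q}}=a$, we obtain $\lambda_{\min}(\mx{P})=g(r)\ge g(a)=\frac{1}{2}\big(2+a^2-a\sqrt{8+a^2}\big)>\eta$ by \eqref{eq:etacond}, uniformly over all admissible $\Delta$ and $i$. Chaining Steps 1--3 gives $\bar{\gamma}(\Delta,i)\le\bar{\gamma}^{(u)}(\Delta,i)$.

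\emph{Main obstacle.} The crux is Step 2: recognizing that $\mx{Q}_k=q_k\mx{I}_{N_r}$ collapses $\mx{M}(\Delta)$ and $\mx{E}(\Delta,i)$ into clean Kronecker products, so the matrix inequality decouples into one scalar inequality per eigenvalue of $\mx{C}$, and then contracting the $3$-dimensional quadratic form to $\lambda_{\min}(\mx{P})$ via a Rayleigh-quotient bound --- which is tight enough precisely because $\|\mx{v}\|^2$ equals exactly the factor $\rho(\Delta,i)$ on the right-hand side, so the common scalar and the $i$-dependence cancel. Once past that, identifying $\mx{P}_0$ with the autocovariance matrix of a scalar AR(1) sequence, reading off $\lambda_{\min}$ from its tridiagonal inverse, and checking monotonicity of $g$ are routine and reproduce exactly the cut-off in \eqref{eq:etacond}.
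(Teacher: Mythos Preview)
Your proof is correct and follows essentially the same route as the paper: reduce $\bar{\gamma}\le\bar{\gamma}^{(u)}$ to the covariance inequality $\mx{E}(\mx{M}+\bs{\Sigma}_3)^{-1}\mx{E}^H\preceq\rho\,\mx{C}(\eta\mx{C}+\bs{\Sigma})^{-1}\mx{C}$, exploit the Kronecker structure forced by $\mx{Q}_k=q_k\mx{I}_{N_r}$, and finish by showing $\lambda_{\min}$ of the $3\times 3$ AR(1) correlation matrix exceeds $\eta$ under \eqref{eq:etacond}. The only noteworthy difference is your Step~1: the paper obtains the first inequality in its chain \eqref{eq:gammabound} by going back to the expectation form of $\bar{\gamma}$ and integrating out the interferers' $\mx{b}_l$ (Lemma~\ref{lem:det}), whereas you work directly with the deterministic-equivalent formula $\bar{\gamma}=\textup{tr}(\bs{\Phi}\mx{T})$ of Proposition~\ref{Prop:SINR} and simply drop the nonnegative $\sum_{m\ge 2}\bs{\Phi}_m/(1+\delta_m)$ from $\mx{T}^{-1}$; this is slightly cleaner and more in line with how $\bar{\gamma}$ is actually used in the sequel. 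Your Step~2 (eigenvalue-by-eigenvalue reduction via a Rayleigh-quotient bound) and the paper's direct replacement $\mx{M}_3\succeq\eta\mx{I}_3$ followed by congruence are equivalent reformulations of the same argument, and your tridiagonal-inverse computation of $\lambda_{\min}(\mx{P}_0)$ reproduces exactly the eigenvalues the paper writes down directly.
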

\begin{proof}
We prove the theorem based on the following inequalities
\begin{align}
\bar{\gamma}(\Delta, i)
&\overset{(a)}{\leq}
\textup{tr}\left( \bs{\Phi}(\Delta, i) \bs{\beta}(\Delta, i)^{-1} \right) \nonumber \\
&\overset{(b)}{\leq}
   \textup{tr}\left( \bs{\Phi}^{(u)}(\Delta, i) \bs{\beta}(\Delta, i)^{-1} \right)
\overset{(c)}{\leq}
   \bar{\gamma}^{(u)}(\Delta, i),  \label{eq:gammabound}
\end{align}
which are proved in consecutive lemmas.
\end{proof}

\begin{lem}
\label{lem:AB}
Let $\mx{A}$, $\mx{B}$ and $\mx{C}$ be positive definite matrices and $\mx{D}$ be any matrix, such that $\mx{A} \preceq \mx{B}$ (i.e. $\mx{B}-\mx{A}$ is a positive semidefinite matrix), then
\begin{align}
\label{eq:ABinv}
\mx{A}^{-1} &\succeq \mx{B}^{-1},
\\
\label{eq:AB3}
\textup{tr}\left( \mx{D}^H \mx{A} \mx{D} \right) &\leq \textup{tr}\left(\mx{D}^H \mx{B} \mx{D} \right) \\
\label{eq:AB1}
\textup{tr}\left( \mx{A} \mx{C} \right) &\leq \textup{tr}\left(\mx{B} \mx{C} \right)
\\
\label{eq:AB2}
\textup{tr}\left( \mx{C} \mx{A}^{-1} \right) &\geq \textup{tr}\left( \mx{C} \mx{B}^{-1} \right).
\end{align}
\end{lem}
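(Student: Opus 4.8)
The plan is to establish the four inequalities in the order they are stated, since each later one is a short consequence of the earlier ones together with one elementary observation, and the complex Hermitian setting (with $\cdot^H$ the conjugate transpose) requires only minor care.

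First, for \eqref{eq:ABinv} I would use a congruence argument. Since $\mx{B}$ is positive definite it admits a Hermitian positive definite square root $\mx{B}^{1/2}$, and the hypothesis $\mx{A}\preceq\mx{B}$ is equivalent (conjugating by $\mx{B}^{-1/2}$) to $\mx{B}^{-1/2}\mx{A}\mx{B}^{-1/2}\preceq \mx{I}_{N_r}$. The matrix $\mx{B}^{-1/2}\mx{A}\mx{B}^{-1/2}$ is Hermitian positive definite with all eigenvalues in $(0,1]$, hence its inverse $\mx{B}^{1/2}\mx{A}^{-1}\mx{B}^{1/2}$ has all eigenvalues in $[1,\infty)$, i.e. $\mx{B}^{1/2}\mx{A}^{-1}\mx{B}^{1/2}\succeq \mx{I}_{N_r}$; conjugating back by $\mx{B}^{-1/2}$ gives $\mx{A}^{-1}\succeq\mx{B}^{-1}$.

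Next, \eqref{eq:AB3} and \eqref{eq:AB1}: write the positive semidefinite difference as $\mx{B}-\mx{A}=\mx{W}^H\mx{W}$ for some matrix $\mx{W}$ (for instance $\mx{W}=(\mx{B}-\mx{A})^{1/2}$). Then $\textup{tr}\big(\mx{D}^H(\mx{B}-\mx{A})\mx{D}\big)=\textup{tr}\big((\mx{W}\mx{D})^H(\mx{W}\mx{D})\big)\geq 0$, which is \eqref{eq:AB3}. Taking $\mx{D}=\mx{C}^{1/2}$ and using cyclicity of the trace gives $\textup{tr}\big((\mx{B}-\mx{A})\mx{C}\big)=\textup{tr}\big(\mx{C}^{1/2}(\mx{B}-\mx{A})\mx{C}^{1/2}\big)\geq 0$, which is \eqref{eq:AB1}. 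Finally, \eqref{eq:AB2} follows by combining the two: by \eqref{eq:ABinv} the matrix $\mx{A}^{-1}-\mx{B}^{-1}$ is positive semidefinite, and $\mx{A}^{-1},\mx{B}^{-1}$ are positive definite, so re-applying \eqref{eq:AB1} with the pair $(\mx{A},\mx{B})$ replaced by $(\mx{B}^{-1},\mx{A}^{-1})$ yields $\textup{tr}(\mx{C}\mx{B}^{-1})\leq\textup{tr}(\mx{C}\mx{A}^{-1})$.

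I do not expect a genuine obstacle: all four statements are classical facts about the L\"owner order. The only points demanding attention are bookkeeping ones — verifying that the matrices to which \eqref{eq:AB1} is re-applied in the last step are indeed positive definite (true, since the inverse of a positive definite matrix is positive definite), and keeping the Hermitian/complex conventions consistent so that square roots and the identity $\textup{tr}(\mx{X}^H\mx{X})\geq 0$ are used correctly.
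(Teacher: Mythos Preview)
Your proposal is correct and follows essentially the same approach as the paper: both use congruence by $\mx{D}$ (equivalently, a factorization of the PSD difference) for \eqref{eq:AB3}, and then a square-root/Cholesky factorization of $\mx{C}$ together with cyclicity of the trace for \eqref{eq:AB1} and \eqref{eq:AB2}. The only noteworthy difference is that the paper dispatches \eqref{eq:ABinv} by citation to Horn--Johnson, whereas you supply a short self-contained congruence argument via $\mx{B}^{-1/2}$; your version is slightly more informative but not a different route.
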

\begin{proof}
$\mx{A}^{-1} \succeq \mx{B}^{-1}$ is given in \cite[p. 495, Corollary 7.7.4(a)]{Horn2013}.
\eqref{eq:AB3} follows from the fact that $\mx{D}^H (\mx{B}-\mx{A}) \mx{D}$ is a positive semidefinite matrix since $\mx{B}-\mx{A}$ is a positive semidefinite matrix and
for any $\mx{x}$
\begin{align}
\mx{x}^H \mx{D}^H (\mx{B}-\mx{A}) \mx{D} \mx{x} = \mx{y}^H (\mx{B}-\mx{A}) \mx{y} \geq 0
\end{align}
where $\mx{y} \triangleq \mx{D} \mx{x}$.
Let $\mx{C}=\mx{D}^H \mx{D}$ be the Cholesky decomposition of $\mx{C}$ then
\eqref{eq:AB1} and \eqref{eq:AB2} follows from \eqref{eq:AB3}, by utilizing the cyclic property of the trace operator.
\end{proof}

\begin{lem}
\label{lem:beta}
For $\bar{q}<0$ and $\eta$ satisfying \eqref{eq:etacond},
the following relation holds
\begin{align}
&\mx{E}(\Delta, i) \big(\mx{M}(\Delta,i)+\mx{\Sigma}_3\big)^{-1} \mx{E}(\Delta, i)^H
\nonumber \\&
~~~~~~~~~~~~~~\preceq \rho(\Delta, i) \mx{C} \left( \eta \mx{C} \!+\! \bs{\Sigma} \right)^{-1} \mx{C}
\label{eq:mrel}
\end{align}
\end{lem}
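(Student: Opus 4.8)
The plan is to strip off the spatial dimension with a Kronecker factorization and thereby reduce the Loewner-order inequality \eqref{eq:mrel} to a scalar question about the smallest eigenvalue of a $3\times 3$ correlation matrix. Under the standing assumption $\mx{Q}=q\mx{I}_{N_r}$, relation \eqref{eq:r1} gives $\mx{R}(j)=e^{q^*j}\mx{C}$ for the tagged user, so, setting $w\triangleq e^{q^*(\Delta+1)}$ and introducing the $1\times 3$ row vector $\mx{v}\triangleq\big[\,e^{q^*(\Delta+1+i)},\;e^{q^*i},\;e^{q^*(\Delta+1-i)}\,\big]$, the blocks defined in \eqref{eq:E} and \eqref{eq:M} become $\mx{E}(\Delta,i)=\mx{v}\otimes\mx{C}$ and $\mx{M}(\Delta)=\mx{P}\otimes\mx{C}$, where $\mx{P}$ is the Hermitian $3\times 3$ Toeplitz matrix with first row $[\,1,\;w,\;w^2\,]$; moreover $\mx{\Sigma}_3=\mx{I}_3\otimes\bs{\Sigma}$, and a one-line computation of $\|\mx{v}\|^2$ reproduces $\rho(\Delta,i)$ of \eqref{eq:rho}, since $|e^{q^*m}|^2=e^{2\bar qm}$.

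The crux is to show that the hypothesis \eqref{eq:etacond} forces $\lambda_{\min}(\mx{P})>\eta$. Writing $w=|w|\,e^{\mathrm{i}\theta}$ with $b\triangleq|w|=e^{\bar q(\Delta+1)}\in(0,1)$, conjugation by the unitary diagonal matrix $\textup{diag}(1,e^{-\mathrm{i}\theta},e^{-2\mathrm{i}\theta})$ carries $\mx{P}$ to the real symmetric Toeplitz matrix $\mx{P}_b$ with first row $[\,1,\;b,\;b^2\,]$, so the two have identical spectra. Using the symmetry-adapted eigenvectors $[1,0,-1]^T$ and $[1,x,1]^T$ of $\mx{P}_b$ — the second forcing $x^2+bx-2=0$ — I would obtain the eigenvalues $1-b^2$ and $1+\tfrac{b^2}{2}\pm\tfrac{b}{2}\sqrt{b^2+8}$, whose minimum is $g(b)\triangleq\tfrac12\big(2+b^2-b\sqrt{b^2+8}\big)$. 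Two elementary checks finish this step: (i) $g$ is strictly decreasing on $(0,1)$, because $g'(b)=b-(b^2+4)/\sqrt{b^2+8}$ and squaring gives $b^2(b^2+8)<(b^2+4)^2$; and (ii) $b=e^{\bar q(\Delta+1)}\le e^{2\bar q}=a$ since $\bar q<0$ and $\Delta\ge1$. Hence $\lambda_{\min}(\mx{P})=g(b)\ge g(a)$, and $g(a)$ is exactly the bound appearing in \eqref{eq:etacond}, so indeed $\lambda_{\min}(\mx{P})>\eta$.

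Given $\lambda_{\min}(\mx{P})>\eta$, the remaining manipulations are routine applications of Lemma \ref{lem:AB}. From $\mx{P}\succeq\eta\mx{I}_3$ and $\mx{C}\succeq\mx{0}$ one gets $(\mx{P}-\eta\mx{I}_3)\otimes\mx{C}\succeq\mx{0}$, hence $\mx{M}(\Delta)+\mx{\Sigma}_3=\mx{P}\otimes\mx{C}+\mx{I}_3\otimes\bs{\Sigma}\succeq\mx{I}_3\otimes(\eta\mx{C}+\bs{\Sigma})$, both sides positive definite; inverting via Lemma \ref{lem:AB} yields $\big(\mx{M}(\Delta)+\mx{\Sigma}_3\big)^{-1}\preceq\mx{I}_3\otimes(\eta\mx{C}+\bs{\Sigma})^{-1}$. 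Sandwiching this between $\mx{E}(\Delta,i)$ and $\mx{E}^H(\Delta,i)$, which preserves $\preceq$ (the congruence argument used in the proof of Lemma \ref{lem:AB}), and then collapsing the Kronecker products via the mixed-product rule gives $\mx{E}(\Delta,i)\big(\mx{M}(\Delta)+\mx{\Sigma}_3\big)^{-1}\mx{E}^H(\Delta,i)\preceq(\mx{v}\otimes\mx{C})\big(\mx{I}_3\otimes(\eta\mx{C}+\bs{\Sigma})^{-1}\big)(\mx{v}^H\otimes\mx{C})=\|\mx{v}\|^2\,\mx{C}(\eta\mx{C}+\bs{\Sigma})^{-1}\mx{C}=\rho(\Delta,i)\,\mx{C}(\eta\mx{C}+\bs{\Sigma})^{-1}\mx{C}$, which is \eqref{eq:mrel}. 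I expect the only genuine obstacle to be the spectral step: obtaining $\lambda_{\min}(\mx{P})$ in closed form and proving the monotonicity of $g$, as this is precisely what makes the single $\Delta$-free threshold of \eqref{eq:etacond} sufficient for every admissible frame length $\Delta$ and data-slot index $i$.
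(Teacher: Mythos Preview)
Your proof is correct and follows essentially the same route as the paper: Kronecker-factorize $\mx{E}(\Delta,i)$, $\mx{M}(\Delta)$ and $\mx{\Sigma}_3$, reduce the Loewner comparison to a scalar eigenvalue bound on the $3\times3$ Toeplitz correlation matrix, show that its smallest eigenvalue has the closed form $g(b)=\tfrac12\big(2+b^{2}-b\sqrt{b^{2}+8}\big)$, use monotonicity of $g$ together with $b=e^{\bar q(\Delta+1)}\le e^{2\bar q}=a$ to obtain the uniform threshold in \eqref{eq:etacond}, and finish with the inversion and congruence steps of Lemma~\ref{lem:AB}. Your treatment is in fact a bit more explicit than the paper's --- you handle the complex phase by unitary conjugation and track the $\Delta$-dependence of the parameter $b$ cleanly --- but the argument is the same.
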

\begin{proof}
The proof is in Appendix \ref{Sec:L4}.
\end{proof}

Having prepared with Lemma \ref{lem:AB} and Lemma \ref{lem:beta}, we can prove the (a), (b) and (c) inequalities in \eqref{eq:gammabound} by Lemma \ref{lem:det} ((a) part) and Lemma \ref{lem:ineq} ((b) and (c) parts) as follows.
\begin{lem}
\label{lem:det}
The deterministic equivalent \ac{SINR} of the tagged user satisfies
\begin{align}
& \bar{\gamma}(\Delta, i) \leq   \textup{tr}\left( \bs{\Phi}(\Delta, i) \bs{\beta}(\Delta, i)^{-1} \right).
\nonumber
\end{align}
\end{lem}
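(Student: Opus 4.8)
The plan is to exploit Proposition~\ref{Prop:SINR}, which expresses $\bar{\gamma}(\Delta,i) = \textup{tr}\big(\bs{\Phi}(\Delta,i)\mx{T}(\Delta,i)\big)$ with $\mx{T}(\Delta,i) = \big(\sum_{m=2}^{K}\bs{\Phi}_m(\Delta,i)/(1+\delta_m(\Delta,i)) + \bs{\beta}(\Delta,i)\big)^{-1}$, and then to compare $\mx{T}(\Delta,i)$ with $\bs{\beta}(\Delta,i)^{-1}$ in the positive semidefinite order, finishing with the trace-monotonicity facts already collected in Lemma~\ref{lem:AB}. First I would recall that the scalars $\delta_m(\Delta,i)$ produced by the deterministic-equivalent result of \cite{Wagner:2012} (invoked in the proof of Proposition~\ref{Prop:SINR}) are the unique \emph{nonnegative} solutions of the fixed-point system \eqref{eq:hoydis3}, and that $\bs{\beta}(\Delta,i) = \sum_{k=1}^{K}\alpha_k^2 P_k \mx{Z}_k(\Delta,i) + \sigma_d^2 \mx{I}_{N_r} \succ 0$, because each channel-estimation-error covariance $\mx{Z}_k(\Delta,i)$ is positive semidefinite and $\sigma_d^2>0$; hence all the inverses below are well defined.

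Next I would observe that each $\bs{\Phi}_m(\Delta,i)$ is a covariance matrix of $\mx{b}_m(\Delta,i)$ (c.f.~\eqref{eq:Phi}), hence positive semidefinite, and that $1+\delta_m(\Delta,i)>0$. Therefore $\sum_{m=2}^{K}\bs{\Phi}_m(\Delta,i)/(1+\delta_m(\Delta,i))$ is positive semidefinite, which gives
\begin{align*}
\mx{T}(\Delta,i)^{-1} = \sum_{m=2}^{K}\frac{\bs{\Phi}_m(\Delta,i)}{1+\delta_m(\Delta,i)} + \bs{\beta}(\Delta,i) \;\succeq\; \bs{\beta}(\Delta,i).
\end{align*}
Applying \eqref{eq:ABinv} of Lemma~\ref{lem:AB} to this relation yields $\mx{T}(\Delta,i) \preceq \bs{\beta}(\Delta,i)^{-1}$.

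Finally, since $\bs{\Phi}(\Delta,i)$ is positive semidefinite, I would invoke \eqref{eq:AB1} of Lemma~\ref{lem:AB} with $\mx{A} = \mx{T}(\Delta,i)$, $\mx{B} = \bs{\beta}(\Delta,i)^{-1}$ and $\mx{C} = \bs{\Phi}(\Delta,i)$, together with the cyclic property of the trace, to conclude
\begin{align*}
\bar{\gamma}(\Delta,i) = \textup{tr}\big(\bs{\Phi}(\Delta,i)\mx{T}(\Delta,i)\big) \leq \textup{tr}\big(\bs{\Phi}(\Delta,i)\bs{\beta}(\Delta,i)^{-1}\big),
\end{align*}
which is the assertion of the lemma. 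The only delicate points are the nonnegativity of the $\delta_m(\Delta,i)$ and the invertibility of $\bs{\beta}(\Delta,i)$; the former is part of the statement of the deterministic-equivalent theorem in \cite{Wagner:2012}, while the latter follows from the strictly positive noise floor $\sigma_d^2\mx{I}_{N_r}$. Beyond this bookkeeping, no substantive obstacle remains, so I expect the proof to be short.
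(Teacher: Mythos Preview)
Your argument is correct. It differs from the paper's proof in the starting representation of $\bar{\gamma}(\Delta,i)$: the paper works with the expectation form \eqref{eq:gammaB},
\[
\bar{\gamma}(\Delta,i)=\mathds{E}\!\left[\textup{tr}\!\left(\bs{\Phi}(\Delta,i)\Big(\textstyle\sum_{l=2}^{K}\mx{b}_l\mx{b}_l^H+\bs{\beta}(\Delta,i)\Big)^{-1}\right)\right],
\]
and bounds the integrand sample-by-sample using $\sum_{l}\mx{v}_l\mx{v}_l^H\succeq 0$ together with Lemma~\ref{lem:AB}. You instead stay with the deterministic-equivalent fixed-point expression of Proposition~\ref{Prop:SINR} and use $\sum_{m}\bs{\Phi}_m/(1+\delta_m)\succeq 0$. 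The core Loewner comparison is the same; your route is slightly cleaner in that it avoids the integral bookkeeping, at the price of appealing to the nonnegativity of the $\delta_m$ from \cite{Wagner:2012}, whereas the paper's argument does not need that fact. One cosmetic remark: Lemma~\ref{lem:AB} is stated for $\mx{C}$ positive definite, while you only have $\bs{\Phi}(\Delta,i)\succeq 0$; the inequality \eqref{eq:AB1} of course extends to the semidefinite case (and the paper's own proof implicitly uses the same extension), so this is harmless.
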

\begin{proof}
The proof is in Appendix \ref{Sec:L5}.
\end{proof}

\begin{lem}
\label{lem:ineq}
When the conditions of Theorem \ref{thm:upperbound} hold, we have
\begin{align}
 \textup{tr}\left( \bs{\Phi}(\Delta, i) \bs{\beta}(\Delta, i)^{-1} \right) &\leq
\textup{tr}\left( \bs{\Phi}^{(u)}(\Delta, i) \bs{\beta}(\Delta, i)^{-1} \right)
\label{eq:gammabound1}
\\
  \textup{tr}\left( \bs{\Phi}^{(u)}(\Delta, i) \bs{\beta}(\Delta, i)^{-1} \right)
&\leq \bar{\gamma}^{(u)}(\Delta, i).
\label{eq:gammabound2}
\end{align}
\end{lem}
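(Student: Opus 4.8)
The plan is to read off both inequalities from the Löwner-order bound established in Lemma~\ref{lem:beta} together with the trace-monotonicity facts collected in Lemma~\ref{lem:AB}; the only real bookkeeping is to track in which ``direction'' each bound gets applied.

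For \eqref{eq:gammabound1} I would first put $\bs{\Phi}(\Delta,i)$ into the shape of the left-hand side of \eqref{eq:mrel}: combining \eqref{eq:Phi} with \eqref{eq:Z} gives $\bs{\Phi}(\Delta,i)=\alpha^2 P(i)\,\mx{E}(\Delta,i)\big(\mx{M}(\Delta)+\mx{\Sigma}_3\big)^{-1}\mx{E}(\Delta,i)^H$, while by \eqref{eq:phiu} we have $\bs{\Phi}^{(u)}(\Delta,i)=\alpha^2 P(i)\,\rho(\Delta,i)\,\mx{C}\big(\eta\mx{C}+\bs{\Sigma}\big)^{-1}\mx{C}$. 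Multiplying the inequality \eqref{eq:mrel} of Lemma~\ref{lem:beta} by the positive scalar $\alpha^2 P(i)$ therefore gives $\bs{\Phi}(\Delta,i)\preceq\bs{\Phi}^{(u)}(\Delta,i)$. Since $\bs{\beta}(\Delta,i)$ is $\sigma_d^2\mx{I}_{N_r}$ plus positive semidefinite terms, $\bs{\beta}(\Delta,i)^{-1}\succ0$, so \eqref{eq:AB1} of Lemma~\ref{lem:AB}, applied with $\mx{A}=\bs{\Phi}(\Delta,i)$, $\mx{B}=\bs{\Phi}^{(u)}(\Delta,i)$, $\mx{C}=\bs{\beta}(\Delta,i)^{-1}$, gives \eqref{eq:gammabound1} at once.

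For \eqref{eq:gammabound2} the same bound is used once per user and then pushed through a matrix inversion. The key observation is that Lemma~\ref{lem:beta} holds verbatim for every co-scheduled user $l$ (nothing in its proof is specific to the tagged user), so subtracting the user-$l$ version of \eqref{eq:mrel} from $\mx{C}_l$ and using \eqref{eq:Z}--\eqref{eq:zu} yields $\mx{Z}^{(u)}_l(\Delta,i)\preceq\mx{Z}_l(\Delta,i)$ for all $l=1,\dots,K$. Summing over $l$ with the nonnegative weights $\alpha_l^2 P_l$ and adding $\sigma_d^2\mx{I}_{N_r}$ shows that the interference-plus-noise matrix appearing inside $\bar{\gamma}^{(u)}$ in \eqref{eq:upper} is Löwner-dominated by $\bs{\beta}(\Delta,i)$. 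I would then invoke the order-reversing inequality \eqref{eq:AB2} of Lemma~\ref{lem:AB} with $\mx{C}=\bs{\Phi}^{(u)}(\Delta,i)\succeq0$, which turns this domination into $\textup{tr}\big(\bs{\Phi}^{(u)}(\Delta,i)\,\bs{\beta}(\Delta,i)^{-1}\big)\le\bar{\gamma}^{(u)}(\Delta,i)$, i.e., \eqref{eq:gammabound2}.

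The step that will need the most care is the second one. Beyond confirming that Lemma~\ref{lem:beta} transfers to all users, one must check that the hypotheses of Lemma~\ref{lem:AB} are actually satisfied --- in particular that the upper-bound interference-plus-noise matrix $\sum_{l=1}^K\alpha_l^2 P_l\,\mx{Z}^{(u)}_l(\Delta,i)+\sigma_d^2\mx{I}_{N_r}$ is positive definite, so that its inverse in \eqref{eq:AB2} makes sense. Because $\mx{Z}^{(u)}_l(\Delta,i)$ need not itself be positive semidefinite, this positivity is precisely the place where the argument leans on the admissible range \eqref{eq:etacond} of $\eta$ (and on the operating regime in which $\bar{\gamma}^{(u)}$ is well defined in \eqref{eq:upper}); once that is in hand, the rest is a mechanical application of the two preceding lemmas.
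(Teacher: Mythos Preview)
Your proposal is correct and follows precisely the paper's approach: derive $\bs{\Phi}(\Delta,i)\preceq\bs{\Phi}^{(u)}(\Delta,i)$ and $\mx{Z}_l(\Delta,i)\succeq\mx{Z}^{(u)}_l(\Delta,i)$ from Lemma~\ref{lem:beta}, then feed these into the trace inequalities of Lemma~\ref{lem:AB}. You are in fact more careful than the paper's two-line proof in spelling out that \eqref{eq:gammabound2} requires the $\mx{Z}$-bound for \emph{every} user $l$ appearing in $\bs{\beta}(\Delta,i)$, and in flagging the positive-definiteness hypothesis needed to invoke \eqref{eq:AB2}.
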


\begin{proof}
When the conditions of Theorem \ref{thm:upperbound} hold,
Lemma \ref{lem:beta} implies that
 $\bs{\Phi}(\Delta, i)\preceq \bs{\Phi}^{(u)}(\Delta, i)$
 and
$\mx{Z}(\Delta, i) \succeq \mx{Z}^{(u)}(\Delta, i)$.
Using the first relation and 
the Lemma \ref{lem:AB}
gives \eqref{eq:gammabound1},
while using the second relation and Lemma \ref{lem:AB} gives \eqref{eq:gammabound2}.
\end{proof}


\subsection{Useful Properties of the Upper Bounds on the Deterministic Equivalent \ac{SINR} and Overall System Spectral Efficiency}

Theorem \ref{thm:upperbound} is useful, because it establishes an upper bound,
denoted by $\bar{\gamma}^{(u)}(\Delta, i)$,
of the deterministic equivalent of the \ac{SINR},
$\bar{\gamma}(\Delta, i)$.

To use the $\bar{\gamma}^{(u)}(\Delta, i)$ upper bound for limiting the search space for an optimal $\bar{\gamma}(\Delta, i)$ in Section \ref{Sec:Alg}, we need the following properties of the upper bound.


\begin{prop}
\label{UpperB1}
The $\bar{\gamma}^{(u)}(\Delta, i)$ upper bound has the following properties:
$\partial \bar{\gamma}^{(u)}(\Delta, i) / \partial \rho(\Delta, i) \geq 0$ and $\rho(\Delta, i) \rightarrow 0 \Rightarrow \bar{\gamma}^{(u)}(\Delta, i) \rightarrow 0$.
\end{prop}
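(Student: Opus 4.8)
The plan is to make the dependence of $\bar{\gamma}^{(u)}(\Delta,i)$ on $\rho(\Delta,i)$ completely explicit, after which both assertions are short. Write $\rho\triangleq\rho(\Delta,i)$ and view the users' decay factors as this common $\rho$. By \eqref{eq:phiu}, $\bs{\Phi}^{(u)}(\Delta,i)=\alpha^2 P\,\rho\,\mx{W}$ with the $\rho$-independent matrix $\mx{W}\triangleq\mx{C}(\eta\mx{C}+\bs{\Sigma})^{-1}\mx{C}$, which is positive definite, being of the form $\mx{C}\,\mx{Y}\,\mx{C}$ with $\mx{C}\succ\mx{0}$ and $\mx{Y}=(\eta\mx{C}+\bs{\Sigma})^{-1}\succ\mx{0}$. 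By \eqref{eq:zu}, each $\mx{Z}^{(u)}_l(\Delta,i)=\mx{C}_l-\rho\,\mx{C}_l(\eta\mx{C}_l+\bs{\Sigma}_l)^{-1}\mx{C}_l$ is affine in $\rho$ with negative semidefinite slope, so the matrix inverted in \eqref{eq:upper} is $\mx{X}(\rho)\triangleq\sum_{l=1}^K\alpha_l^2 P_l\mx{Z}^{(u)}_l(\Delta,i)+\sigma_d^2\mx{I}_{N_r}=\mx{B}_0-\rho\,\mx{V}$, where $\mx{B}_0\triangleq\mx{X}(0)=\sum_{l=1}^K\alpha_l^2 P_l\mx{C}_l+\sigma_d^2\mx{I}_{N_r}\succ\mx{0}$ and $\mx{V}\triangleq\sum_{l=1}^K\alpha_l^2 P_l\mx{C}_l(\eta\mx{C}_l+\bs{\Sigma}_l)^{-1}\mx{C}_l\succeq\mx{0}$ are $\rho$-independent. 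Hence
\begin{align*}
\bar{\gamma}^{(u)}(\Delta,i)=\alpha^2 P\,\rho\;\textup{tr}\!\big(\mx{W}\,\mx{X}(\rho)^{-1}\big).
\end{align*}
Because $\bar{\gamma}^{(u)}(\Delta,i)$ is the well-defined quantity of Theorem \ref{thm:upperbound}, $\mx{X}\big(\rho(\Delta,i)\big)\succ\mx{0}$; since moreover $\mx{X}(0)=\mx{B}_0\succ\mx{0}$ and $\{\rho:\lambda_{\min}(\mx{X}(\rho))>0\}$ is open, $\mx{X}(\rho)^{-1}$ is a smooth, positive definite matrix function of $\rho$ on a neighbourhood of both $0$ and $\rho(\Delta,i)$ — this positive definiteness is the only delicate point, and I come back to it at the end.

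Granting that, the limit property is immediate: as $\rho\to0$ the scalar prefactor $\alpha^2 P\,\rho\to0$ while $\textup{tr}\big(\mx{W}\,\mx{X}(\rho)^{-1}\big)\to\textup{tr}\big(\mx{W}\,\mx{B}_0^{-1}\big)$, a finite constant, so $\bar{\gamma}^{(u)}(\Delta,i)\to0$. For the monotonicity I would differentiate the product using $\frac{d}{d\rho}\mx{X}(\rho)^{-1}=\mx{X}(\rho)^{-1}\mx{V}\,\mx{X}(\rho)^{-1}$ (which follows from $\frac{d}{d\rho}\mx{X}(\rho)=-\mx{V}$), obtaining
\begin{align*}
\frac{\partial\bar{\gamma}^{(u)}(\Delta,i)}{\partial\rho}=\alpha^2 P\Big[\textup{tr}\!\big(\mx{W}\,\mx{X}(\rho)^{-1}\big)+\rho\;\textup{tr}\!\big(\mx{W}\,\mx{X}(\rho)^{-1}\mx{V}\,\mx{X}(\rho)^{-1}\big)\Big].
\end{align*}
Both summands are nonnegative: the first since $\mx{W}\succ\mx{0}$ and $\mx{X}(\rho)^{-1}\succ\mx{0}$ make the trace of their product nonnegative (Lemma \ref{lem:AB}, \eqref{eq:AB1} with $\mx{0}\preceq\mx{W}$); the second since cyclicity gives $\textup{tr}\big(\mx{W}\,\mx{X}(\rho)^{-1}\mx{V}\,\mx{X}(\rho)^{-1}\big)=\textup{tr}\big(\mx{V}\,(\mx{X}(\rho)^{-1}\mx{W}\,\mx{X}(\rho)^{-1})\big)$ and $\mx{X}(\rho)^{-1}\mx{W}\,\mx{X}(\rho)^{-1}$, a congruence of $\mx{W}\succ\mx{0}$, is positive definite, so again the trace is nonnegative; and $\rho\geq0$. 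Thus $\partial\bar{\gamma}^{(u)}(\Delta,i)/\partial\rho\geq0$. A calculus-free variant is to note that for $0\le\rho_1\le\rho_2$ in the admissible range $\bs{\Phi}^{(u)}$ is nondecreasing while $\mx{X}(\rho_1)\succeq\mx{X}(\rho_2)$ gives $\mx{X}(\rho_1)^{-1}\preceq\mx{X}(\rho_2)^{-1}$ by \eqref{eq:ABinv}; chaining the trace monotonicity \eqref{eq:AB1} through the numerator and the inverse shows $\bar{\gamma}^{(u)}(\Delta,i)$ is no larger at $\rho_1$ than at $\rho_2$.

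The main obstacle — the one step that is not routine bookkeeping — is justifying $\mx{X}(\rho)=\sum_{l=1}^K\alpha_l^2 P_l\mx{Z}^{(u)}_l(\Delta,i)+\sigma_d^2\mx{I}_{N_r}\succ\mx{0}$ on the relevant range of $\rho$, since this underpins the inverse, its derivative, and the trace positivities above. This is exactly the regime in which $\bar{\gamma}^{(u)}$ is meaningful as a finite upper bound in Theorem \ref{thm:upperbound}: $\sigma_d^2\mx{I}_{N_r}\succ\mx{0}$, and under \eqref{eq:etacond} each $\mx{Z}^{(u)}_l(\Delta,i)\succeq\mx{0}$ whenever $\rho\,\mx{C}_l(\eta\mx{C}_l+\bs{\Sigma}_l)^{-1}\mx{C}_l\preceq\mx{C}_l$, which holds in the operating regime of interest; near $\rho=0$ positive definiteness is automatic from $\mx{X}(0)=\mx{B}_0\succ\mx{0}$. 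With this in hand the remaining ingredients — the product rule, the derivative of a matrix inverse, and the nonnegativity of the trace of a product of positive semidefinite matrices (already packaged in Lemma \ref{lem:AB}) — are routine.
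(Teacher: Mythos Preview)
Your proof is correct and follows the same underlying idea as the paper --- make the dependence of $\bs{\Phi}^{(u)}$ and $\mx{Z}^{(u)}_l$ on $\rho$ explicit and then use monotonicity of the trace in the Loewner order. Your ``calculus-free variant'' (chaining the trace inequalities \eqref{eq:AB1} and \eqref{eq:ABinv} through the numerator and the inverse separately) is exactly the paper's argument in Appendix~\ref{Sec:P2}; if anything, you are slightly more careful than the paper, which writes the intermediate step as a Loewner ordering of the matrix \emph{products} $\bs{\Phi}^{(u)}\mx{X}^{-1}$ --- a relation that is not even well posed since such products need not be Hermitian --- whereas you go straight to the trace. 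Your primary route via explicit differentiation of $\rho\,\textup{tr}(\mx{W}\,\mx{X}(\rho)^{-1})$ is a genuinely different but equally short computation that avoids any ordering subtleties altogether; the only point the paper leaves just as implicit as you do is the positive definiteness of $\mx{X}(\rho)$ on the relevant range.
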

\begin{proof}
The proof is in Appendix \ref{Sec:P2}.
\end{proof}

Similarly, the SINR of user $k$ satisfies the inequality
$\bar{\gamma}_k(\Delta, i) \leq \bar{\gamma}_k^{(u)}(\Delta, i)$
where $\bar{\gamma}_k^{(u)}(\Delta, i)$ is defined in a similar way as $\bar{\gamma}_1^{(u)}(\Delta, i)$.
The $\bar{\gamma}_k^{(u)}(\Delta, i)$ upper bound is such that
$\partial \bar{\gamma}_k^{(u)}(\Delta, i) / \partial \rho_k(\Delta, i) \geq 0$ and $\rho_k(\Delta, i) \rightarrow 0 \Rightarrow \bar{\gamma}_k^{(u)}(\Delta, i) \rightarrow 0$.

Since our most important performance measure is the overall \ac{SE}, we are interested in establishing a corresponding upper bound on the overall \ac{SE} of the system.
To this end,
we introduce the related upper bound on the \ac{SE} of user $k$:
\begin{align}
\label{eq:SEku}
	\textup{SE}_k^{(u)}(\Delta) \triangleq \frac{ \sum_{i=1}^{\Delta}\log\big(1 + \bar{\gamma}^{(u)}_k(\Delta, i) \big) }{\Delta}.
\end{align}
and bound the aggregate average \ac{SE} of the \ac{MU-MIMO} system (c.f. \eqref{eq:multiopt}).
Notice that the denominator in $\textup{SE}_k^{(u)}$ is $\Delta$ while the denominator in $\textup{SE}_k$ is $\Delta + 1$.
This will be necessary for the monotonicity property in Proposition \ref{UpperB2}.
\begin{prop}
\label{UpperB2}
	\begin{align}
	\label{eq:SEupper}
	\textup{SE}^{(u)}(\Delta) \triangleq \sum_{k=1}^K \textup{SE}_k^{(u)}(\Delta)
	\geq \textup{SE}(\Delta),
	\end{align}
	and $\textup{SE}^{(u)}(\Delta)$ decreases with $\Delta$ and approaches $0$ when $\Delta$ approaches infinity.
\end{prop}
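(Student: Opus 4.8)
The plan is to prove the two assertions separately: first the inequality $\textup{SE}^{(u)}(\Delta) \geq \textup{SE}(\Delta)$, then the monotonicity and the limit. For the inequality, I would start from the definitions in \eqref{eq:multiopt} and \eqref{eq:SEku}. Since Theorem \ref{thm:upperbound} (and its stated extension to an arbitrary user $k$) gives $\bar{\gamma}_k(\Delta, i) \leq \bar{\gamma}^{(u)}_k(\Delta, i)$, and $\log(1+\cdot)$ is monotone increasing, we get $\textup{SE}_k(\Delta, i) = \log(1 + \bar{\gamma}_k(\Delta, i)) \leq \log(1 + \bar{\gamma}^{(u)}_k(\Delta, i))$ termwise. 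Summing over $i = 1, \ldots, \Delta$ and noting that the left side is divided by $\Delta + 1$ while the right side is divided by the smaller quantity $\Delta$, we obtain $\sum_{i=1}^{\Delta}\textup{SE}_k(\Delta,i) / (\Delta+1) \leq \textup{SE}^{(u)}_k(\Delta)$; summing over $k = 1, \ldots, K$ yields $\textup{SE}(\Delta) \leq \textup{SE}^{(u)}(\Delta)$. This is the routine part.

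For the monotonicity in $\Delta$, the key observation is that $\textup{SE}^{(u)}_k(\Delta)$ is an \emph{average} of the $\Delta$ terms $\log(1 + \bar{\gamma}^{(u)}_k(\Delta, i))$ over $i = 1, \ldots, \Delta$, so it suffices to control how each such term behaves. By Proposition \ref{UpperB1}, $\bar{\gamma}^{(u)}_k(\Delta, i)$ is nondecreasing in $\rho_k(\Delta, i)$, so I would analyze $\rho_k(\Delta, i) = e^{2\bar{q}_k(\Delta+1+i)} + e^{2\bar{q}_k i} + e^{2\bar{q}_k(\Delta+1-i)}$ from \eqref{eq:rho}. Since $\bar q_k < 0$, the first and third terms decay in $\Delta$ for each fixed $i$ in the valid range; the second term $e^{2\bar q_k i}$ does not depend on $\Delta$, but as $\Delta$ grows the index $i$ ranges over a larger set and one must argue that the average of the per-slot SINRs still decreases. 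I expect the cleanest route is: (i) show $\max_{1 \leq i \leq \Delta}\rho_k(\Delta,i) \to 0$ as $\Delta \to \infty$ — indeed $\rho_k(\Delta,i) \le 3 e^{2\bar q_k \min(i, \Delta+1-i, \Delta+1+i)}$ and the relevant minimum grows with $\Delta$ — which together with the second half of Proposition \ref{UpperB1} ($\rho \to 0 \Rightarrow \bar\gamma^{(u)} \to 0$) forces every term, hence the average $\textup{SE}^{(u)}_k(\Delta)$, to $0$; and (ii) for strict monotonicity, bound $\textup{SE}^{(u)}_k(\Delta+1)$ against $\textup{SE}^{(u)}_k(\Delta)$ by comparing the averaged log-SINR terms slot by slot, using that $\rho_k(\Delta+1, i) \le \rho_k(\Delta, i)$ for the overlapping indices and that introducing the extra slot $i = \Delta+1$ only adds a term no larger than the others while enlarging the denominator.

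The main obstacle is step (ii): establishing genuine monotonic decrease rather than just the limit, because the averaging set changes with $\Delta$ and $\rho_k(\Delta, i)$ is not monotone in $i$ (it is largest near the endpoints $i \approx 1$ and $i \approx \Delta$ and smallest in the middle). A careful argument must show that replacing the length-$\Delta$ average by the length-$(\Delta+1)$ average cannot increase the value — one natural device is to pair each index $i$ in the $\Delta$-frame with the corresponding index in the $(\Delta+1)$-frame, use $e^{2\bar q_k(\Delta+2\pm i)} \le e^{2\bar q_k(\Delta+1\pm i)}$ to get a termwise SINR bound via Proposition \ref{UpperB1}, and then invoke the elementary fact that prepending/appending a term no larger than the current average to an average cannot increase it. If the authors only claim the asymptotic behavior plus eventual monotonicity (which is what is needed to bound the search space in Section \ref{Sec:Alg}), then step (i) alone essentially suffices and the proof is short; the full slot-by-slot monotonicity, if required, is where the bookkeeping gets delicate.
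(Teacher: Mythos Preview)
Your treatment of the inequality $\textup{SE}^{(u)}(\Delta) \geq \textup{SE}(\Delta)$ is correct and matches the paper. Both step~(i) and step~(ii), however, contain genuine gaps.

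In step~(i), the claim $\max_{1 \le i \le \Delta} \rho_k(\Delta, i) \to 0$ is false: at $i = 1$ one has $\rho_k(\Delta, 1) \to e^{2\bar q_k} > 0$. Your own bound $\rho_k(\Delta,i) \le 3\, e^{2\bar q_k \min(i,\Delta+1-i)}$ shows why --- near the endpoints the minimum stays at~$1$ regardless of $\Delta$. The paper instead splits the average into edge slots (the first and last $\Delta/N$, uniformly bounded by a constant $A$ but with vanishing proportion $2/N$) and middle slots (where $\rho_k(\Delta,i) < 3e^{2\bar q_k \Delta/N} \to 0$), then lets $N$ and afterwards $\Delta$ grow. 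In step~(ii), the assertion that the extra slot $i=\Delta+1$ ``adds a term no larger than the others'' is also false: by the symmetry $\rho_k(\Delta+1,i)=\rho_k(\Delta+1,\Delta+2-i)$, the slot $i=\Delta+1$ ties $i=1$ for the \emph{largest} $\rho$-value, so appending it need not decrease the average. The paper's fix is to observe that each $\Delta$-frame slot dominates \emph{two} consecutive $(\Delta+1)$-frame slots, since both $\rho_k(\Delta+1,i) \le \rho_k(\Delta,i)$ and $\rho_k(\Delta+1,i+1) \le \rho_k(\Delta,i)$; one then \emph{removes} the smallest term from the $(\Delta+1)$-average (which can only raise it) and bounds each of the remaining $\Delta$ terms by a distinct $\Delta$-frame term, using the first inequality for indices below the removed position and the second (shifted) inequality for indices above it.
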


\begin{proof}
The proof is in Appendix \ref{Sec:P3}.
\end{proof}

\subsection{Summary}
This section first established an upper bound on the deterministic equivalent \ac{SINR} in Theorem \ref{thm:upperbound}. Next, Proposition \ref{UpperB1} and Proposition \ref{UpperB2} have stated some useful properties of this upper bound and a corresponding upper bound on the overall system spectral efficiency. Specifically, Proposition \ref{UpperB2} suggests that the upper bound on the spectral efficiency of the system is monotonically decreasing in $\Delta$ and tends to zero as $\Delta$ approaches infinity. As we will see in the next section, this property can be exploited to limit the search space for finding the optimal $\Delta$.

\section{A Heuristic Algorithm to Find the Optimum Pilot Power and Frame Size (Pilot Spacing)}
\label{Sec:Alg}
\subsection{A Heuristic Algorithm for Finding the Optimal $\Delta$}
In this section we build on the property of the system-wide spectral efficiency, as stated by Proposition \ref{UpperB2}, to develop a heuristic algorithm to find the optimal $\Delta$. While we cannot prove a convexity or non-convexity property of $\text{SE}(\Delta)$, we can utilize the fact that
$\text{SE}(\Delta) \leq \text{SE}^{(u)}(\Delta)$ as follows.
As Algorithm \ref{alg:Dynamic_Game}
 scans through the possible values of $\Delta$, it checks if the current best $\Delta$ (that is $\Delta_{\textup{opt}}$) is one less than the currently examined $\Delta$ (Line 17).
As it will be exemplified in Figure \ref{fig:6} in the numerical section, the key is to notice that the SE upper bound
determines the search space of the possible $\Delta$ values, where the associated SE can possibly exceed the currently found highest SE. Specifically, the search space can be limited to (Line 18):
\begin{align}
\Delta_{\textup{max}} &= \textup{SE}^{{(u)}^{-1}}(\textup{SE} _{\textup{$\Delta$}}),
\end{align}
where $\textup{SE}^{{(u)}^{-1}}$ denotes the inverse function of $\textup{SE}^{(u)}(.)$ and
$\textup{SE}_\Delta \triangleq \textup{SE}(\Delta)$ as calculated in \eqref{eq:multiopt}.

{\small
\begin{algorithm}[t!]
\DontPrintSemicolon
\caption{Optimum frame size algorithm using an SE upper bound}
\label{alg:Dynamic_Game}
\KwIn{
$\mx{Q}$, 
$\mx{C}$, $\mx{\Sigma}$, $\alpha^2$, $P_{\text{tot}}$ 
}
$\textup{SE}_{1}=\textup{SE}(1)$ using \eqref{eq:multiopt}, $\Delta_\text{max}={\textup{SE}^{(u)}}^{-1}(\textup{SE}_{1})$ \\
$\Delta=1$, $\Delta_{\textup{opt}} = \Delta_\text{max}$,
$\textup{SE}_{\textup{opt}}=\textup{SE}(\Delta_{\textup{opt}})$ using \eqref{eq:multiopt} \\
\While{$\Delta < \Delta_\textup{max}$\hspace{2mm}}{
\For{$k = 1 \ldots K$}{
\For{$i = 1 \ldots \Delta$\hspace{2mm}}{
Calculate $\mx{R}_k(i), \mx{R}_k(\Delta+1),$ \\
~~~$\mx{R}_k(\Delta+1 \pm i),\mx{R}_k(2\Delta+2)$ using \eqref{eq:autocorrk}\\
Calculate $\mx{E}_k(\Delta,i)$ using \eqref{eq:E}\\
Calculate $\mx{Z}_k(\Delta,i)$ using \eqref{eq:Z}\\
Calculate $\mx{\Phi}_k(\Delta,i)$ using \eqref{eq:Phi}\\
Calculate $\bs{\beta}_k(\Delta,i)$ using \eqref{eq:beta}\\
Calculate $\bar \gamma_k(\Delta,i)$ using
\eqref{eq:hoydis1}\\
Calculate $\text{SE}_k(\Delta,i)$ using \eqref{eq:SE}\\}}
$\textup{SE}_\Delta=\textup{SE}(\Delta)$ using \eqref{eq:multiopt} \\
\If{$\textup{SE}_\Delta > \textup{SE}_{\textup{opt}}$}{
    $\Delta_{\textup{opt}} = \Delta$, $\textup{SE}_{\textup{opt}}=\textup{SE}_\Delta$
    }
	\If{$\Delta_{\textup{opt}} = \Delta-1 $}
    {
    $\Delta_\text{max}={\textup{SE}^{(u)}}^{-1}(\textup{SE}_{\Delta})$
    }
$\Delta=\Delta+1$
}
\KwOut{$\Delta_\textup{opt}$}
\medskip
\end{algorithm}
}

\subsection{The Case of Independent and Identical Channel Coefficients}
\label{Sec:IID}

In the special case where the elements of the vector $\mx{h}(i)$ are independent stochastically identical stochastic processes, the covariance matrices become real multiples of the identity matrix
$\mx{C} \triangleq c \mx{I}_{N_r}$, $\bs{\Sigma} = s \mx{I}_{N_r}$, $\mx{R}(i) = r(i) \mx{I}_{N_r}$, $\mx{Z}(i)= z(i) \mx{I}_{N_r}$, $\bs{\Phi}(i) = \phi(i)\mx{I}_{N_r}$,
$\bs{\beta}(i) = \beta(i) \mx{I}_{N_r}$,
further more $\mx{E}(i) = \mx{e}(i) \otimes \mx{I}_{N_r}$,
with:
\begin{align}
s &\triangleq \frac{\sigma_p^2}{\alpha^2P_p \tau_p},\\
\label{eq:Rdiag}
r(i) &\triangleq ce^{q^* i},
\end{align}
\begin{align}
\label{eq:Ediag}
 \mx{e}(i) &\triangleq
\begin{bmatrix}r(\Delta+1+i)&r(i)&r(\Delta+1-i)\end{bmatrix}
\nonumber\\
&~~\cdot\begin{bmatrix}
c+s & r(\Delta+1) &r(2\Delta+2) \\
r^H(\Delta+1) &c+s &r(\Delta+1) \\
r^H(2\Delta+2) & r^H(\Delta+1) & c+s\\
\end{bmatrix}^{-1}
\end{align}
\begin{align}
\label{eq:Zdiag}
 z(i) &\triangleq
\left(c - \mx{e}(i)  \begin{bmatrix}
r^H(\Delta+1+i) \\
r^H(i) \\
r^H(\Delta+1-i) \\
\end{bmatrix}
\right),
\end{align}
\begin{align}
\label{eq:Phidiag}
 \phi(i)  &\triangleq \alpha^2P(i)(c-z(i)),
\end{align}
\begin{align}
\label{eq:Betadiag}
\beta(i) &\triangleq  \left(\sum_{k=1}^K \alpha_k^2 P_k  z_k(i) + \sigma^2_d\right).
\end{align}

In this special case, calculating the deterministic equivalent of the \ac{SINR} by Proposition \ref{Prop:SINR} simplifies to solving a set of scalar equations as stated in the following corollary.

\begin{cor}
\label{Cor:DetEq}
In this special case, the deterministic equivalent of the \textup{SINR} in slot $i$,
$\bar{\gamma}(i)$, can be obtained as the solution of the scalar equation
\begin{align}
\beta(i)  &=
\frac{N_r \phi(i)}{\bar{\gamma}(i)} - \sum_{k = 2}^K \frac{\phi_k(i)}{1 + \frac{\bar{\gamma}(i)\phi_k(i)}{\phi(i)}}.
\label{eq:diag_hoydis1}
\end{align}
\end{cor}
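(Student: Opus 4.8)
This corollary follows by directly specializing Proposition~\ref{Prop:SINR} to the present case, in which every covariance matrix entering \eqref{eq:hoydis1}--\eqref{eq:hoydis3} is a real scalar multiple of $\mx{I}_{N_r}$. First I would substitute $\bs{\Phi}_m(\Delta, i) = \phi_m(i)\,\mx{I}_{N_r}$ for $m = 1,\ldots,K$ (with $\phi_1(i)=\phi(i)$ for the tagged user, per our convention of dropping the subscript $k=1$) and $\bs{\beta}(\Delta, i) = \beta(i)\,\mx{I}_{N_r}$ into \eqref{eq:hoydis2}. A finite sum of scalar multiples of $\mx{I}_{N_r}$ is again such a multiple, and inversion preserves this structure, so $\mx{T}(\Delta, i)$ collapses to $t(i)\,\mx{I}_{N_r}$ with
\begin{align}
t(i) \triangleq \left( \sum_{m=2}^{K} \frac{\phi_m(i)}{1 + \delta_m(i)} + \beta(i) \right)^{-1}. \nonumber
\end{align}
Inserting this scalar form of $\mx{T}$ into \eqref{eq:hoydis3} and using $\textup{tr}(\mx{I}_{N_r}) = N_r$ turns the $K$-dimensional fixed-point system into the scalar relations $\delta_m(i) = N_r \phi_m(i)\, t(i)$, and the deterministic equivalent \ac{SINR} itself becomes $\bar{\gamma}(i) = \textup{tr}\big(\phi(i)\, t(i)\, \mx{I}_{N_r}\big) = N_r \phi(i)\, t(i) = \delta_1(i)$, consistent with the identity $\bar{\gamma} = \delta_1$ already used in the proof of Proposition~\ref{Prop:SINR}.

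Next I would eliminate the auxiliary unknowns $\delta_2(i),\ldots,\delta_K(i)$ in favour of $\bar{\gamma}(i)$. From $\bar{\gamma}(i) = N_r\phi(i)\, t(i)$ one gets $t(i) = \bar{\gamma}(i)/\big(N_r\phi(i)\big)$, hence $\delta_m(i) = N_r\phi_m(i)\, t(i) = \bar{\gamma}(i)\phi_m(i)/\phi(i)$. Substituting $1/t(i) = N_r\phi(i)/\bar{\gamma}(i)$ together with these expressions for $\delta_m(i)$ into the definition of $t(i)$ yields
\begin{align}
\frac{N_r \phi(i)}{\bar{\gamma}(i)} = \sum_{m=2}^{K} \frac{\phi_m(i)}{1 + \frac{\bar{\gamma}(i)\,\phi_m(i)}{\phi(i)}} + \beta(i), \nonumber
\end{align}
and rearranging gives exactly \eqref{eq:diag_hoydis1}.

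The argument is essentially bookkeeping, so there is no substantial obstacle; the one step that warrants care is the self-consistency of the scalar ansatz, i.e., verifying that replacing $\mx{T}(\Delta,i)$ by $t(i)\mx{I}_{N_r}$ does not discard solutions. This is immediate here, because under the structural assumptions the full $K$-equation system of Proposition~\ref{Prop:SINR} reduces exactly to the scalar system above: the solution $\{\delta_m(i)\}$ is unique by \cite{Wagner:2012} (the present case being a special instance), and that solution, rewritten via $\delta_m(i) = \bar{\gamma}(i)\phi_m(i)/\phi(i)$, is equivalent to a positive root of \eqref{eq:diag_hoydis1}; conversely a positive root of \eqref{eq:diag_hoydis1} recovers the $\delta_m(i)$. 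Existence and uniqueness of $\bar{\gamma}(i)$ are thus inherited from the matrix system. As a final sanity check I would note that setting $K=1$ makes the sum empty and reduces \eqref{eq:diag_hoydis1} to $\bar{\gamma}(i) = N_r\phi(i)/\beta(i)$, the expected interference-free limit.
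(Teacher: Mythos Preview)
Your proposal is correct and follows essentially the same route as the paper: specialize the fixed-point system \eqref{eq:hoydis3} to scalar multiples of the identity, obtain $\delta_m(i)=N_r\phi_m(i)\,t(i)$ and hence $\delta_m(i)=\bar{\gamma}(i)\phi_m(i)/\phi(i)$, and substitute back to reach \eqref{eq:diag_hoydis1}. The only extras you add are the uniqueness remark via \cite{Wagner:2012} and the $K=1$ sanity check, both of which are fine but not needed for the corollary as stated.
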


\begin{proof}
Since the matrices $\bs{\Phi}_k(i)$ and $\mx{Z}_k(i)$ are constant multiple of identity matrices, \eqref{eq:hoydis3} can then be rewritten as
\begin{align}
\delta_k(i) &= N_r \phi_k(i) \left( \sum_{l = 2}^{K} \frac{\phi_l(i)}{1 + \delta_l(i)} +
\beta(i) \right)^{-1}
\label{eq:diag_hoydis}
\end{align}
for $k = 1, \ldots, K$.
Using $\bar{\gamma}(i) = \delta_1(i)$ and comparing \eqref{eq:diag_hoydis} for different values of $k$ we get
\begin{align}
\delta_k(i) = \frac{ \phi_k(i)  }{  \phi_1(i)  } \delta_1(i) = \frac{ \phi_k(i)  }{  \phi_1(i) } \bar{\gamma}(i).
\label{eq:deltarealtion}
\end{align}
Substituting the rightmost expression of \eqref{eq:deltarealtion} into \eqref{eq:diag_hoydis} with $k = 1$ and rearranging gives the corollary.
\end{proof}

Notice that calculations inside the inner for loop of Algorithm \ref{alg:Dynamic_Game}, that is the
calculations in Lines 6-13 can be substituted by equations
\eqref{eq:Rdiag}, \eqref{eq:Ediag}, \eqref{eq:Zdiag}, \eqref{eq:Phidiag} and \eqref{eq:Betadiag}.

\section{Numerical Results}
\label{Sec:Num}

\begin{table}[ht]
\caption{System Parameters}
\vspace{1mm}
\label{tab:params}
\footnotesize
\begin{tabularx}{\columnwidth}{|X|X|}
		\hline
		\hline
		\textbf{Parameter}                     & \textbf{Value} \\
		\hline
		\hline
		Number of receive antennas at the \ac{BS} antennas  & $N_r=10, 100$  \\ \hline
		Path loss of the tagged MS               & $\alpha=90$ dB \\ \hline
        Frame size                              & $\Delta=2 \dots 50$ \\ \hline
		Pilot and data power levels             & $P_p=50...125$ mW; $P=125$ mW \\ \hline
        MIMO receivers                         & MMSE receiver given by \eqref{eq:Gstar2} \\ \hline
        Channel estimation                      & MMSE channel estimation given by Lemma \ref{lem:mmsechannel} \\ \hline
        Maximum Doppler frequency               & $f_D=50, 500, 1500$ Hz \\ \hline
        Slot duration ($T$)                     & $32\mu$s \\ \hline
        Number of users                         & $K=2$ \\ \hline
		\hline
\end{tabularx}
\end{table}

In this section, we consider a single cell of a \ac{MU-MIMO} ($K=2$) system with $N_r=10$ and $N_r=100$ receive antennas,
in which the wireless channel between the served \ac{MS} and the \ac{BS} is Rayleigh fading according to \eqref{eq:RayleighApprox}, which we approximate with \eqref{eq:r1}.

The \ac{MU-MIMO} case with greater number of users ($K>2$) gives similar results
albeit with somewhat lower \ac{SINR} values from the point of view of the tagged user.
The \ac{BS} estimates the state of the wireless channel based on
the properly (i.e. $\Delta \times T$) spaced the pilot signals using \ac{MMSE} channel estimation and interpolation according to Lemma \ref{lem:mmsechannel}, and uses \ac{MMSE}
symbol estimation employing the optimal linear receiver $\mx{G}^\star(i T)$ in each slot as given in \eqref{eq:Gstar2}.
Specifically, except for the results shown in Figure \ref{fig:7},
in each time slot $i=1 \dots \Delta$, the \ac{BS} uses one pilot signal transmitted by the \ac{MS} at the
beginning of the frame at time instance $i=0$ and one pilot sent at the beginning of the next frame at time
instance $i=\Delta+1$. We refer to these two pilot signals as sent "before" and "after" time slot $i$.
In practice, the \ac{BS} can store the received data symbols until it receives the pilot signal in slot
$i=\Delta+1$ before using an \ac{MMSE} interpolation of the channel states between $i=0$ and $i=\Delta+1$.
Furthermore,
we will assume that the \ac{BS} estimates
perfectly the autocorrelation function of the channel, including the associated maximum Doppler frequency
and, consequently, the characterizing zeroth order Bessel function.
The most important system parameters are listed in Table \ref{tab:params}.
Here we assume that the slot duration ($T$) corresponds to a symbol duration in
5G \ac{OFDM} systems using 122 MHz clock frequency, which can be used up to 20 GHz
carrier frequencies \cite{Zaidi:16}.
Note that the numerical results presented below are obtained by using the 
results on the deterministic equivalent of the \ac{SINR} and the corresponding average spectral efficiency.

\begin{figure}[t]
\begin{center}
\includegraphics[width=\hsize]{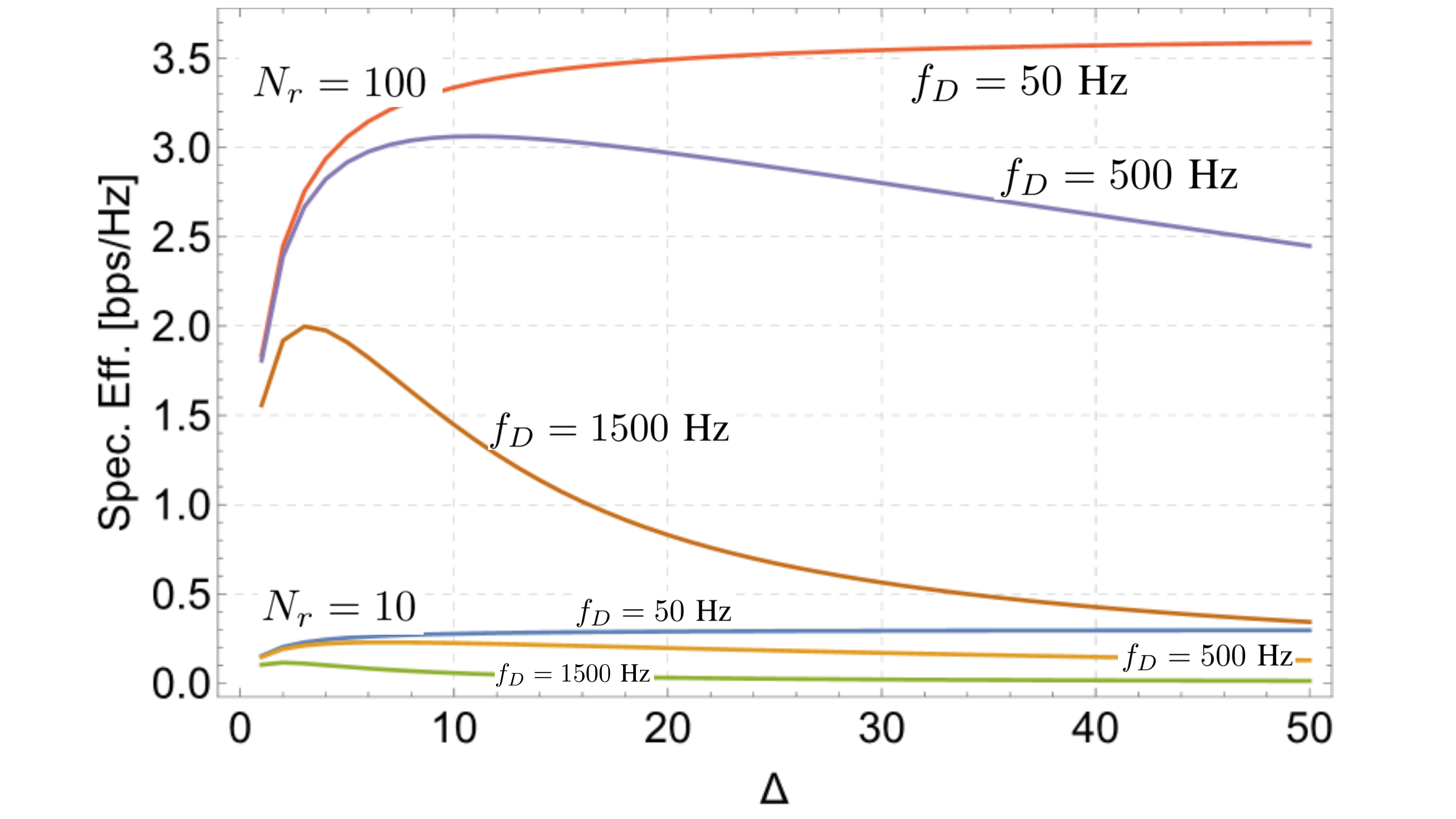}
\caption{
Spectral efficiency as a function of frame size ($\Delta$) with maximum Doppler frequency
$f_D=50, 500, 1500$ Hz with $N_r=10$ (lower three curves) and $N_r=100$ (upper three curves).
At higher maximum Doppler frequency, the optimum frame size is smaller than at low Doppler
frequency.
}
\label{fig:1}
\end{center}
\end{figure}

Figure \ref{fig:1} shows the achieved spectral efficiency averaged over the data slots $i=1\dots\Delta$,
that is averaged over the data slots of a frame of size $\Delta+1$.
Short frames imply that the pilot overhead
is relatively large, which results in poor spectral efficiency.
On the other hand, too large frames (that is when $\Delta$
is too large) make the channel estimation quality in the "middle" time slots poor, since for these time slots both
available channel estimates $\mx{\hat h}(0)$ and $\mx{\hat h}(\Delta+1)$ convey little useful information, especially
at high Doppler frequencies when the channel ages rapidly.
Indeed, as seen in Figure \ref{fig:1}, the frame size
has a large impact on the achievable spectral efficiency, suggesting that the optimum frame size depends critically on
the Doppler frequency.
As we can see, the spectral efficiency as a function of the frame size is in general neither monotone nor concave,
and is hence hard to optimize.

\begin{figure}[t]
\begin{center}
\includegraphics[width=\hsize]{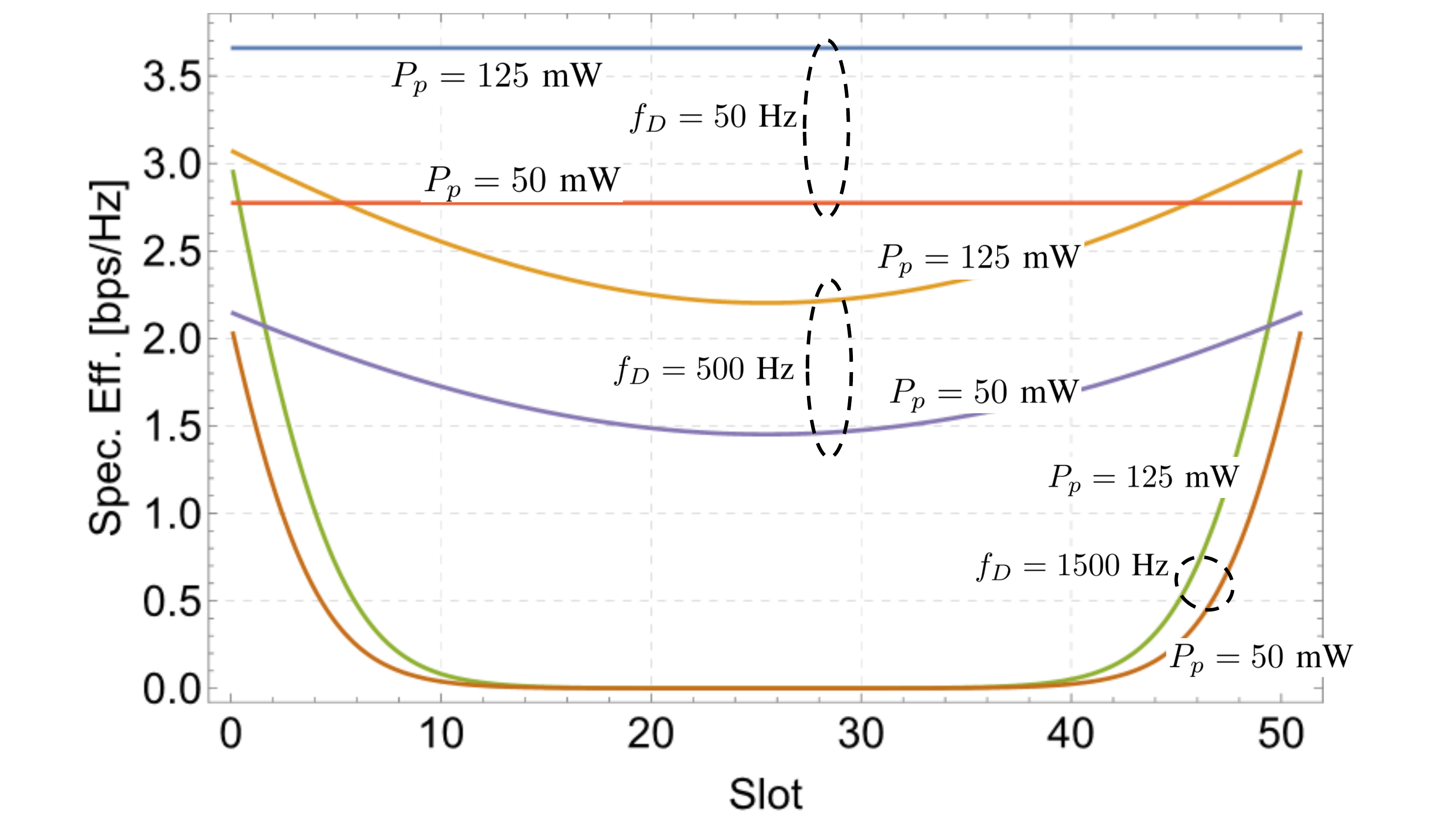}
\caption{
Spectral efficiency for each data slot $i=1 \dots \Delta$ when the frame size is kept fixed ($\Delta=50$).
At high maximum Doppler frequency, the spectral efficiency is low at the "middle" slot,
while at low maximum Doppler the spectral efficiency reaches its maximum at the middle slots.
}
\label{fig:2}
\end{center}
\end{figure}

Figure \ref{fig:2} shows the spectral efficiency for
each data slot $i=1 \dots \Delta$ within a frame of size $\Delta=50$.
At lower Doppler frequencies, that is when the channel fades relatively slowly,
the channel state information acquisition in the middle slots benefits from using the estimates at $i=0$ and $i=\Delta+1$,
and making an \ac{MMSE} interpolation of the channel coefficients as proposed in Lemma \ref{lem:mmsechannel}.	
However, at a high Doppler frequency, the channel state in the middle data slots are weakly
correlated with the channel estimates $\mx{\hat h}(0)$ and $\mx{\hat h}(\Delta+1)$,
which makes the \ac{MMSE} channel estimation error in Corollary \ref{cor:rmmse} large.
This insight suggests that in such cases, the optimum frame size
is much less than when the Doppler frequency is low.

\begin{figure}[t]
\begin{center}
\includegraphics[width=\hsize]{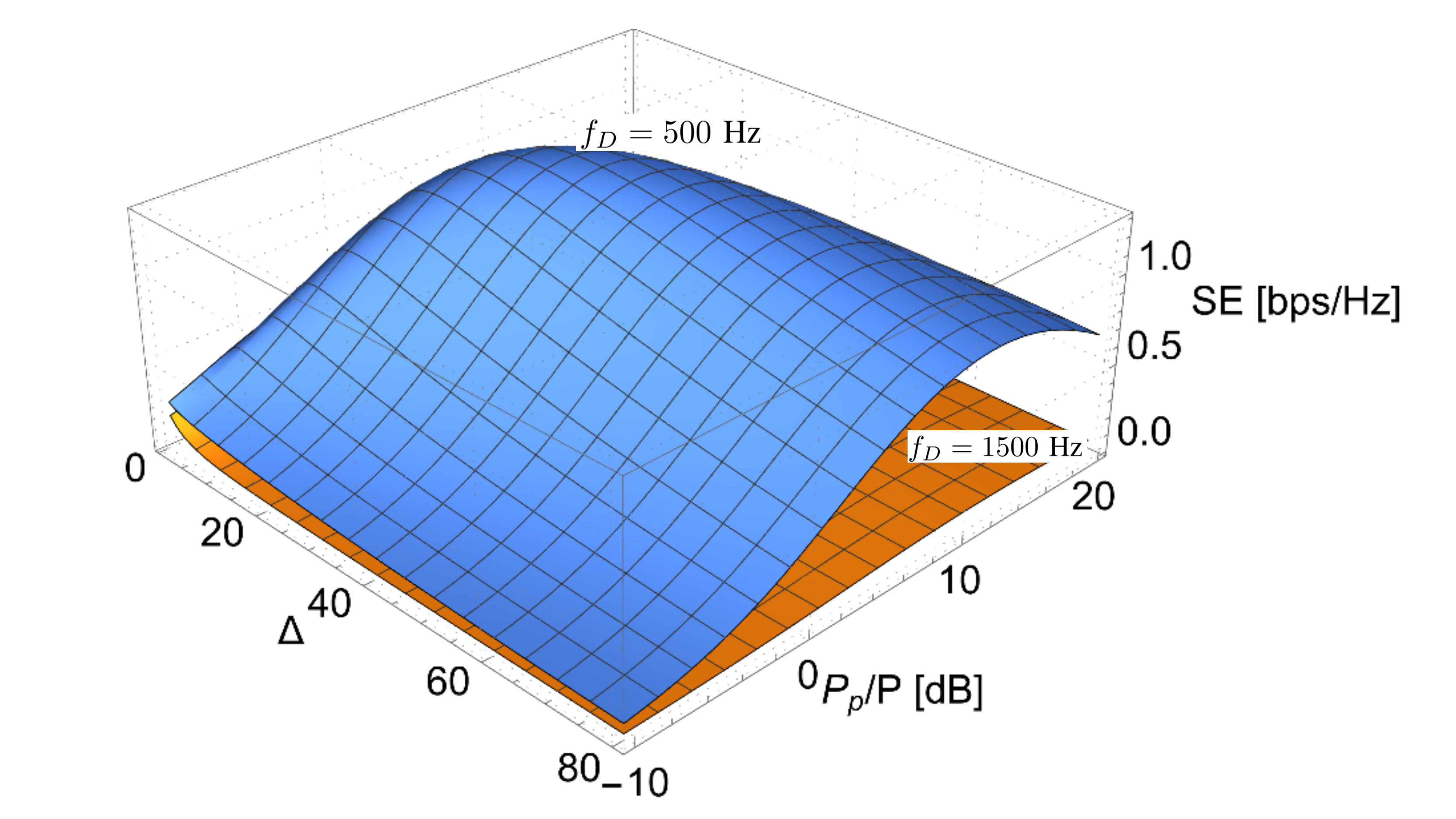}
\caption{
Spectral efficiency as a function of the pilot/data power ratio and the frame size
for maximum Doppler frequency $f_D=500$ Hz and $f_D=1500$ Hz when $N_r=10$.
In both cases, the spectral efficiency depends heavily on the employed pilot power
and pilot spacing (frame size).
}
\label{fig:3}
\end{center}
\end{figure}

The average spectral efficiency as a function of the pilot/data power ratio and the
frame size is shown in Figure \ref{fig:3}. This figure clearly shows that setting the
proper frame size and tuning the pilot/data power ratio are both important to maximize
the average spectral efficiency of the system. The optimal frame size and power
configuration are different for different Doppler frequencies, which in turn emphasizes
the importance of accurate Doppler frequency estimates.

\begin{figure}[t]
\begin{center}
\includegraphics[width=\hsize]{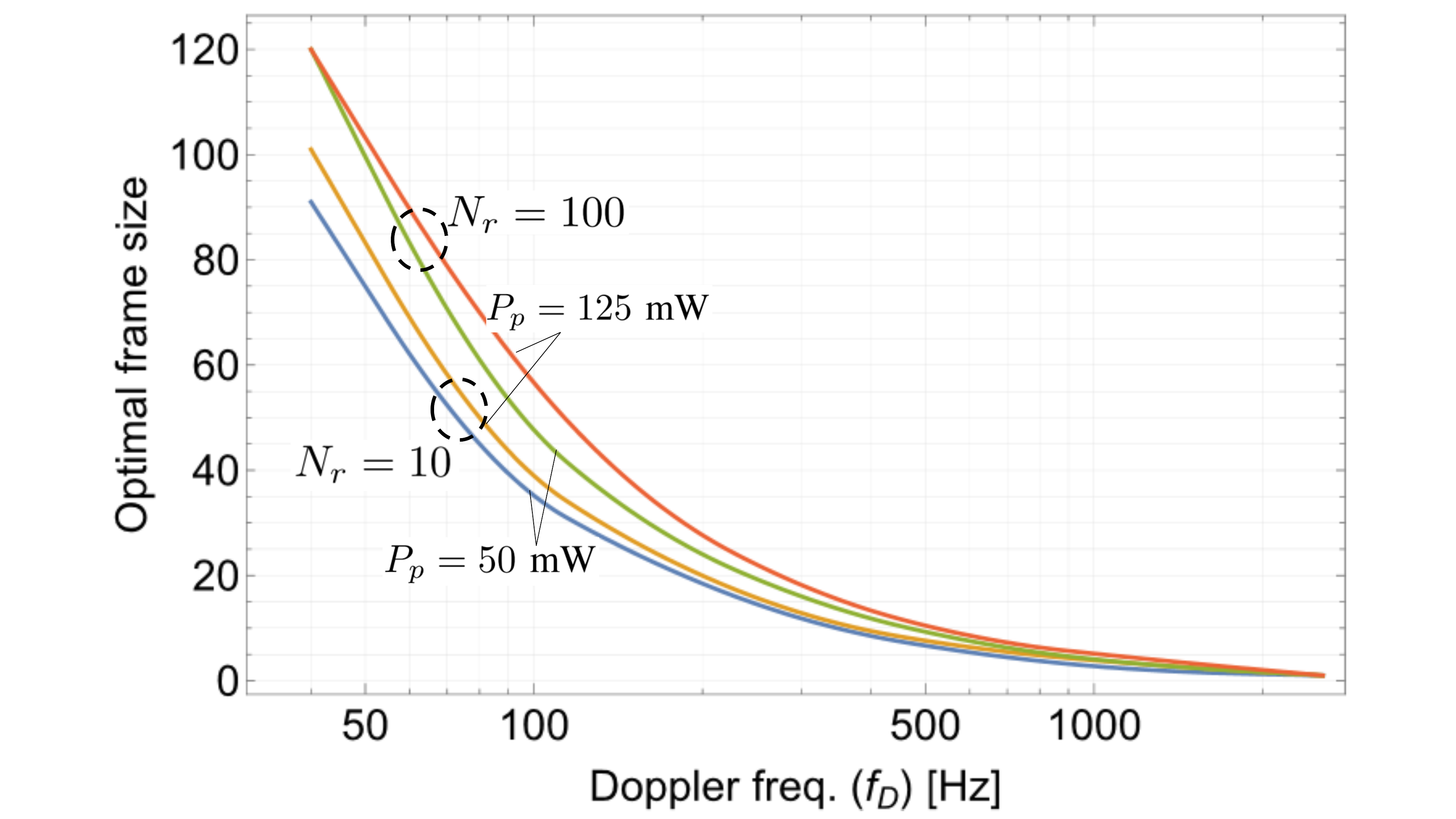}
\caption{
Optimal frame size as a function of the maximum Doppler frequency for different values of
the employed pilot power ($P_p=50$ mW and $P_p=125$ mW) when the BS is equipped with $N_r=10$
and $N_r=100$ receive antennas.
}
\label{fig:4}
\end{center}
\end{figure}

The optimal frame size as a function of the maximum Doppler frequency is shown in Figure \ref{fig:4}.
The optimal frame size decreases rapidly, as the Doppler effect increases. As this figure shows,
a much larger frame size is optimal when the number of antennas is high and the \ac{MS} uses high
pilot power to achieve a high pilot \ac{SNR}.

\begin{figure}[t]
\begin{center}
\includegraphics[width=\hsize]{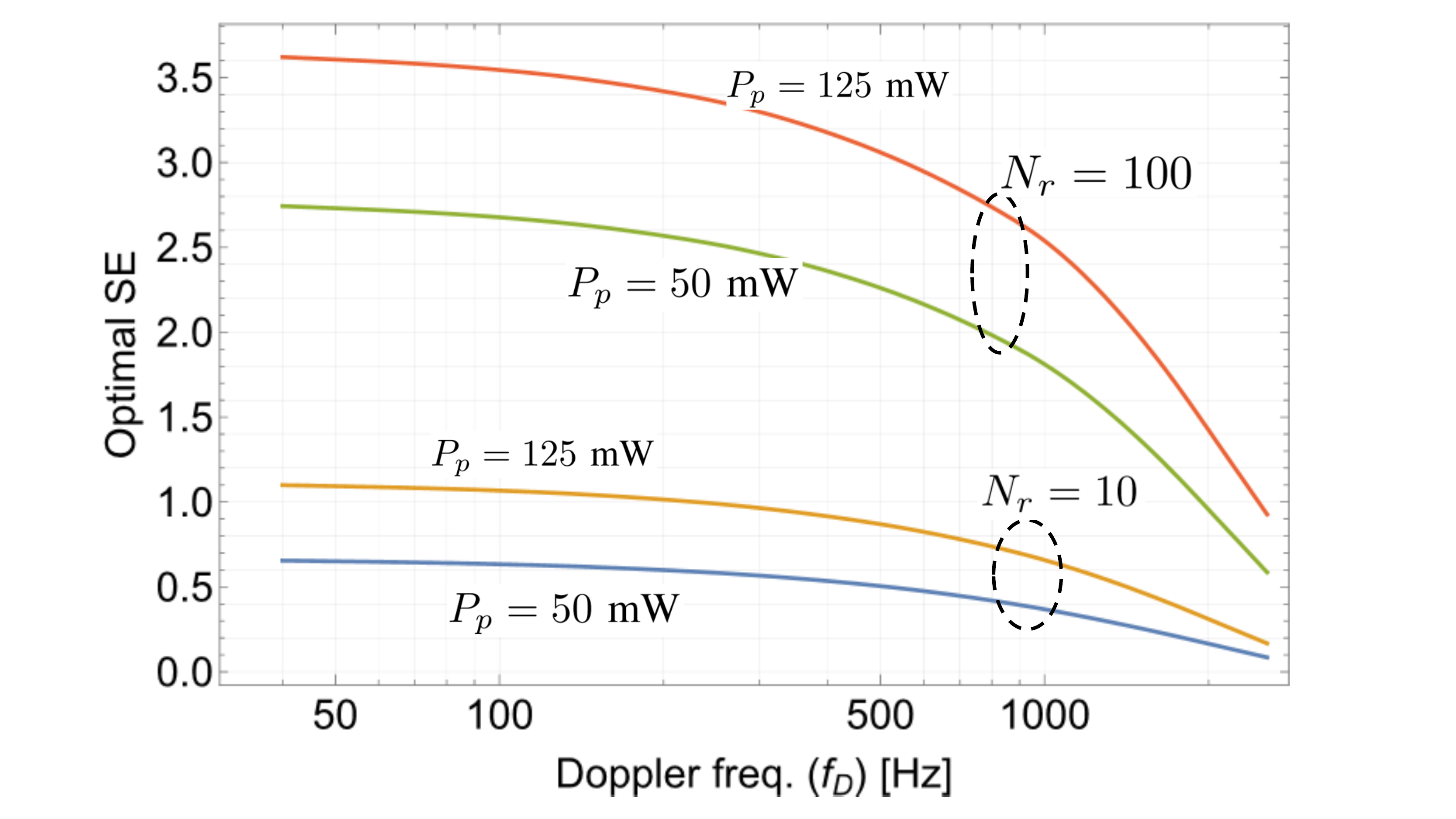}
\caption{
Optimal spectral efficiency as a function of the maximum Doppler frequency, that is the spectral
efficiency when using the optimal frame size as shown in Figure \ref{fig:4}.
}
\label{fig:5}
\end{center}
\end{figure}

Figure \ref{fig:5} shows the achieved spectral efficiency when the frame size is set optimally,
as a function of the maximum Doppler frequency. At $f_D=500$ Hz, for example, when the optimal
frame size is 8 (see also Figures \ref{fig:1} and \ref{fig:4}), the achieved spectral efficiency
when using $N_r=10$ antennas is a bit below 1 bps/Hz. We can see that setting the optimal frame
size is indeed important, because it helps to make the achievable spectral efficiency quite
robust with respect to even a significant increase in the Doppler frequency.

\begin{figure}[t]
\begin{center}
\includegraphics[width=\hsize]{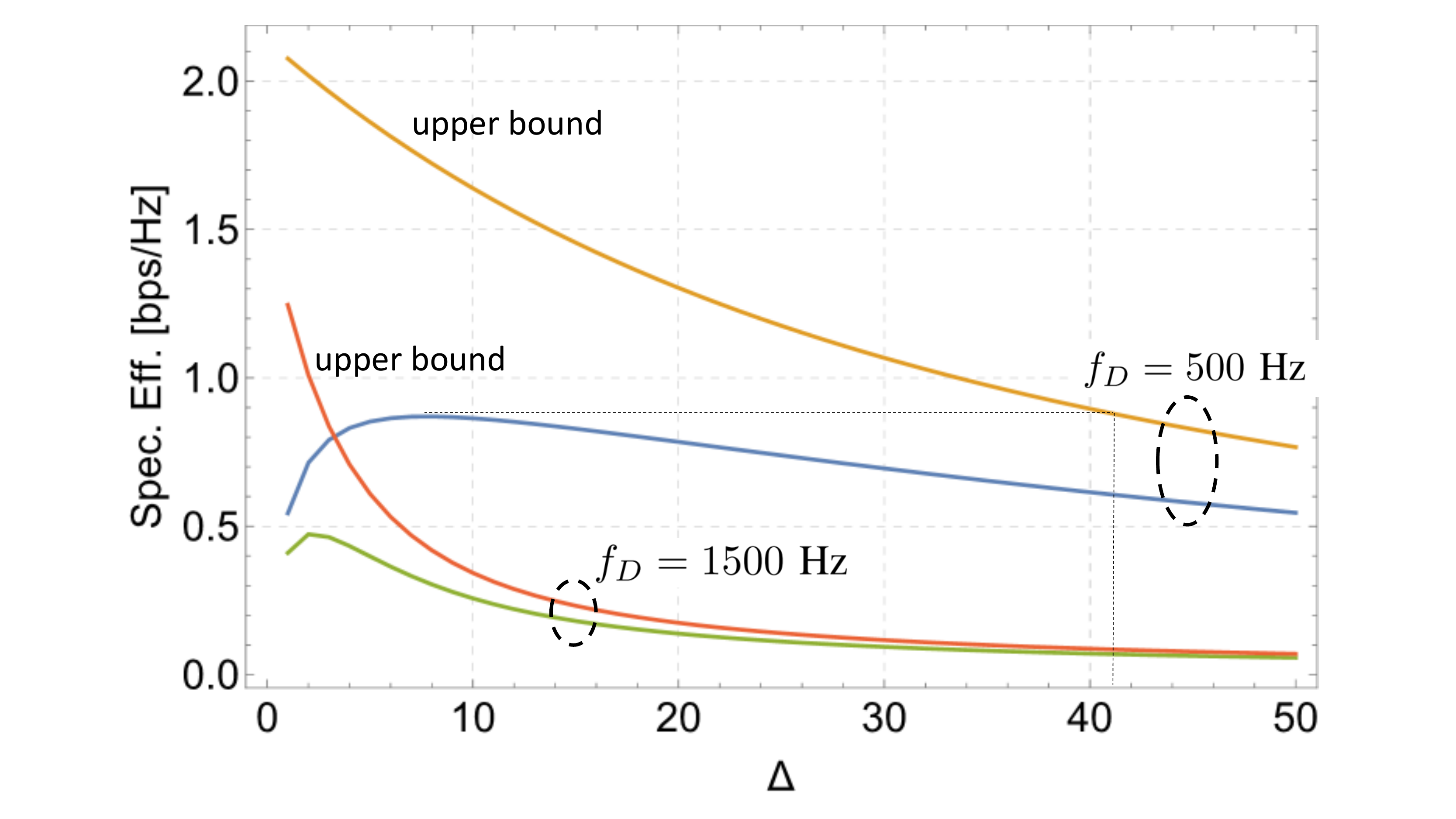}
\caption{
Upper bounding the achievable spectral efficiency as a function of the frame size ($\Delta$)
at $f_D=500$ Hz and $f_D=1500$ Hz. Note that the upper bound is monotonically decreasing, which
helps to limit the search space for the optimum frame size.
}
\label{fig:6}
\end{center}
\end{figure}

Figure \ref{fig:6} illustrates the upper bounds on spectral efficiency as a function of the
frame size for different Doppler frequencies. Recall from Figure \ref{fig:1} that the spectral
efficiency of the system is a non-concave function of the frame size. Therefore, limiting the
possible frame sizes that can optimize spectral efficiency is useful, which can be achieved
by the upper bounds shown in the figure. Since the upper bound is monotonically decreasing,
finding a point of the spectral efficiency curve (see the curve marked with $f_D=500$ Hz
and its upper bounding curve) with a negative derivative helps to find the range of possible
frame sizes that maximize spectral efficiency. For $f_D=500$ Hz, as illustrated in the figure,
larger frame sizes than $\Delta=41$ would lead to a lower upper bound
than the spectral efficiency achieved at $\Delta=7$.
Therefore, when searching for the optimal $\Delta$, once we found that the spectral efficiency
at $\Delta=8$ is less than at $\Delta=7$ (negative derivative), the search space is limited to $(7,41)$.

\begin{figure}[t]
\begin{center}
\includegraphics[width=\hsize]{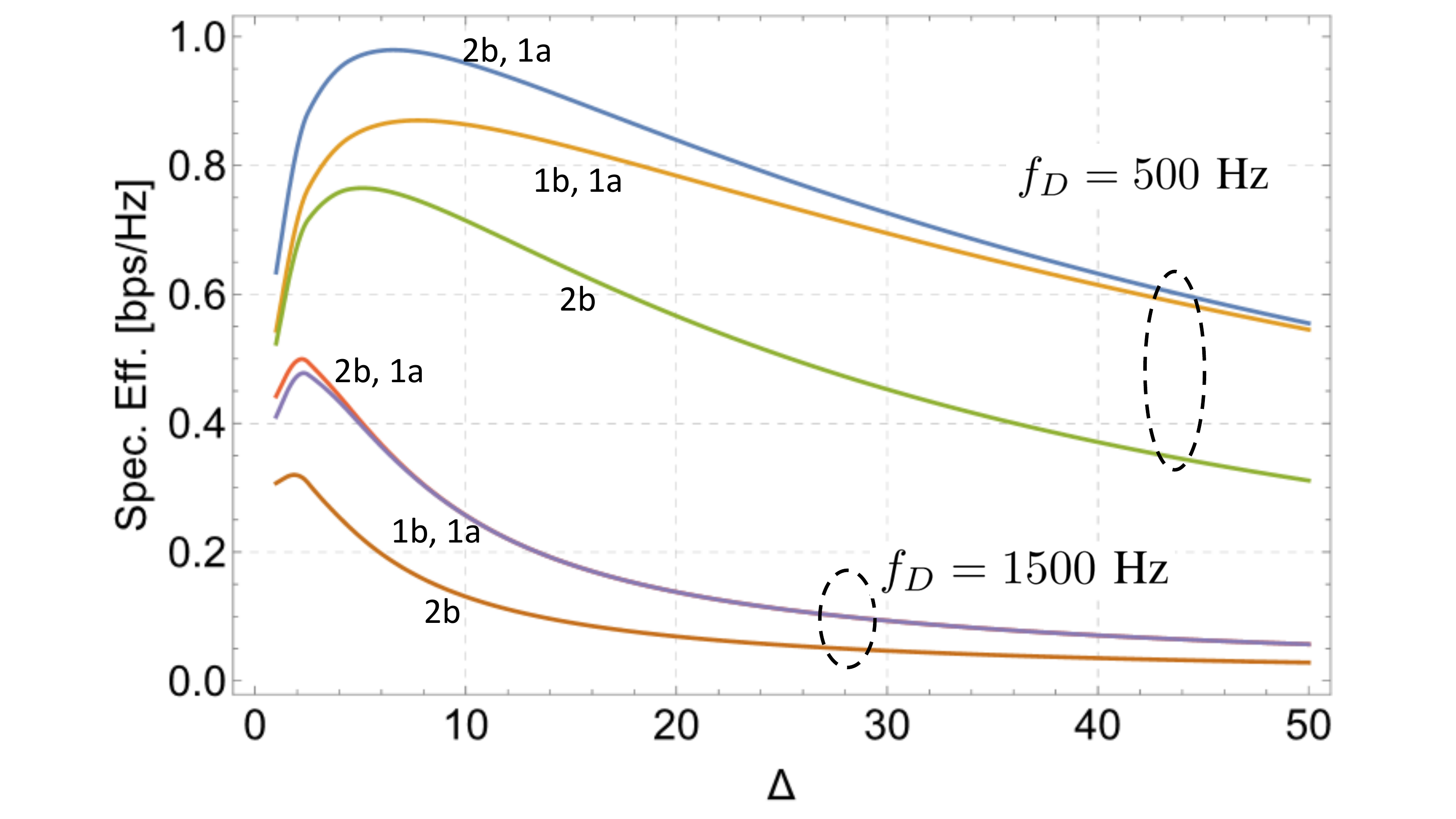}
\caption{
Spectral efficiency in each time slot for $f_D=500$ Hz and $f_D=1500$ Hz, when using
1 or 2 pilot symbols preceding that time slot and 0 or 1 pilot symbols after that
time slot for channel estimation. Three combinations of these channel estimation schemes
are denoted as "2b, 1a", "1b, 1a" and "2b", where "b" refers to utilizing the pilot symbols
sent before and "a" refers to utilizing the pilot symbol sent after time slot $i$.
}
\label{fig:7}
\end{center}
\end{figure}

Figure \ref{fig:7} compares the average spectral efficiency when the system uses different
number of pilot signals to estimate the channel state for each data slot within the frame.
Specifically, three schemes are compared:
\begin{itemize}
\item
2 before, 1 after (2b, 1a): Three channel estimates using the pilot signals at the beginning
of the current frame and the preceding frame and at the end of the current frame are used
to interpolate the channel state at every data slot in the current frame.
\item
1 before, 1 after (1b, 1a): The two neighboring pilot signals (that is in the beginning and at the end
of the current frame) are used.
\item
2 before (2b): The pilot signals at the beginning of the current and preceding frames are used.
This scheme has an advantage over the previous schemes in that decoding the received data symbols
is possible "on the fly" without having to await the upcoming pilot signal at the end of the
current frame.
\end{itemize}

Notice that the "1b, 1a" scheme outperforms the "2b" scheme, because the channel estimation
instances are closer to the data transmission instance in time. Furthermore, the "2b, 1a"
scheme further improves the \ac{SE} performance, although this improvement over the
"1b, 1a" scheme is marginal. More importantly, we can observe that the optimal pilot
spacing is similar in these three schemes, but depends heavily on the Doppler frequency.

\begin{figure}[t]
\begin{center}
\includegraphics[width=\hsize]{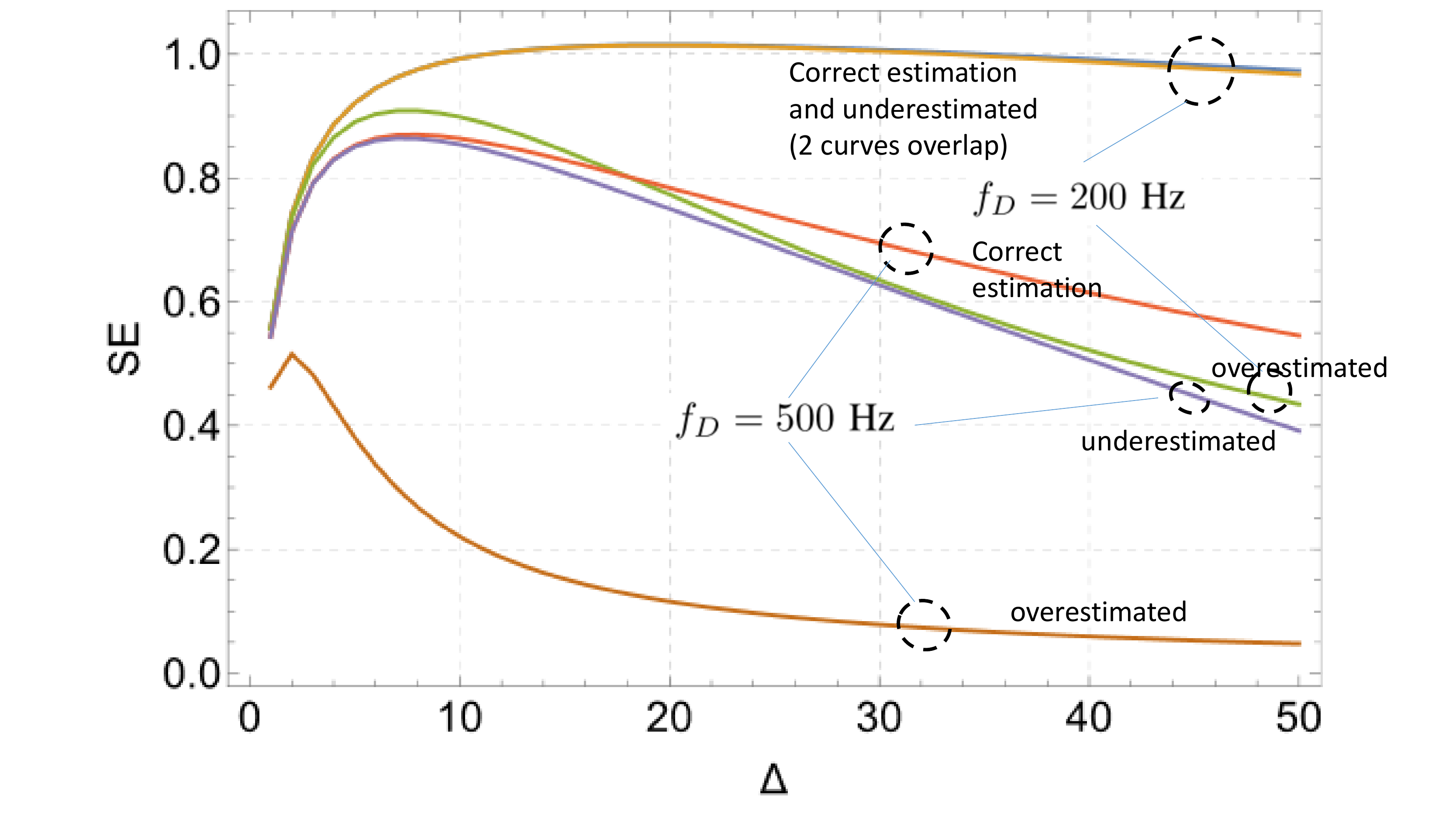}
\caption{
Spectral efficiency as a function of the frame size $\Delta$ when the receiver under or overestimates the actual Doppler frequency of the channel ($f_D=200$ Hz and $f_D=500$ Hz). Overestimating the actual Doppler frequency causes significant spectral efficiency degradation for most frame sizes.
}
\label{fig:8}
\end{center}
\end{figure}

Finally, Figure \ref{fig:8} examines the negative impact of Doppler frequency estimation
errors when
the Doppler frequency of the channel is under or overestimated.
The figure shows the spectral efficiency as a function of the frame size for the cases when
$f_D=200$ Hz and $f_D=500$ Hz. For both cases, the Doppler frequency is either correctly estimated or overestimated (to $5f_D$) or underestimated (to $0.2 f_D$).
On the one hand, this figure clearly illustrates the performance degradation
in terms of average spectral efficiency when the receiver underestimates or overestimates
the maximum Doppler frequency.
On the other hand, when using the optimal frame size, the spectral efficiency performance
of these schemes are rather similar in most cases.

\section{Conclusions}
\label{Sec:Conc}
This paper investigated the fundamental trade-off between using resources in the time domain for pilot signals and data signals in the uplink of \ac{MU-MIMO} systems operating over fast fading wireless channels that age between subsequent pilot signals. While previous works indicated that when the autocorrelation coefficient between subsequent channel realization instances in discrete time is high, both the channel estimation and the \ac{MU-MIMO} receiver can take advantage of the memoryful property of the channel in the time domain. However, previous works do not answer the question how often the channel should be observed and estimated such that the subsequent channel samples are sufficiently correlated while
taking into account that pilot signals do not carry information bearing symbols and degrade the overall spectral efficiency.
To find the optimal pilot spacing, we first established the deterministic equivalent of the achievable \ac{SINR} and the associated overall spectral efficiency of the \ac{MU-MIMO} system. We then used some useful properties of an upper bound
of this spectral efficiency, which allowed us to limit the search space for the optimal pilot spacing ($\Delta$).
The numerical results indicate that the optimal pilot spacing is sensitive to the Doppler frequency of the channel and that proper pilot spacing has a significant impact on the achievable spectral efficiency.

\appendices
\section{Proof of Lemma \ref{lem:beta}}
\label{Sec:L4}
\begin{proof} 
Notice that
\begin{align}
\label{eq:MDelta}
\mx{M}(\Delta,i) &=
\underbrace{\begin{bmatrix}
1 & e^{2qi} & e^{4qi}  \\
{(e^{2qi})}^* & 1 & e^{2qi} \\
{(e^{4qi})}^* &  {(e^{2qi})}^* & 1\\
\end{bmatrix}}_{\triangleq \mx{M}_3(\Delta,i)}
 \otimes \mx{C},
\end{align}
and the eigenvalues of $\mx{M}_3(\Delta,i)$ are:
\begin{align*}
\lambda_1(i) &= 1 - a^{2i}; \\
\lambda_2(i) &= \frac{1}{2}\left(2+a^{2i}-a^i\sqrt{8+a^{2i}}\right);\\
\lambda_3(i) &= \frac{1}{2}\left(2+a^{2i}+a^i\sqrt{8+a^{2i}}\right),
\end{align*}
where $0<a<1$.
For all $i\geq1$ the smallest eigenvalue is $\lambda_2(i)$, which monotone increases with $i$. That is, $\min_{i\geq 1,j\in\{1,2,3\}} \lambda_j(i) = \lambda_2(1)$.

Let
\begin{align}
\mx{M}^{(u)}(\Delta) &\triangleq
\underbrace{\begin{bmatrix}
\eta & 0 & 0  \\
0 & \eta & 0  \\
0 &  0 & \eta \\
\end{bmatrix}}_{\triangleq \mx{M}^{(u)}_3(\Delta)}
 \otimes \mx{C}.
\end{align}

When $\eta<\lambda_2(1)$ according to \eqref{eq:etacond}, we have
\begin{align}
\label{lem3:13}
\mx{M}_3^{(u)}(\Delta) \preceq \mx{M}_3(\Delta).
\end{align}
Utilizing that
the spectrum of a Kronecker product $\sigma(\mx{A}\otimes \mx{B})$ is \cite{Horn:91}
\begin{align}
    \sigma(\mx{A}\otimes\mx{B}) &=
    \{\,
    \mu_A\mu_B \mid \mu_A \in \sigma(\mx{A}), \mu_B \in
    \sigma(\mx{B}) \,\},
\end{align}
for $\forall i\geq 1$, we further have
\begin{align}
\label{lem3:1}
\mx{M}^{(u)}(\Delta) \preceq \mx{M}(\Delta,i),
\end{align}
which implies
\begin{align}\label{eq:Mu}
\big(\mx{M}^{(u)}(\Delta)+\mx{\Sigma}_3\big)^{-1} \succeq  \big(\mx{M}(\Delta,i)+\mx{\Sigma}_3\big)^{-1},
\end{align}
according to \eqref{eq:ABinv}.
The statement of the lemma 
comes from \eqref{eq:Mu} using \eqref{eq:AB3},
$\mx{M}^{(u)}(\Delta)=\eta \mx{I}_{3} \otimes \mx{C}$,
and noting that
\begin{align}
\nonumber
&\mx{E}(\Delta, i) \big(\mx{M}^{(u)}(\Delta)+\mx{\Sigma}_3\big)^{-1} \mx{E}(\Delta, i)^H
\\\nonumber &= \mx{E}(\Delta, i) \begin{bmatrix}
\eta \mx{C} + \bs{\Sigma} & 0 & 0  \\
0 & \eta \mx{C} + \bs{\Sigma} & 0  \\
0 &  0 & \eta \mx{C} + \bs{\Sigma} \\
\end{bmatrix}^{-1} \mx{E}(\Delta, i)^H
\\\nonumber &=
\mx{R}(\Delta \!+\! 1 \!+\! i)
(\eta \mx{C} + \bs{\Sigma})^{-1}
\mx{R}(\Delta \!+\! 1 \!+\! i)^H
\\\nonumber &~~~+
\mx{R}(i)(\eta \mx{C} + \bs{\Sigma})^{-1}
\mx{R}(i)^H
\\\nonumber &~~~+
\mx{R}(\Delta \!+\! 1 \!-\! i)
(\eta \mx{C} + \bs{\Sigma})^{-1}
\mx{R}(\Delta \!+\! 1 \!-\! i)^H
\\&= \rho(\Delta, i) \mx{C} \left( \eta \mx{C} + \bs{\Sigma} \right)^{-1} \mx{C},
\end{align}
where $\mx{R}(i)$ and $\rho(\Delta, i)$ are defined in \eqref{eq:r1} and \eqref{eq:rho}.
\end{proof}

\section{Proof of Lemma \ref{lem:det}}
\label{Sec:L5}
\begin{proof}
\begin{align*}
& \bar{\gamma}(\Delta, i)
\\&= \mathds{E}\left( \textup{tr}\left( \bs{\Phi}(\Delta, i)  \left( \sum_{k=2}^K  \mx{b}_l(\Delta, i) \mx{b}_l^H(\Delta, i)  +\bs{\beta}(\Delta, i) \right)^{-1}  \right) \right)
\end{align*}
\begin{align*}
&=  \int\limits_{\mx{v}_2\in\mathbb{R}^{N_r}} \!\!\!\ldots\!\!\!  \int\limits_{\mx{v}_K\in\mathbb{R}^{N_r}}  \prod_{k=2}^K Pr(\mx{b}_l(\Delta, i)=\mx{v}_l)
\\& ~~~~~~\cdot \textup{tr}\left( \bs{\Phi}(\Delta, i)  \left( \sum_{k=2}^K  \mx{v}_l \mx{v}_l^H  +\bs{\beta}(\Delta, i) \right)^{-1} \right)
 d\mx{v}_K \ldots d\mx{v}_2
\end{align*}
\begin{align*}
& \leq \int\limits_{\mx{v}_2\in\mathbb{R}^{N_r}} \int\limits_{\mx{v}_K\in\mathbb{R}^{N_r}}  \prod_{k=2}^K Pr(\mx{b}_l(\Delta, i)=\mx{v}_l)
\\& ~~~~~~~~~~~~~~~~~~\cdot
\textup{tr}\left( \bs{\Phi}(\Delta, i)   \bs{\beta}(\Delta, i)^{-1} \right) d\mx{v}_K \ldots d\mx{v}_2
\\&= \textup{tr}\left( \bs{\Phi}(\Delta, i) \bs{\beta}(\Delta, i)^{-1} \right),
\nonumber
\end{align*}
where we used that $\sum_{l = 2}^K  \mx{v}_l \mx{v}_l^H$ is a positive definite matrix,
$\sum_{l = 2}^K  \mx{v}_l \mx{v}_l^H +\bs{\beta}(\Delta, i) \succeq \bs{\beta}(\Delta, i)$
and Lemma \ref{lem:AB}.
\end{proof}

\section{Proof of Proposition \ref{UpperB1}}
\label{Sec:P2}
\begin{proof}
To prove monotonicity in $\rho$ first notice that
\begin{align*}
\rho(\Delta_1, i_1) > \rho(\Delta_2, i_2) \Rightarrow \mx{Z}^{(u)}(\Delta_1,i_1) \preceq  \mx{Z}^{(u)}(\Delta_2,i_2), \\
\rho(\Delta_1, i_1) > \rho(\Delta_2, i_2) \Rightarrow \bs{\Phi}^{(u)}(\Delta_1,i_1) \succeq  \bs{\Phi}^{(u)}(\Delta_2,i_2).
\end{align*}
and so
\begin{gather*}
\rho(\Delta_1, i_1) > \rho(\Delta_2, i_2)
\\
\Downarrow
\\
\begin{align*}
&\bs{\Phi}^{(u)}(\Delta_1,i_1) \left( \sum_{l = 1}^K \alpha_l^2 P_l \mx{Z}^{(u)}_l(\Delta_1, i_1) + \sigma_d^2 \mx{I}_{N_r} \right)^{-1}  \\
&~~\succeq \bs{\Phi}^{(u)}(\Delta_2,i_2) \left( \sum_{l = 1}^K \alpha_l^2 P_l \mx{Z}^{(u)}_l(\Delta_2, i_2) + \sigma_d^2 \mx{I}_{N_r} \right)^{-1}
\end{align*}
\\
\Downarrow
\\
\begin{align*}
&\textup{tr} \left(\bs{\Phi}^{(u)}(\Delta_1,i_1) \left( \sum_{l = 1}^K \alpha_l^2 P_l \mx{Z}^{(u)}_l(\Delta_1, i_1) + \sigma_d^2 \mx{I}_{N_r} \right)^{-1}\right)  \\
&\geq \textup{tr} \! \left( \!\bs{\Phi}^{(u)}(\Delta_2,i_2) \left( \sum_{l = 1}^K \alpha_l^2 P_l \mx{Z}^{(u)}_l(\Delta_2, i_2) + \sigma_d^2 \mx{I}_{N_r}\! \right)^{\!\!-1}\right)
\end{align*}
\\
\Downarrow
\\
\bar{\gamma}^{(u)}(\Delta_1,i_1) \geq \bar{\gamma}^{(u)}(\Delta_2,i_2).
\end{gather*}
Finally to prove convergence to 0 notice that
\begin{align*}
    \rho(\Delta, i) \rightarrow 0 &\Rightarrow \mx{Z}^{(u)}(\Delta_1,i_1) \rightarrow \mx{C}, \\
    \rho(\Delta, i) \rightarrow 0 &\Rightarrow \bs{\Phi}^{(u)}(\Delta_1,i_1) \rightarrow \mx{0}.
\end{align*}
And so, when $ \rho(\Delta, i) \rightarrow 0$ we have
\begin{align*}
    \bar{\gamma}^{(u)}&(\Delta,i) = \\
    &\textup{tr} \left(\bs{\Phi}^{(u)}(\Delta,i) \left( \sum_{l = 1}^K \alpha_l^2 P_l \mx{Z}^{(u)}_l(\Delta, i) + \sigma_d^2 \mx{I}_{N_r} \right)^{-1}\right)
\end{align*}
\begin{align*}
   & \stackrel{\rho(\Delta, i) \rightarrow 0}{\rightarrow}\textup{tr} \left(\mx{0} \left( \sum_{l = 1}^K \alpha_l^2 P_l \mx{C} + \sigma_d^2 \mx{I}_{N_r} \right)^{-1}\right) = 0.
\end{align*}
\end{proof}

\section{Proof of Proposition \ref{UpperB2}}
\label{Sec:P3}
\begin{proof}
From Theorem \ref{thm:upperbound} and \eqref{eq:SEku} the inequality follows.
For monotonicity, notice that $\rho_k(\Delta + 1, i) < \rho_k(\Delta, i )$ and
$\rho_k(\Delta + 1, i + 1) < \rho_k(\Delta, i )$.
Since by Proposition \ref{UpperB1} the upper bound of the \ac{SINR} is increasing with $\rho_k$ we have
\begin{align}
    \bar{\gamma}_k^{(u)}(\Delta + 1,i) &\leq \bar{\gamma}_k^{(u)}(\Delta,i) \nonumber \\
    \bar{\gamma}_k^{(u)}(\Delta + 1,i + 1) &\leq \bar{\gamma}_k^{(u)}(\Delta,i),
\end{align}
from which it follows that
\begin{align}
    \log\big(1 + \bar{\gamma}_k^{(u)}(\Delta + 1,i)\big) &\leq \log\big(1 + \bar{\gamma}_k^{(u)}(\Delta,i)\big)
    \label{eq:delta_step1}\\
    \log\big(1 + \bar{\gamma}_k^{(u)}(\Delta + 1,i + 1)\big) &\leq \log\big(1 + \bar{\gamma}^{(u)}(\Delta,i)\big).
    \label{eq:delta_step2}
\end{align}
Let $\ell = \arg\min_i \bar{\gamma}_k^{(u)}(\Delta + 1,i)$,
we then have
\begin{align}
 \nonumber &  \frac{1}{\Delta+1}  \times \sum_{i=1}^{\Delta+1} \log\big(1 + \bar{\gamma}_k^{(u)}(\Delta + 1,i)\big)
 \\ \nonumber
 & \leq  \frac{1}{\Delta} \times \left( \sum_{i=1}^{\ell-1} \log\big(1 + \bar{\gamma}_k^{(u)}(\Delta + 1,i)\big) \right. \\
    & ~~~~~~~~~~~~~~~~~~~~ + \left. \sum_{i=\ell+1}^{\Delta+1} \log\big(1 + \bar{\gamma}_k^{(u)}(\Delta + 1,i)\big) \right),
\end{align}
since on the right hand side we are removing the smallest term before calculating the mean.
Invoking \eqref{eq:delta_step1} and \eqref{eq:delta_step2} on the first and second sum, respectively, it follows that
\begin{align}
\nonumber
&   \frac{1}{\Delta + 1} \times \sum_{i=1}^{\Delta+1} \log\big(1 + \bar{\gamma}_k^{(u)}(\Delta + 1,i) \big)
\\ \nonumber
&  \leq  \frac{1}{\Delta + 1} \times \left( \sum_{i=1}^{\ell-1} \log\big(1 + \bar{\gamma}_k^{(u)}(\Delta,i)\big) \right. \\ \nonumber
&  ~~~~~~~~~~~~~~~~~~~~~+\left. \sum_{i=\ell}^{\Delta} \log\big(1 + \bar{\gamma}_k^{(u)}(\Delta,i)\big) \right)
\\
&=  \frac{1}{\Delta } \times \sum_{i=1}^{\Delta} \log\big(1 + \bar{\gamma}_k^{(u)}(\Delta,i)\big).
\end{align}
From which it follows that
\begin{align}
& \textup{SE}_k^{(u)}(\Delta+1) = \frac{\sum_{i=1}^{\Delta+1} \log\big(1 + \bar{\gamma}_k^{(u)}(\Delta + 1,i)\big)}{\Delta + 1}
\nonumber  \\
&~~~~~ \leq  \frac{\sum_{i=1}^{\Delta} \log\big(1 + \bar{\gamma}_k^{(u)}(\Delta ,i)\big)}{\Delta} = \textup{SE}_k^{(u)}(\Delta),
\label{eq:SEu}
\end{align}
that is $\textup{SE}_k^{(u)}(\Delta)$ is decreasing in $\Delta$.

To prove convergence to zero, recall from Proposition \ref{UpperB1} that
$\partial \bar{\gamma}_k^{(u)}(\Delta, i) / \partial \rho_k(\Delta, i) \geq 0$ and
\begin{align}
\nonumber
\rho_k(\Delta, i) \rightarrow 0 & \Rightarrow \bar{\gamma}_k^{(u)}(\Delta, i) \rightarrow 0
\end{align}
\begin{align}
&     \Rightarrow \log\big(1 + \bar{\gamma}_k^{(u)}(\Delta, i)\big) \rightarrow 0,
\label{eq:conv1}
\end{align}
where
\begin{align}
\nonumber
\rho_k(\Delta, i) = e^{2\bar{q}_k (\Delta + 1 + i)} + e^{2\bar{q}_k  i} + e^{2\bar{q}_k (\Delta + 1 - i)}.
\end{align}
We show that for any $\varepsilon > 0$, there is some $M$ such that
\begin{align}
 \textup{SE}^{(u)}(M) < \varepsilon.
\end{align}
Due to $\bar{q}_k < 0$, we have $\rho_k(\Delta, i) < \rho_k(1,1)$,
which implies
\begin{align}
    \log\big(1 + \bar{\gamma}_k^{(u)}(\Delta, i)\big) < \log\big(1 + \bar{\gamma}_k^{(u)}(1, 1)\big),
\end{align}
for all $\Delta$ and $i$.
Let $A \triangleq \log\big(1 + \bar{\gamma}_k^{(u)}(1, 1)\big)$
and
$N$ such that $N\varepsilon - 2A > 0$, and set
\begin{align}
\label{eq:epsilon}
    \epsilon \triangleq \frac{N\varepsilon - 2A}{N - 2}.
\end{align}

Since $\bar{q}_k < 0$, we have
\begin{align*}
\rho_k(\Delta, i) &< 3 \max(e^{2\bar{q}_k  (\Delta+1+i)},e^{2\bar{q}_k  i},e^{2\bar{q}_k  (\Delta+1-i)})
\\&= 3 e^{2\bar{q}_k  \min(\Delta+1+i,i,\Delta+1-i)},
\end{align*}
and it follows that for $\frac \Delta N \leq i \leq \frac{(N-1)\Delta}{N}$
\begin{align}
\rho_k(\Delta, i) < 3e^{2\bar{q}_k \frac \Delta N}.
\end{align}
Notice that by equation \eqref{eq:conv1} we can choose some large $M$,
such that
\begin{align}
&\frac M N \leq i \leq \frac{(N-1)M}{N}
    \Rightarrow
     \log(1 + \bar{\gamma}_k^{(u)}(M, i)) < \epsilon.
\end{align}
We can now show that when $M=\Delta$, then
$\textup{SE}_k^{(u)}(\Delta) < \varepsilon$.
To this end,
we split up the sum in the numerator of \eqref{eq:SEu}, that is
$\sum_{i=1}^{\Delta} \log(1 + \bar{\gamma}^{(u)}(\Delta ,i))$, into three terms,
and bound the first and third terms using the general upper bound $A$,
and the middle term by $\epsilon$:
\begin{align}
\nonumber
\textup{SE}_k^{(u)}(\Delta) &= \frac{\sum_{i=1}^{\Delta} \log(1 + \bar{\gamma}^{(u)}(\Delta ,i))}{\Delta}  \\ \nonumber
&=
\frac{\sum_{i=1}^{\Delta / N} \log(1 + \bar{\gamma}^{(u)}(\Delta ,i))}{\Delta} \\ \nonumber
&~~~~+
\frac{\sum_{i=\Delta / N + 1}^{(N - 1)\Delta / N} \log(1 + \bar{\gamma}^{(u)}(\Delta ,i))}{\Delta}
\\ \nonumber
&~~~~+
\frac{\sum_{i=(N-1)\Delta / N + 1}^{\Delta} \log(1 + \bar{\gamma}^{(u)}(\Delta ,i))}{\Delta}
\\ \nonumber
&<
\frac{(\Delta / N)A}{\Delta} + \frac{((N-2)\Delta / N)\epsilon}{\Delta} +
\frac{(\Delta / N)A}{\Delta}  \\ &=
\frac{2A + (N-2)\epsilon}{N} = \varepsilon,
\end{align}
where the last equation is due to the definition of $\epsilon$ in \eqref{eq:epsilon}, which completes the proof.
\end{proof}
\bibliography{sampling}
\end{document}